\documentclass[11pt,reqno]{amsart}
\usepackage{amsmath, amsthm, amssymb}
\usepackage{enumitem}
\usepackage[foot]{amsaddr}

\usepackage{fullpage}
\usepackage[many]{tcolorbox}
\usepackage{xcolor}
\usepackage{wasysym}
\usepackage{mathtools}
\usepackage{scalerel}
\usepackage{bbm}
\usepackage[all]{xy}
\usepackage{mathrsfs}

\usepackage{tikz}
\usetikzlibrary{arrows,backgrounds,patterns.meta}
\usetikzlibrary{positioning,shadings,cd}
\usetikzlibrary{shapes}
\usetikzlibrary{backgrounds}
\usetikzlibrary{decorations,decorations.pathreplacing,decorations.markings,decorations.pathmorphing}
\tikzstyle{snake}=[decorate, decoration={snake, segment length=1mm, amplitude=.5mm}]
\usetikzlibrary{fit,calc,through}
\usetikzlibrary{external}
\newcommand{\tikzmath}[2][]
{\vcenter{\hbox{\begin{tikzpicture}[#1]#2\end{tikzpicture}}}
}
\newcommand{\roundNbox}[6]{
	\draw[rounded corners=5pt, very thick, #1] ($#2+(-#3,-#3)+(-#4,0)$) rectangle ($#2+(#3,#3)+(#5,0)$);
	\coordinate (ZZa) at ($#2+(-#4,0)$);
	\coordinate (ZZb) at ($#2+(#5,0)$);
	\node at ($1/2*(ZZa)+1/2*(ZZb)$) {#6};
}

\tikzset{super thick/.style={line width=3pt}}
\tikzstyle{far>}=[decoration={markings, mark=at position 0.75 with {\arrow{>}}}, postaction={decorate}]
\tikzstyle{mid>}=[decoration={markings, mark=at position 0.55 with {\arrow{>}}}, postaction={decorate}]
\tikzstyle{mid<}=[decoration={markings, mark=at position 0.55 with {\arrow{<}}}, postaction={decorate}]
\tikzset{super thick/.style={line width=3pt}}
\tikzstyle{far>}=[decoration={markings, mark=at position 0.75 with {\arrow{>}}}, postaction={decorate}]
\tikzstyle{mid>}=[decoration={markings, mark=at position 0.55 with {\arrow{>}}}, postaction={decorate}]
\tikzstyle{mid<}=[decoration={markings, mark=at position 0.55 with {\arrow{<}}}, postaction={decorate}]
\tikzstyle{knot}=[preaction={super thick, white, draw}]
\tikzstyle{coupon}=[draw, very thick, rectangle, rounded corners=5pt]
\tikzset{Rightarrow/.style={double equal sign distance,>={Implies},->},
triplecd/.style={-,preaction={draw,Rightarrow}},
quadruplecd/.style={preaction={draw,Rightarrow,
shorten >=0pt
},
shorten >=1pt,
-,double,double
distance=0.2pt}}
\tikzset{
    tripleline/.style args={[#1] in [#2] in [#3]}{
        #1,preaction={preaction={draw,#3},draw,#2}
    }
}
\tikzstyle{triple}=[tripleline={[line width=.15mm,black] in
      [line width=.7mm,white] in
      [line width=1mm,black]}] 
\tikzset{
    quadrupleline/.style args={[#1] in [#2] in [#3] in [#4]}{
        #1,preaction={preaction={preaction={draw,#4},draw,#3}, draw,#2}
    }
}
\tikzstyle{quadruple}=[quadrupleline={[line width=.3mm,white] in
      [line width=.6mm,black] in
      [line width=1.2mm,white] in
      [line width=1.5mm,black]}]

\definecolor{violet}{RGB}{148,0,211}
\definecolor{DarkGreen}{RGB}{0,150,0}
\definecolor{rufous}{HTML}{A81C07}

\usepackage[plainpages=false,hypertexnames=false,pdfpagelabels]{hyperref}
\definecolor{medium-blue}{rgb}{0,0,.8}
\hypersetup{colorlinks, linkcolor={purple}, citecolor={medium-blue}, urlcolor={medium-blue}}
\newcommand{\arxiv}[1]{\href{http://arxiv.org/abs/#1}{\tt arXiv:\nolinkurl{#1}}}
\newcommand{\arXiv}[1]{\href{http://arxiv.org/abs/#1}{\tt arXiv:\nolinkurl{#1}}}

\newcommand{\doi}[1]{\href{http://dx.doi.org/#1}{{\tt DOI:#1}}}

\newcommand{\googlebooks}[1]{(preview at \href{https://books.google.com/books?id=#1}{google books})}

\DeclareMathOperator{\End}{End}

\DeclareMathOperator{\id}{id}
\DeclareMathOperator{\im}{im}

\DeclareMathOperator{\Irr}{Irr}

\DeclareMathOperator{\op}{op}

\DeclareMathOperator{\Tr}{Tr}
\DeclareMathOperator{\tr}{tr}

\newcommand{\set}[2]{\left\{#1 \middle| #2\right\}}

\def\semicolon{;}
\def\applytolist#1{
    \expandafter\def\csname multi#1\endcsname##1{
        \def\multiack{##1}\ifx\multiack\semicolon
            \def\next{\relax}
        \else
            \csname #1\endcsname{##1}
            \def\next{\csname multi#1\endcsname}
        \fi
        \next}
    \csname multi#1\endcsname}

\def\calc#1{\expandafter\def\csname c#1\endcsname{{\mathcal #1}}}
\applytolist{calc}QWERTYUIOPLKJHGFDSAZXCVBNM;
\def\bbc#1{\expandafter\def\csname bb#1\endcsname{{\mathbb #1}}}
\applytolist{bbc}QWERTYUIOPLKJHGFDSAZXCVBNM;
\def\bfc#1{\expandafter\def\csname bf#1\endcsname{{\mathbf #1}}}
\applytolist{bfc}QWERTYUIOPLKJHGFDSAZXCVBNM;
\def\sfc#1{\expandafter\def\csname s#1\endcsname{{\sf #1}}}
\applytolist{sfc}QWERTYUIOPLKJHGFDSAZXCVBNM;
\def\fc#1{\expandafter\def\csname f#1\endcsname{{\mathfrak #1}}}
\applytolist{fc}QWERTYUIOPLKJHGFDSAZXCVBNM;
\def\rmc#1{\expandafter\def\csname rm#1\endcsname{{\mathrm #1}}}
\applytolist{rmc}QWERTYUIOPLKJHGFDSAZXCVBNM;
\def\scrc#1{\expandafter\def\csname scr#1\endcsname{{\mathscr #1}}}
\applytolist{scrc}QWERTYUIOPLKJHGFDSAZXCVBNM;

\numberwithin{equation}{section}

\theoremstyle{plain}
\newtheorem{thm}[equation]{Theorem}
\newtheorem*{thm*}{Theorem}
\newtheorem{cor}[equation]{Corollary}
\newtheorem{lem}[equation]{Lemma}
\newtheorem{prop}[equation]{Proposition}

\newtheorem{question}[equation]{Question}
\newtheorem*{claim*}{Claim}

\newtheorem{thmalpha}{Theorem}

\theoremstyle{definition}
\newtheorem{defn}[equation]{Definition}

\newtheorem{assume}[equation]{Assumption}

\newtheorem{axiom}[equation]{Axiom}
\newtheorem*{trick*}{Trick}
\newtheorem{construction}[equation]{Construction}
\newtheorem{nota}[equation]{Notation}
\newtheorem{fact}[equation]{Fact}
\newtheorem{facts}[equation]{Facts}

\newtheorem{ex}[equation]{Example}
\newtheorem{exs}[equation]{Examples}

\newtheorem{rem}[equation]{Remark}

\newtheorem{warn}[equation]{Warning}

\title{Local topological order, Haag duality, and reflection positivity}
\date{\today}
\begin{document}
\author{Pieter Naaijkens$^1$}
\address{$^1$ School of Mathematics, Cardiff University, Cardiff, CF24 4AG, United Kingdom}
\author{David Penneys$^2$}
\author{Daniel Wallick$^2$}
\address{$^2$ Department of Mathematics, The Ohio State University, Columbus, OH 43210, USA}

\begin{abstract}
In our previous article [\arxiv{2307.12552}], we introduced local topological order (LTO) axioms for abstract quantum spin systems which allow one to access topological order via a boundary algebra construction.
Using the LTO axioms, we produced a canonical pure state on the quasi-local algebra, which gives a net of von Neumann algebras associated to a poset of cones in $\mathbb{R}^n$.
In this article, motivated by [\arxiv{2509.23734}], we introduce an axiom for LTOs which ensures Haag duality for cone-like regions using Tomita-Takesaki theory.
We prove this axiom is satisfied for all known topologically ordered commuting projector models.
We thus get an independent proof of Haag duality for the Levin-Wen string net models originally proved in [\arxiv{2509.23734}].
We also give a reflection positivity axiom for LTOs, connecting to the recent article [\arxiv{2510.20662}].
We again prove this axiom is satisfied for all known topologically ordered commuting projector models about some $\mathbb{Z}/2$-reflection symmetry.
\end{abstract}
\maketitle
\tableofcontents

\section{Introduction}

Topologically ordered 2D quantum spin systems 
exhibit emerging anyons that form a braided tensor category \cite{MR1951039, cond-mat/0506438, PhysRevB.71.045110}. 
Recently, operator algebraic approaches, inspired by the Dopicher--Haag--Roberts approach to algebraic quantum field theory \cite{MR0297259, MR334742}, have been used to obtain the braided tensor category of anyons for many of these models on infinite planar lattices \cite{MR2804555, MR3135456, MR3426207, MR4362722, 2306.13762, 2310.19661, MR4998297, 2511.21521, 2603.01936}.
The approach used by these papers can be formulated using the language of a net of von Neumann algebras, where a von Neumann algebra is associated to each cone in the lattice \cite{MR4927814}, leading to a category of superselection sectors.
In order to prove that there is a braided tensor structure on this category, one uses a technical condition called \emph{Haag duality}, or a generalization thereof \cite{MR4362722}, which states that the von Neumann algebras for each cone and its complement are each other's commutants.
Haag duality also has a quantum information theoretic interpretation \cite{MR5026085}. 
In \cite{2509.23734}, it was shown that a large class of models, specifically string-net models \cite{PhysRevB.71.045110}, satisfy Haag duality, building on previous work showing Haag duality for the Toric Code \cite{MR2956822} and abelian Quantum Double Model \cite{MR3426207}. 

In this article, we provide a sufficient condition for Haag duality that is easy to check for all known topologically ordered commuting projector models. 
To do so, we employ the \emph{local topological order} (LTO) axioms of \cite{MR4945955}, which characterize 
topological order locally using operator algebraic conditions.
The setup of the LTO axioms is a net of local algebras $R\mapsto \fA(R)$ equipped with a net of projections $p_R\in \fA(R)$ satisfying $p_S\leq p_R$ whenever $R\subseteq S$. 
In 2D, we obtain a 1D net of \emph{boundary algebras} from the LTO axioms after cutting the system using a hyperplane, 
which can be used to implement a topological holographical principle in the sense of \cite{PhysRevB.107.155136,2310.05790}.
For the known commuting projector models, the net of boundary algebras is identical to the net of interaction algebras \cite{MR1816609}, and the \emph{DHR bimodules} recover the bulk topological order \cite{MR4814692,MR4945955}.

We give a \emph{Haag duality axiom for LTOs} inspired by \cite{2509.23734}, which relates the boundary algebras on each side of the cut, which is again satisfied by all known topologically ordered commuting projector models.
The axiom, in the setting of a quantum spin system, can be loosely summarized as follows.
(We refer the reader to \ref{LTO:HD} below for the precise statement.)

\begin{itemize}
\item 
For any cone $\Lambda_+$ with complement $\Lambda_-$,\footnote{Here, `cone' means \emph{cone-like region}, which is assumed to be sufficiently large with boundary homeomorphic to $\bbR$.
See \S\ref{sec:LTO} for more details.} 
suppose we have sufficiently nice large regions $R,S$, with $R$ sufficiently contained in $S$, whose boundaries intersect $\partial \Lambda_\pm$ transversely.
Then
\[
p_{S_+} \fA(R_+) p_S
=
p_{S_-} \fA(R_-) p_S.
\qquad\qquad\qquad
\tikzmath{
\foreach \x in {-8,...,8}{
\foreach \y in {-8,...,8}{
\filldraw[gray!50!white] ($ .2*(\x,\y) $) circle (.01cm);
}}
\draw[thick, blue] (0.1,1.8) -- (-.5,-.5) -- (1.8,.1);
\node[blue] at (.5,-.05) {$\scriptstyle \Lambda_+$};
\node[blue] at (.5,-.5) {$\scriptstyle \Lambda_-$};
\draw[thick, red] (-1.5,-1.5) rectangle (1.5,1.5);
\node[red] at (1.2,1.3) {$\scriptstyle S_+$};
\node[red] at (-1.2,-1.3) {$\scriptstyle S_-$};
\draw[thick, orange] (-1.1,-1.1) rectangle (1.1,1.1);
\node[orange] at (.8,.9) {$\scriptstyle R_+$};
\node[orange] at (-.8,-.9) {$\scriptstyle R_-$};
}
\]
\end{itemize}
This equality says that when restricted to the local ground state space for $S$, the actions of the boundary algebras on either of the $\pm$ side of the cut yield the same `topological error excitations' on the physical boundary cut.

Our main result is that \ref{LTO:HD} for LTOs implies Haag duality for the associated net of cone von Neumann algebras in the canonical ground state.
That is, by \cite[\S2.3]{MR4945955}, the LTO axioms give us a canonical pure state $\psi$ on $\fA$ satisfying $\psi(p_R)=1$ for all regions $R$, and we get a net of cone von Neumann algebras by $\cA(\Lambda)\coloneqq \fA(\Lambda)''$ on $L^2(\fA,\psi)$.
The von Neumann completion of the boundary algebra along $\partial \Lambda$ is then given by $\cB(\Lambda)\coloneqq p_\Lambda \cA(\Lambda)p_\Lambda$.
We get an independent proof of Haag duality for the known topologically ordered commuting projector models.

\begin{thmalpha}
\label{thm:Main}
Let $(\fA, p)$ be an LTO with finite dimensional local algebras satisfying the \ref{LTO:HD} axiom. 
Assume that $\psi$ is faithful on $\cB(\Lambda_\pm)$,
and $p_{\Lambda_\pm}$ have central support 1 in $\cA(\Lambda_\pm)$.
Then $\cA(\Lambda_+)'=\cA(\Lambda_-)$ on $L^2(\fA,\psi)$.
\end{thmalpha}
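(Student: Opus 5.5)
The inclusion $\cA(\Lambda_-)\subseteq\cA(\Lambda_+)'$ is immediate from locality, since $\fA(\Lambda_+)$ and $\fA(\Lambda_-)$ commute. The work is the reverse inclusion. I will reduce it to a statement about the compressed algebras $\cB(\Lambda_\pm)=p_{\Lambda_\pm}\cA(\Lambda_\pm)p_{\Lambda_\pm}$ and prove that statement with Tomita--Takesaki theory.

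\textbf{Step 1: the cyclic vector and its commutant.} Let $\Omega\in L^2(\fA,\psi)$ be the GNS vector. Because $\psi(p_R)=1$ for every region, $p_\Lambda\Omega=\Omega$ for every cone projection (these are defined as strong limits of the decreasing nets $p_R$). Since $\psi$ is pure, $\fA''=B(L^2(\fA,\psi))$, so $\Omega$ is cyclic for $\cA(\Lambda_+)\vee\cA(\Lambda_-)$.

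Set $P\coloneqq p_{\Lambda_+}p_{\Lambda_-}$, a projection because the two factors commute. Let $K\coloneqq\overline{\cB(\Lambda_+)\Omega}$. The plan is to show $K$ is a standard space for the pair $\cB(\Lambda_+)$ and $\cB(\Lambda_-)$ restricted to $K$, with $\cB(\Lambda_+)|_K'=\cB(\Lambda_-)|_K$. Faithfulness of $\psi$ on $\cB(\Lambda_\pm)$ makes $\Omega$ separating for each $\cB(\Lambda_\pm)$.

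\textbf{Step 2: using the LTO Haag duality axiom (\ref{LTO:HD}).} This is the heart of the argument. For $x\in\fA(R_+)$, the axiom supplies $y\in\fA(R_-)$ with $p_{S_+}xp_S=p_{S_-}yp_S$. Letting $S\to\bbR^n$ (in the precise sense of \ref{LTO:HD}) and using $p_S\Omega=\Omega$, I expect to obtain $p_{\Lambda_+}x\Omega=p_{\Lambda_-}y\Omega$. With more care one should get $p_{\Lambda_+}xp_{\Lambda_+}\Omega=p_{\Lambda_-}yp_{\Lambda_-}\Omega$, because $p_{\Lambda_+}p_S=p_S$. It follows that $\cB(\Lambda_+)\Omega$ and $\cB(\Lambda_-)\Omega$ have the same closure $K$, and that $K=\overline{P\fA\Omega}$ is the full range of $P$.

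Now use the modular theory of $(\cB(\Lambda_+),\Omega)$ on $K$. The map $x\Omega\mapsto y\Omega$ given by the axiom intertwines the Tomita operator $S_+$ of $\cB(\Lambda_+)$ with $S_-$ of $\cB(\Lambda_-)$, up to adjoints. Since $\cB(\Lambda_-)|_K\subseteq\cB(\Lambda_+)|_K'$ and $\cB(\Lambda_-)\Omega$ is dense in $K$, this should identify the closure of $S_-$ with $S_+^*$, which is the Tomita operator of the commutant. The standard criterion then applies: if $N\subseteq M'$, $\Omega$ is cyclic and separating, and the Tomita operator of $N$ equals $S_{M'}$, then $N''=M'$. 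This gives $\cB(\Lambda_+)|_K'=\cB(\Lambda_-)|_K$.

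I expect this step to be the main obstacle. The difficulty is making the limiting argument over $R\subseteq S$ rigorous, and in particular checking that the algebraic correspondence is compatible with adjoints and densely defined enough to yield equality of closed operators, not just a core inclusion. Finite dimensionality of the local algebras should help with both.

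\textbf{Step 3: lifting back to the full algebras.} Take $z\in\cA(\Lambda_+)'$; I want $z\in\cA(\Lambda_-)$.

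First, central support $1$ of $p_{\Lambda_+}$ in $\cA(\Lambda_+)$ gives an isomorphism $\cA(\Lambda_+)'\cong(p_{\Lambda_+}\cA(\Lambda_+)p_{\Lambda_+})'$ on $p_{\Lambda_+}\cH$ via $z\mapsto zp_{\Lambda_+}$. The same holds for $\Lambda_-$.

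Next, I need to show that $\cA(\Lambda_+)'$ is generated by $\cA(\Lambda_-)$ together with operators that compress, under $P$, into $\cB(\Lambda_+)|_K'$. To do this, I will write $\cA(\Lambda_+)'$ in terms of $\cB(\Lambda_+)'$ on $p_{\Lambda_+}\cH$. Cyclicity of $\Omega$ and the central support hypotheses then let me pass between $p_{\Lambda_+}\cH$, $K$, and $p_{\Lambda_-}\cH$, using the identification of $p_{\Lambda_+}\cH$ with $\cA(\Lambda_-)$ acting on $K$. Concretely, the compression of $z$ lies in $\cB(\Lambda_+)|_K'=\cB(\Lambda_-)|_K$. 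I expect $z\mapsto zP$ to be injective on $\cA(\Lambda_+)'\cap\cA(\Lambda_-)'$-type relative commutants by the central support assumptions, and this injectivity is what lifts $z$ to $\cA(\Lambda_-)$.
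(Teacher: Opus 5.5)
Your reduction to the compressed algebras on $K=p_{\Lambda_+}p_{\Lambda_-}L^2(\fA,\psi)$ and your use of Tomita--Takesaki theory follow the paper's strategy. However, the decisive inference in Step 2 fails. From $\cB(\Lambda_-)|_K\subseteq\cB(\Lambda_+)|_K'$ and density of $\cB(\Lambda_-)\Omega$ you only get $\overline{S_-}\subseteq S_+^*=S_{\cB(\Lambda_+)'}$. You do not get that $\cB(\Lambda_-)\Omega$ is a core for $S_+^*$. Indeed, any proper inclusion $N\subsetneq M'$ with $\Omega$ still cyclic for $N$ (e.g.\ a half-sided modular inclusion) is a counterexample, since otherwise your own criterion would force $N''=M'$.

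The axiom does not supply the missing "compatibility with adjoints" either. On $K$ it identifies the vectors $\hat x\Omega=\hat y\Omega$, so the map $x\Omega\mapsto y\Omega$ is the identity and says nothing about $\hat x^*\Omega$ versus $\hat y^*\Omega$. These differ in general by modular data; in the Levin--Wen model they differ by a ratio of quantum dimensions, which is why $\sigma^\psi_{-i/2}$ appears in \ref{LTO:RP}.

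The missing idea is to localize to finite-dimensional pieces. For each boundary interval $I$, \ref{LTO:HD} gives $\fB_+(I)\Omega=\fB_-(I)\Omega=:L^2I$. This is a finite-dimensional space invariant under both algebras, and a dimension count shows they are mutual commutants there. The paper then shows that the modular data of $\cB_+$ respect these subspaces. It uses the $\psi$-preserving conditional expectations of Construction \ref{construct:ConditionalExpectations} together with Takesaki's theorem (Corollary \ref{cor:ModularGroupPreservesSubalgebras}). This gives $J|_{L^2I}=J_I$ and $J\fB_+(I)J=J_I\fB_+(I)J_I=\fB_-(I)$, hence $J\cB_+J=\cB_-$.

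You could also repair your version directly:
\begin{itemize}
\item $\overline{S_+}$ maps $L^2I$ into itself, because $\fB_+(I)$ is $*$-closed.
\item $S_+^*$ maps $L^2I$ into itself, because $\fB_-(I)\subseteq\cB_+'$ is $*$-closed with $\fB_-(I)\Omega=L^2I$.
\item Hence $\Delta=S_+^*\overline{S_+}$ preserves the finite-dimensional space $L^2I$, so $L^2I$ reduces $\Delta$.
\item The increasing dense union $\bigcup_I L^2I=\fB_-\Omega$ is then a core for $S_+^*=J\Delta^{-1/2}$, so $\overline{S_-}=S_+^*$ and your criterion applies.
\end{itemize}

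Step 3 is also not a valid argument as written. The map $z\mapsto zP$ is not injective on $\cA(\Lambda_+)'$: take $z=1-p_{\Lambda_-}$. Injectivity on a relative commutant also does not place a given $z\in\cA(\Lambda_+)'$ inside $\cA(\Lambda_-)$. The lift is the formal Fact \ref{fact:CompressionCommutant}: if $p\in M$ has central support one and $N\subseteq M'$, then $N=M'$ if and only if $Np=(pMp)'$ on $pH$. Apply this first with $p=p_{\Lambda_+}$ and use the bicommutant theorem. Then apply it again to $\cA(\Lambda_-)p_{\Lambda_+}$ and $\cB(\Lambda_+)$ on $p_{\Lambda_+}L^2(\fA,\psi)$, with projection $p_{\Lambda_-}p_{\Lambda_+}$. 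That projection has central support one because $\cA(\Lambda_-)\to\cA(\Lambda_-)p_{\Lambda_+}$ is injective. Together these reduce the theorem exactly to $(\cB_+|_K)'=\cB_-|_K$.

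Two smaller points in Step 2. First, $p_{\Lambda_+}p_S=p_S$ is false. The identity $p_{\Lambda_+}x\Omega=p_{\Lambda_-}y\Omega$ instead follows by multiplying $p_{S_+}x\Omega=p_{S_-}y\Omega$ by $p_{\Lambda_+}p_{\Lambda_-}$ and using locality. Second, $K=PL^2(\fA,\psi)$ needs the middle term $p_{S_+}p_{S_-}\fA(R)p_S$ of \ref{LTO:HD}, not only the outer equality.
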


The faithfulness assumption in Theorem \ref{thm:Main} above allows us to show that both boundary algebras $\cB(\Lambda_\pm)$ are in standard form on $p_{\Lambda_+}p_{\Lambda_-}L^2(\fA,\psi)$, so we may employ \emph{Tomita--Takesaki theory} to get the desired result.
Again, this faithfulness has been verified for the known topologically ordered commuting projector models.
Additionally, it implies a strengthened injectivity axiom for LTOs used in \cite{2507.03201} to connect their frustration-free setup to the LTO machinery from \cite{MR4945955}.
The assumption that $p_{\Lambda_\pm}$ have central support 1 in $\cA(\Lambda_\pm)$ is automatic for quantum spin systems by purity of $\psi$, which ensures $\cA(\Lambda_\pm)$ are factors. 
However, our treatment also applies to \emph{abstract} spin systems satisfying our assumptions, where the cone algebras are not necessarily factors \cite{2506.19969}. 

Additionally, we show that LTOs satisfy a \emph{mutual information condition} that, together with Haag duality, is used in \cite{2511.08382} to show that the \emph{approximate split property} is satisfied for the cone von Neumann algebras. 
We also remark what can be said when the cone algebras are not factors, in which case the proof in \cite{2511.08382} does not work without some additional assumptions. 

Finally, we connect our LTO axioms with the \emph{reflection positivity} setup developed in \cite{2510.20662} to show when frustration-free models automatically satisfy a local topological quantum order axiom from \cite{MR2742836}, going beyond \cite[\S6]{2507.03201}.
Reflection positivity has been highly successful in constructive quantum field theory and quantum statistical mechanics (see e.g.~\cite{MR0887102,Simon2025} and references therein).
We provide an axiom \ref{LTO:RP} for LTOs involving the reflection positive structure that, for quantum spin systems, implies our LTO Haag duality axiom \ref{LTO:HD}. 
Assuming a $\bbZ/2$-reflection about a vertical cut, the axiom \ref{LTO:RP} relates the right action of the right boundary algebra along the cut in the ground state to the 
left action of the left boundary algebra along the cut in the ground state, but twisted by $\sigma^\psi_{-i/2}$ where $t\mapsto \sigma^\psi_t$ is the modular group for the canonical state.
We again prove that known commuting projector models satisfy \ref{LTO:RP} for some $\bbZ/2$-symmetry.
In \cite{2510.20662}, the authors use reflection positivity and Osterwalder--Schrader (OS) reconstruction~\cite{MR376002} to obtain a net of algebras which is isomorphic to the net of interaction algebras, but again twisted by the modular automorphism group.
We hope that by relating our setup to the one in \cite{2510.20662}, the tools developed in this article might be useful in proving Haag duality from the reflection positive setup in \cite{2510.20662}.

\subsection*{Acknowledgments}
The authors would like to thank Anupama Bhardwaj, Corey Jones, Kyle Kawagoe, Benjamin Major, Milo Moses, Yoshiko Ogata, and David P\'erez-Garc\'ia for helpful conversations. 
David Penneys and Daniel Wallick were supported by NSF DMS-2154389.

\section{Background}

In this section, we recall the basic objects of study in this note, and we prove some new results in \S\ref{sec:ConditionalExpectations} below.
Our basic setup is that of a \emph{net of algebras}: a unital quasilocal $\rmC^*$-algebra $\fA$ together with an assignment of unital $\rmC^*$-subalgebras $\fA(R)$ of $\fA$ to a class of subregions $R$ in a lattice $\cL$.
In \cite{MR4945955}, we worked with rectangular subsets in a $\bbZ^n$ lattice, which we extend to `disk-like' regions in \S\ref{sec:LTO} below.
These $\rmC^*$-subalgebras are required to satisfy the following axioms:
\begin{enumerate}[label=\textup{(N\arabic*)}]
\item \label{it:netempty}
(vacuum) $\fA(\emptyset)=\bbC \cdot 1_\fA$,
\item
(isotony) $R \subset S$ implies $\fA(R)\subset \fA(S)$,
\item
(locality) $R \cap S = \emptyset$ implies $[\fA(R), \fA(S)]=0$ inside $\fA$, and
\item \label{it:netdense}
(nondegeneracy) $\bigcup_{R \subset \cL} \fA(R)$ is dense in $\cL$.
\end{enumerate}

A typical example is a quantum spin system on a lattice, where for each site $x$ of the lattice we have $\fA(\{x\}) \cong M_d(\mathbb{C})$, and for $R \subset \cL$ finite, $\fA(R)$ is defined by taking the tensor product of the algebras for each site in $R$.
We will however not assume our local algebras are tensor products of single-site algebras unless explicitly stated.
Such a system is sometimes called an \emph{abstract quantum spin system} and we reserve the terminology `quantum spin system' for the case where the local algebras are tensor products of full matrix algebras as above.

\subsection{Local topological order axioms}
\label{sec:LTO}

We recall the local topological order axioms introduced in~\cite{MR4945955}.
The axioms are a generalization of the LTQO axioms introduced in~\cite{MR2742836,MR2842961} that allow us to study what happens at the \emph{boundary} of a region (rather than the bulk considered in the LTQO axioms).
A \emph{net of projections} for the net of algebras $\fA$ is an assignment of an orthogonal projection $p_R \in \fA(R)$ for every region $R$, such that $R \subset S$ implies $p_S \leq p_R$ in $\fA(S)$.
In applications the projections $p_S$ are typically the local ground state projections of a frustration-free Hamiltonian~\cite{MR1158756, MR2742836, MR4700364}.

\begin{nota}
For two rectangles $R,S$ and $s>0$, we say $R$ is \emph{completely surrounded} by $S$, denoted $R \ll_s S$, if for every point $\ell\in R$, the $s$-ball in the $\ell^\infty$ metric is contained in $S$.
Observe that $R \ll_sS$ if and only if both $R\subset S$ and the distance from $\partial R$ to $\partial S$ is at least $s$, where $\partial$ denotes taking the boundary.

We say $R$ is \emph{(weakly) surrounded} by $S$, denoted by $R \Subset_sS$, if $R\subseteq S$, $I\coloneqq \partial R \cap \partial S$ consists of a single non-empty face of $R$, and 
every point of $S\setminus R$ is contained in some $s$-cube which is completely contained in $S\setminus R$.
This means for every other face $F$ of $R$ except $I$, the distance from $F$ to $\partial S$ in the direction of a normal vector pointing away from $S$ is at least $s$.

We represent $R \ll_sS$ and $R \Subset_s S$ on the left and right-hand sides of the following cartoon.
\[
\tikzmath{
\foreach \x in {-9,...,19}{
\foreach \y in {-9,...,19}{
\filldraw[gray!50!white] ($ .1*(\x,\y) $) circle (.01cm);
}}
\draw[thick, blue] (0,0) rectangle (1,1);
\node[blue] at (.5,.5) {$\scriptstyle R$};
\draw[thick, red] (-1,-1) rectangle (2,2);
\node[red] at (1.5,1.5) {$\scriptstyle S$};
\draw[thick, <->] (1.1,.5) -- (1.9,.5);
\node at (1.5,.7) {$\scriptstyle \geq s$};
\draw[thick, <->] (-.1,.5) -- (-.9,.5);
\node at (-.5,.7) {$\scriptstyle \geq s$};
\draw[thick, <->] (.5,1.1) -- (.5,1.9);
\node at (.2,1.5) {$\scriptstyle \geq s$};
\draw[thick, <->] (.5,-.1) -- (.5,-.9);
\node at (.2,-.5) {$\scriptstyle \geq s$};
}
\qquad\qquad\qquad
\tikzmath{
\foreach \x in {-9,...,19}{
\foreach \y in {1,...,19}{
\filldraw[gray!50!white] ($ .1*(\x,\y) $) circle (.01cm);
}}
\draw[thick, blue] (0,0) rectangle (1,1);
\node[blue] at (.5,.5) {$\scriptstyle R$};
\draw[thick, red] (-1,0) rectangle (2,2);
\node[red] at (1.5,1.5) {$\scriptstyle S$};
\draw[very thick, violet] (0,0) --node[below]{$\scriptstyle I$} (1,0);
\draw[thick, <->] (1.1,.5) -- (1.9,.5);
\node at (1.5,.7) {$\scriptstyle \geq s$};
\draw[thick, <->] (-.1,.5) -- (-.9,.5);
\node at (-.5,.7) {$\scriptstyle \geq s$};
\draw[thick, <->] (.5,1.1) -- (.5,1.9);
\node at (.2,1.5) {$\scriptstyle \geq s$};
}
\]
\end{nota}

Whenever $R \Subset_s S$, we define the $*$-algebra
\[
\fB(R \Subset_s S)
\coloneqq
\set{x p_S}{x\in p_R\fA(R)p_R \text{ and }xp_{\widehat{S}}=p_{\widehat{S}}x\text{ whenever }R\Subset_s\widehat{S}\text{ and }\partial R\cap \partial S = \partial R\cap\partial \widehat{S}}.
\]

\begin{defn}
A \emph{local topological order} (LTO) is a pair $(\fA,p)$ where $\fA$ is a net of algebras equipped with a net of projections that satisfies the following axioms, for sufficiently large regions $R,S$,\footnote{\label{footnote:SufficientlyLarge}An LTO comes with a `sufficient largeness' global constant $r>0$.
We say $R$ is \emph{sufficiently large} if for every $v\in R$, there is an $r$-cube $C\subset R$ containing $v$.} 
where $s> 0$ is some fixed global constant.
\begin{enumerate}[label=\textup{(LTO\arabic*)}]
\item
\label{LTO:QECC}
Whenever $R \ll_s S$, $p_S \fA(R)p_S = \bbC \cdot p_S$.
\item
\label{LTO:Boundary}
Whenever $R \Subset_s S$, $p_S \fA(R)p_S =\fB(R \Subset_s S)\cdot p_S$.
\item
\label{LTO:Surjective}
Whenever $R_1\subset R_2 \Subset_s S$ with $\partial R_1\cap \partial S=\partial R_2\cap \partial S$, $\fB(R_1 \Subset_s S)=\fB(R_2 \Subset_s S)$.
\item
\label{LTO:Injective}
Whenever $R\Subset_sS_1\subset S_2$ with  $\partial R\cap\partial S_1=\partial R\cap\partial S_2$, if $x\in \fB(\partial R\cap\partial S_1)$ such that $xp_{S_2}=0$, then $x=0$.
\end{enumerate}
\end{defn}

The final three axioms may be loosely summarized by saying that $\fB(R\Subset_s S)$ really only depends on the sites in $\cL$ near $I$, and that there is a single algebra $\fB(I)$ that plays the role of $\bbC$ in \ref{LTO:QECC} when $R\Subset_sS$ instead of $R\ll_sS$.
For a rigorous version of this summary, we refer the reader to \cite[(2.14) and Prop.~2.15]{MR4945955}.

\begin{exs}
\label{exs:KnownTopologicalCommutingProjectorModels}
Almost all known topologically ordered commuting projector models are known to satisfy \ref{LTO:QECC}--\ref{LTO:Injective}, including
Kitaev's toric code \cite{MR4945955},
Kitaev's quantum double \cite{MR4814524},
Levin-Wen \cite{MR4945955}, and
Walker-Wang \cite{2506.19969}.
\end{exs}

Our results hold in far more generality than just using rectangles.
For our $\bbZ^n$ lattice, we consider regions $R$ that are `disk-like' in the sense that their boundaries are `sphere-like.'
Precisely, we say this as follows.

\begin{defn}
Under the usual metric embedding $\bbZ^n\subset \bbR^n$, for each $z\in \bbZ^n$, write $[z]$ for its closed $\ell^\infty$-ball (in $\mathbb{R}^n$) of radius 1/2 and $[\Lambda]\coloneqq \bigcup_{z\in\Lambda} [z]$.
A region $R$ in our $\bbZ^n$ lattice is called \emph{disk-like} if its boundary $\partial[\Lambda]$ in $\bbR^n$ is homeomorphic to $S^{n-1}$.
Below is a cartoon for $n=2$.
\begin{equation}
\label{eq:DisklikeRegions}    
\tikzmath{
\filldraw[blue, thick, fill=blue!20] (.5,.5) rectangle (1.5,1.5);
\foreach \x in {0,1,2}{
\foreach \y in {0,1,2}{
\filldraw (\x,\y) circle (.05cm);
}}
\node at (.8,.8) {$\scriptstyle z$};
\node[blue] at (.3,.3) {$\scriptstyle \partial[z]$};
}
\qquad\qquad\rightsquigarrow\qquad\qquad
\tikzmath{
\filldraw[blue, thick, fill=blue!20] (-.25,.25) -- (-.25, .75) -- (.25, .75) -- (.25, 1.25) -- (-.25, 1.25) -- (-.25, 1.75) -- (.25,1.75) -- (.75, 1.75) -- (.75, 1.25) -- (1.25,1.25) -- (1.25,1.75) -- (3.25,1.75) -- (3.25,1.25) -- (1.75,1.25) -- (1.75, .75) -- (2.75, .75) -- (2.75,.25) -- (1.25,.25) -- (1.25,.75) -- (.75,.75) -- (.75,.25) -- (-.25,.25);
\foreach \x in {-1,0,1,...,7}{
\foreach \y in {0,1,2,3,4}{
\filldraw ($ .5*(\x,\y) $) circle (.05cm);
}}
\node[blue] at (3.1,.75) {$\scriptstyle \partial[\Lambda]$};
}
\end{equation}
We write $\partial\Lambda$ for the set of points in $z\in \Lambda$ such that $z\in\partial[\Lambda]$.
\end{defn}

\begin{rem}
In 2D, disk-like regions generalize the `comb-like regions' of \cite[Def.~3.4.3]{2509.23734}. 
A comb-like region is always disk-like \cite[Rem.~3.4.5]{2509.23734}, but
the example on the right-hand side of \eqref{eq:DisklikeRegions} is disk-like but not comb-like.
\end{rem}

For disk-like regions we get the obvious notion of $R \ll_s S$:
$R\subset S$ 
and $\partial[R]$ is at least $\ell^\infty$-distance $s$ away from $\partial[S]$.
To define weakly surrounding regions $R\Subset_s S$, we may require that 
\begin{equation}
\label{eq:bdinterval}
I\coloneqq \partial R \cap \partial S
\end{equation}
is contained in a codimension one hyperplane $\cH$, in which it is also disk-like, i.e., $\partial[I]\subset \cH$ is homeomorphic to $S^{n-2}$,
and
every point of $S\setminus R$ is contained in some $s$-cube which is completely contained in $S\setminus R$.

\begin{exs}
One can similarly verify the axioms \ref{LTO:QECC}--\ref{LTO:Injective} for these more general disk-like regions (which are sufficiently large as in Footnote \ref{footnote:SufficientlyLarge}) for the commuting projector models mentioned in Examples \ref{exs:KnownTopologicalCommutingProjectorModels}.\footnote{One may also restrict attention to disk-like regions $R$ such that the complement $R^c$ is also sufficiently large, i.e., for every $v\in R^c$, there is an $r$-cube $C\subset R^c$ containing $v$.}

\end{exs}

For the LTO machinery to work properly for boundary algebras for an arbitrary cone $\Lambda$, we would like that
\begin{itemize}
\item 
$\Lambda$ and $\Lambda^c$ should be sufficiently large as in Footnote \ref{footnote:SufficientlyLarge}, 
\item 
$\Lambda$ and $\Lambda^c$ should be the unions of their sufficiently large disk-like regions, and
\item 
the boundary $\partial[\Lambda]=\partial[\Lambda^c]$ should be homeomorphic to $\bbR$ (see the left hand side below).
\end{itemize}
Unfortunately, this is not always the case, as in the right hand side below.
$$
\tikzmath{
\begin{scope}
\clip (-1.75,-1.75) rectangle (1.75,1.75);
\filldraw[blue, thick, fill=blue!20] (.5,4) -- (-.75,-.75) -- (4,.5);
\end{scope}
\foreach \x in {-3,-2,...,3}{
\foreach \y in {-3,-2,...,3}{
\filldraw ($ .5*(\x,\y) $) circle (.02cm);
}}
\draw[thick,orange] (-.75,-.75) -- (.25,-.75) -- (.25,-.25) -- (1.75,-.25);
\draw[thick,orange] (-.75,-.75) -- (-.75,.25) -- (-.25,.25) -- (-.25,1.75);
\node[blue] at (1.25,1.25) {$\scriptstyle \Lambda$};
\node[orange] at (2.1,-.25) {$\scriptstyle \partial[\Lambda]$};
}
\qquad\qquad\qquad\qquad
\tikzmath{
\begin{scope}
\clip (-1.75,-1.75) rectangle (1.75,1.75);
\filldraw[blue, thick, fill=blue!20] (1.3,2.2) -- (-1.25,-1.25) -- (2.2,1.3);
\end{scope}
\foreach \x in {-3,-2,...,3}{
\foreach \y in {-3,-2,...,3}{
\filldraw ($ .5*(\x,\y) $) circle (.02cm);
}}
\fill[opacity=.5, fill=red!40] (-1.25,-1.25) rectangle (-.75,-.75);
\draw[thick, red] (-1.25,-1.25) rectangle (-.75,-.75);
\fill[opacity=.5, fill=red!40] (-.25,-.25) rectangle (-.75,-.75);
\draw[thick, red] (-.25,-.25) rectangle (-.75,-.75);
\fill[opacity=.5, fill=red!40] (-.25,-.25) rectangle (.25,.25);
\draw[thick, red] (-.25,-.25) rectangle (.25,.25);
\draw[thick,orange] (.25,.25) -- (1.25,.25) -- (1.25,.75) -- (1.75,.75);
\draw[thick,orange] (.25,.25) -- (.25,1.25) -- (.75,1.25) -- (.75,1.75);
\node[blue] at (1.25,1.25) {$\scriptstyle \Lambda$};
\node[orange] at (2.1,.75) {$\scriptstyle \partial[\widehat{\Lambda}]$};
}
$$
Equipped with our more general disk-like regions, we can define more general \emph{cone-like regions} and restrict our attention to the corresponding net of von Neumann algebras for these regions.

\begin{construction}
\label{construction:ConeLikeRegions}
For `thin' cones where $\partial[\Lambda]$ is not homeomorphic to $\bbR$, one can always remove finitely many points from $\Lambda$ to obtain a new \emph{cone-like region} $\widehat{\Lambda}$ (see the right hand side above) which is sufficiently large and such that $\partial[\widehat{\Lambda}]$ is again homeomorphic to $\bbR$.
For `fat' cones whose complements are `thin,' we may add finitely many points instead.
\end{construction}

\begin{defn}
For a cone-like region $\Lambda$, we define
$\fA(\Lambda)\coloneqq \overline{\bigcup_{R\subset \Lambda} \fA(R)}^{\|\cdot\|}$.
\end{defn}

\begin{fact}
These cone-like-regions form a poset with order-reversing involution which satisfies the conditions of \cite{MR4927814} in order to define a braided $\rmW^*$-tensor category of superselection sectors in the case that the axioms are satisfied.    
For our purposes it is thus sufficient to restrict to sufficiently large cone-like regions instead of arbitrary cones.
We will further study the net of algebras $\Lambda\mapsto \fA(\Lambda)$ for cone-like regions in \S\ref{sec:NetsOfVNA} below.
\end{fact}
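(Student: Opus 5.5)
The plan is to verify directly the combinatorial conditions that \cite{MR4927814} imposes on an index poset. These are, roughly: a partial order; an order-reversing involution $\Lambda\mapsto\Lambda'$ with $\Lambda\cap\Lambda'=\emptyset$; enough room to move cones away from one another and into one another's complements; and a suitable notion of "spacelike" pairs, which is needed to define the braiding. I take the poset to be the set of sufficiently large cone-like regions, ordered by inclusion. The involution sends $\Lambda$ to its complement $\Lambda^c$. The first step is to check that this involution is well defined on cone-like regions. Since $\partial[\Lambda]=\partial[\Lambda^c]$, the boundary condition (homeomorphic to $\bbR$) is symmetric. The requirement that both $\Lambda$ and $\Lambda^c$ be sufficiently large and be unions of their sufficiently large disk-like regions is also symmetric by definition. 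Hence $\Lambda^c$ is cone-like whenever $\Lambda$ is. It is clearly an involution, and it reverses inclusion.

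Next I handle the geometric conditions by reducing to honest cones in $\bbR^n$ and then applying Construction~\ref{construction:ConeLikeRegions}. Given cone-like regions $\Lambda_1,\Lambda_2$, I would produce the required auxiliary cones in three steps:
\begin{enumerate}
\item Choose genuine cones with apex far away and opening angle strictly between the relevant angles. Examples are a cone containing both $\Lambda_i$ up to a finite set, and a cone disjoint from both and contained in $\Lambda_1^c\cap\Lambda_2^c$ after translation.
\item Apply Construction~\ref{construction:ConeLikeRegions} to modify each chosen cone by finitely many lattice points, making it cone-like.
\item Translate the apex if necessary, so that the finitely many added or removed points do not destroy the inclusions or disjointness.
\end{enumerate}
The key observation is that every cone-like region differs from a genuine cone only by a bounded set near its apex. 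So any strict containment or disjointness between the underlying cones with some angular margin survives the modification, once the apex is pushed out by a distance larger than the size of the modification.

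The main obstacle is making the claim in the previous paragraph uniform. I must show that the number (and diameter) of points added or removed in Construction~\ref{construction:ConeLikeRegions} is bounded in terms of the opening angle and the sufficient-largeness constant $r$ alone, not depending on the position of the apex. My first attempt would be a scaling argument. Far from the apex, a cone of fixed angle contains $r$-cubes around every point, and its lattice boundary is a graph over $\partial$ of the cone, hence homeomorphic to $\bbR$. Failure can therefore only occur within a ball around the apex whose radius is controlled by $r$ divided by the sine of half the angle. Removing (for thin cones) or adding (for fat cones) the lattice points in that ball, and then smoothing into a staircase shape, should give a cone-like region. Once this uniform bound is in hand, the remaining conditions of \cite{MR4927814} follow by the translation argument above. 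These include the existence of spacelike-separated pairs in both left and right orientations, which are needed for the braiding. This justifies restricting attention to sufficiently large cone-like regions.
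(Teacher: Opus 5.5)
The paper gives no proof of this Fact. It rests entirely on Construction~\ref{construction:ConeLikeRegions}, which is itself only asserted, and on the analogy with the cone-like regions of \cite{2509.23734}. Your outline is exactly the reasoning the paper leaves implicit, so you are on the same route, just more explicit:
\begin{itemize}
\item closure under $\Lambda\mapsto\Lambda^c$ holds because the defining conditions are symmetric (the ``fat'' case of the Construction is just the complement of the ``thin'' case);
\item the geometric requirements of \cite{MR4927814} are inherited from genuine cones, because a cone-like region differs from a lattice cone by a finite set near the apex.
\end{itemize}

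\emph{The step you call the main obstacle is unnecessary.} The lattice and all the defining conditions (sufficient largeness, disk-likeness, $\partial[\Lambda]\cong\bbR$) are invariant under translation by vectors of $\bbZ^2$.
\begin{itemize}
\item So you may take every auxiliary region to be an integer translate $\widehat{C}+v$, with $v\in\bbZ^2$, of one fixed modified cone $\widehat{C}=(C\cap\bbZ^2)\triangle F$ of the chosen shape.
\item Its defect set is then the rigidly translated finite set $F+v$.
\item Choosing $v$ within distance $1$ of a point far out in the desired direction secures the needed inclusions or disjointness with angular margin.
\end{itemize}
You therefore only need the bare existence statement of the Construction for finitely many fixed shapes, with no bound uniform in the apex position. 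Your scaling argument is really a sketch of a proof of the Construction itself, which the paper simply takes for granted. Also, sufficient largeness already implies being a union of sufficiently large disk-like regions, since an $r$-cube is one.

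\emph{What is genuinely unfinished} is that you never state which conditions of \cite{MR4927814} you are checking. As it stands, ``the remaining conditions follow by the translation argument'' is not yet a verification. When you fill this in, claim auxiliary cones only in the configurations actually required. Your example of a cone containing both $\Lambda_1$ and $\Lambda_2$ does not exist in general: it fails for $\Lambda$ and $\Lambda^c$, and for two thin cones pointing in opposite directions. The same caveat applies to your cone disjoint from both.
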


\begin{warn}
In the sequel, we will often write `cone' for `cone-like region.' 
\end{warn}

\begin{rem}
Our definition of cone-like regions is morally the same as the definition of cone-like regions in \cite[Def.~4.1.1]{2509.23734}, although there are small technical differences between the two definitions. 
Note that \cite{2509.23734} prove Haag duality for their definition of cone-like regions \cite[Thm.~4.1.4]{2509.23734}.
\end{rem}

\subsection{Canonical state and boundary algebras}

Given a net $(\fA,p)$ satisfying \ref{LTO:QECC}, it was shown in \cite[\S2.3]{MR4945955} that whenever $R \ll_s S$, the scalar $\psi(x)$ defined by 
\[
\psi(x)p_{S} \coloneqq p_{S} x p_{S}
\qquad\qquad\qquad\qquad
x\in \fA(R)
\]
is independent of the choice of $S$, and thus defines a canonical state $\psi$ on the quasilocal algebra $\fA$.
For a region $R$, we define the von Neumann algebra $\cA(R)\coloneqq\fA(R)''$ on $L^2(\fA,\psi)$.
By definition, $\psi$ satisfies $\psi(p_R)=1$ for every region $R$.

\begin{rem}
Suppose $\phi$ is a normal state on a von Neumann algebra $M$.
Recall that the left kernel of $\phi$
$$
N_\phi \coloneqq\set{x\in M}{\phi(x^*x)=0}
$$
is a $\sigma$-WOT closed left ideal of $M$, and thus $N_\phi = M[\phi]^\perp$  for a unique projection $[\phi]^\perp\in M$.
The projection $[\phi]=[\phi]^{\perp\perp}$ is called the \emph{support} of $\phi$;
$\phi$ is faithful on $[\phi] M[\phi]$, and
$$
\phi(x) = \phi([\phi] x[\phi])
\qquad\qquad\qquad
\forall\, x\in M
$$
(see e.g. \cite[Lem.~III.3.6]{MR1873025}).
Observe that for another projection $p\in M$, 
\begin{enumerate}[label=\textup{($[\phi]$\arabic*)}]
\item 
\label{supp:phi(p)=1}
$[\phi]\leq p$
if and only if
$\phi(p)=1$, and
\item 
\label{supp:phiFaithful}
$p\leq [\phi]$
if and only if
$\phi$ is faithful on $pMp$. 
\end{enumerate}
\end{rem}

Consider now the GNS normal unital $*$-algebra homomorphism $\pi_\phi: M\to B(L^2(M,\phi))$.
Since $\ker(\pi_\phi)$ is a $\sigma$-WOT closed 2-sided ideal of $M$, it is of the form $M z_\phi^\perp$ for some cental projection $z_\phi^\perp\in Z(M)$.
Thus, the action of $M$ on $L^2(M,\phi)$ factors through the faithful action of $Mz_\phi$ on $L^2(M,\phi)$.
The following lemma is well-known; we include a short proof for completeness and convenience.

\begin{lem}
\label{lem:KernelOfActionCentralSupportOfState}
We have that $z_\phi = z([\phi])$, where $z([\phi])$ is the central support projection of $[\phi]$.
\end{lem}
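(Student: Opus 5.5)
We prove the two inequalities $z([\phi])\le z_\phi$ and $z_\phi\le z([\phi])$ separately. Throughout, $\Omega\in L^2(M,\phi)$ denotes the GNS cyclic vector, so $\phi(x)=\langle \pi_\phi(x)\Omega,\Omega\rangle$, and $M\Omega$ is dense. The key observation is that, for a central projection $z\in Z(M)$, $\pi_\phi(z)=0$ if and only if $\pi_\phi(z)\Omega=0$. Indeed, if $\pi_\phi(z)\Omega=0$, then $\pi_\phi(z)\pi_\phi(x)\Omega=\pi_\phi(x)\pi_\phi(z)\Omega=0$ for all $x\in M$, and by density $\pi_\phi(z)=0$. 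Moreover, $\|\pi_\phi(z)\Omega\|^2=\phi(z)$. So for central projections $z$,
\[
z\in \ker(\pi_\phi)\iff \phi(z)=0 .
\]

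For $z_\phi\le z([\phi])$, apply this to $z=z([\phi])^\perp$. Since $z([\phi])\ge[\phi]$, property \ref{supp:phi(p)=1} gives $\phi(z([\phi]))=1$, so $\phi(z([\phi])^\perp)=0$. Hence $z([\phi])^\perp\in\ker(\pi_\phi)=Mz_\phi^\perp$, i.e. $z([\phi])^\perp\le z_\phi^\perp$, which is $z_\phi\le z([\phi])$.

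For $z([\phi])\le z_\phi$, apply it to $z=z_\phi^\perp\in\ker(\pi_\phi)$. This gives $\phi(z_\phi^\perp)=0$, so $\phi(z_\phi)=1$. Since $z_\phi$ is a projection, \ref{supp:phi(p)=1} yields $[\phi]\le z_\phi$. Because $z_\phi$ is central and $z([\phi])$ is the smallest central projection dominating $[\phi]$, we get $z([\phi])\le z_\phi$.

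Combining the two inequalities gives $z_\phi=z([\phi])$. None of the steps is a real obstacle; the only point requiring care is the observation that a central element vanishing on the cyclic vector vanishes on all of $L^2(M,\phi)$, which uses centrality and cyclicity of $\Omega$.
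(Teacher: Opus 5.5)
Your proof is correct and follows essentially the same route as the paper. One inequality comes from $\phi(z_\phi^\perp)=\|z_\phi^\perp\Omega_\phi\|^2=0$ forcing $[\phi]\le z_\phi$; the other comes from $z([\phi])^\perp$ killing the cyclic vector, and hence, by centrality and cyclicity, all of $L^2(M,\phi)$. The only difference is cosmetic: you route both directions through \ref{supp:phi(p)=1} and a single ``a central projection lies in $\ker\pi_\phi$ iff $\phi$ vanishes on it'' observation, whereas the paper uses $N_\phi=M[\phi]^\perp$ and $[\phi]\Omega_\phi=\Omega_\phi$ directly.
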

\begin{proof}
Since
$
\phi(z_\phi^\perp) = \|z_\phi^\perp\Omega_\phi\|_2^2 = 0,
$
$z_\phi^\perp\in N_\phi=M[\phi]^\perp$, and thus $z_\phi^\perp\leq [\phi]^\perp$.
Conversely, for all $m\in M$, $z([\phi])^\perp m[\phi]=0$.
This immediately implies that 
$$
z([\phi])^\perp m\Omega_\phi
=
z([\phi])^\perp m [\phi]\Omega_\phi 
= 
0
\qquad\qquad\qquad
\forall\, m\in M,
$$
and thus $\pi_\phi(z([\phi])^\perp)=0$.
Hence $z([\phi])^\perp\in \ker(\pi_\phi)=Mz_\phi^\perp$, so $z([\phi])^\perp \leq z_\phi^\perp$.
\end{proof}

\begin{rem}
\label{rem:CentralSupportOne}
Clearly \ref{supp:phi(p)=1} implies that 
$z_\phi\leq z(p)$
and thus 
$\pi_\phi(p)$ has central support 1 on $L^2(M,\phi)$.
\end{rem}

\begin{ex}
\label{ex:CentralSupportAndL2}
Suppose $M\subset B(H)$ is a von Neumann subalgebra and $\Omega\in H$ is a unit vector, and set $K\coloneqq \overline{M\Omega}\cong L^2(M,\phi)$ for $\phi=\omega_\Omega=\langle \Omega|\cdot|\Omega\rangle$.
Let $p_K\in M'$ be the orthogonal projection onto $K$, and let $z(p_K)\in Z(M)$ denote its central support.
We claim that $z_\phi=z(p_K)$.
First, since $p_K\Omega=\Omega$, we have $z(p_K)\Omega=\Omega$, so $\phi(z(p_K))=\langle \Omega|\Omega\rangle =1$, and $z_\phi\leq z(p_K)$.
Second, $Mz(p_K)$ acts faithfully on $K=p_KH$, so $p_K\leq z_\phi$.
\end{ex}

\begin{warn}
Even though $\psi(p_R)=1$ for all bounded regions $R$, it does not follow from \ref{supp:phi(p)=1} and Remark \ref{rem:CentralSupportOne} that $p_R$ automatically has central support one on $\cA(R)$, but rather it always has central support one in $\fA''$ acting on $L^2(\fA,\psi)$.\footnote{A simple counterexample to keep in mind is
the projection $p=|0\rangle\langle 0|\in \bbC\oplus \bbC\subset M_2(\bbC)$ and the vector state $\omega_{|0\rangle}$;
$p$ has central support one in $M_2(\bbC)$ acting on $\bbC^2=L^2(M_2(\bbC),\omega_{|0\rangle})$, but not in $\bbC\oplus \bbC$.
It does, however, have central support one in the image of $\bbC\oplus \bbC$ acting on $L^2(\bbC\oplus \bbC, \omega_{|0\rangle})\cong \bbC$, which only sees the first factor.}
\end{warn}

The axioms \ref{LTO:Boundary}--\ref{LTO:Injective} are used to construct a net of \emph{boundary algebras} on $\partial \Lambda$ for a cone-like region $\Lambda$.
That is, we have a quasilocal algebra $\fB(\Lambda)$ and an assignment of $\rmC^*$-subalgebras $\fB(I)$ to every interval $I\subset \partial\Lambda$ that satisfies the axioms \ref{it:netempty}--\ref{it:netdense}.
Just as \ref{LTO:QECC} gives a canonical state, a similar argument in \cite[\S2.4]{MR4945955} shows that whenever $R \Subset_s S\subset \Lambda$ with $I=\partial R\cap \partial S\subset \partial \Lambda$, the operator $\bbE(x)$ defined by
\[
\bbE(x)p_{S} \coloneqq p_{S} x p_{S}
\qquad\qquad\qquad\qquad
x\in \fA(R)
\]
only depends on sites near $I$ and not on the specific $R \Subset_s S$.
Taking $\fB(I)$ to be the collection of these $\bbE(x)$ defines our desired boundary net.

\begin{exs}
\label{exs:KnownBoundaryAlgebras}
We briefly summarize the description of boundary algebras from topologically ordered commuting projector models known to satisfy \ref{LTO:QECC}--\ref{LTO:Injective} from Examples \ref{exs:KnownTopologicalCommutingProjectorModels}.
\begin{itemize}
\item 
In \cite[\S3]{MR4945955}, the boundary algebra for the toric code was shown to be generated by the `remainders' of terms in the Hamiltonian that straddle the boundary $I$.
See Example~\ref{ex:ToricCodeHD} below for more details.

\item 
In \cite{MR4814524}, the boundary algebra for Kitaev's quantum double model for a finite group $G$ was again shown to be generated by the `remainders' of terms in the Hamiltonian that straddle the boundary $I$.
See Example \ref{ex:QuantumDoubleHD} below for more details.
It was also shown that the canonical state $\psi$ is the normalized trace on the quantum spin system restricted to the boundary algebra (cf.~\cite{MR4721705}).
This analysis is also valid for the toric code which corresponds to $G = \mathbb{Z}_2$.

\item 
In \cite[\S4]{MR4945955}, the boundary algebra for the Levin-Wen model based on a unitary fusion category $\cX$ was shown to be a fusion categorical net.
Setting $X\coloneqq\bigoplus_{x\in\Irr(\cX)} x$, an interval with $n$ sites along $\cK$ gives the local boundary algebra
$$
\fB_n\coloneqq\End_\cX(X^{\otimes n}).
$$
The canonical state $\psi$ restricted to $\fB_n$ was shown to be given by the formula
\[
\psi_n(\varphi)\coloneqq
\frac{1}{D_\cX^n} \sum_{\vec{x}\in \Irr(\cX)^n} d_{\vec{x}} \tr_\cX(p_{\vec{x}}\cdot \varphi),
\]
where we write $\vec{x}=(x_1,\dots, x_n)\in \Irr(\cX)^n$.

\item 
In \cite{2506.19969}, a similar analysis of the 2D boundary net of algebras was given for the Walker--Wang model associated to a unitary braided fusion category $\cA$.
We have a description of this boundary net as the \emph{braided} fusion spin system associated to $\cA$ with distinguished object $A\coloneqq\bigoplus_{a\in\Irr(\cA)}a$.
We get a similar formula for the canonical state as above.

\end{itemize}
One can compute the more general versions of the boundary algebras for these disk-like regions.
As expected, for the toric code and quantum double, the generators of these algebras are the remainders of the terms in the Hamiltonian that straddle $I=\partial R \cap \partial S$,
and for Levin--Wen and Walker--Wang, we get gluing operators that glue along an interval of the 1-sphere-like boundary.
\end{exs}

\begin{rem}
A precise way of formulating the `remainders' of terms in the Hamiltonian is to use interaction algebras \cite{MR1816609}.
One way to formulate our results for the examples above is that the boundary algebras from the LTO axioms are these interaction algebras.
The interaction algebras also were shown to emerge from Osterwalder--Schrader reconstruction for reflection positive quantum spin systems \cite{2510.20662}. 
Further discussion of this point is provided in \textsection\ref{sec:LTOvsRP}.
\end{rem}

\subsection{The net of von Neumann algebras associated to an LTO}
\label{sec:NetsOfVNA}

We now further study the net of von Neumann algebras on the poset of cone-like regions in $\bbR^n$ associated to our LTO $(\fA,p)$.
Recall that $\cA(\Lambda)\coloneqq\fA(\Lambda)''$ acting on $L^2(\fA,\psi)$.

More abstractly, a net of von Neumann algebras on a given poset may satisfy various additional properties.

\begin{defn}
Suppose $\Lambda\mapsto \cM(\Lambda)$ is a net of von Neumann algebras in $B(H)$ assigned to a poset of subsets of a metric space equipped with an order-reversing involution $\Lambda \mapsto \Lambda'$.
We define the following properties that $\cM$ might satisfy.
\begin{enumerate}[label=\textup{(M\arabic*)}]
\item 
\label{M:Isotony}
(Isotony)
$\cM(\Lambda)\subset \cM(\Delta)$ whenever $\Lambda\leq \Delta$.
\item 
\label{M:Locality}
(Locality)
$\cM(\Lambda)\subset \cM(\Delta)'$ whenever $\Lambda\leq \Delta'$.
\item
\label{M:HD}
(Haag duality) 
$\cM(\Lambda)' =  \cM(\Lambda')$ for all cones $\Lambda$.
\item 
\label{M:DisjointAdd}
(Disjoint additivity)
Whenever $\Lambda,\Delta$ are `sufficiently separated,'
$\cM(\Lambda)\vee \cM(\Delta) = \cM(\Lambda \cup \Delta)$.
\item 
\label{M:Coarse}
(Coarse\footnote{The term `coarse' here means an action of the algebraic tensor product extends to the von Neumann spatial tensor product; see \cite{correspondences} or \cite[Def.~2.18]{MR3663592}.})
Whenever $\Lambda,\Delta$ are `sufficiently separated,'
$\cM(\Lambda)\vee\cM(\Delta) \cong \cM(\Lambda)\otimes \cM(\Delta)$.
\item 
\label{M:ApproxSplit}
(Approximate split)
Whenever $\Lambda$ is `sufficiently contained' in $\Delta$, 
there is a type $\rm I$ factor $\cM(\Lambda)\subset \cN\subset \cM(\Delta)$.
\end{enumerate}
\end{defn}

\begin{exs}
Clearly our net of algebras $\Lambda\mapsto \cA(\Lambda)$ from our LTO $(\fA,p)$ in the ground state $\psi$ satisfies \ref{M:Isotony} and \ref{M:Locality}.
The following models were shown to satisfy the following properties in addition:
\begin{itemize}
\item 
1D fusion spin chains satisfy \ref{M:HD} and \ref{M:Coarse} \cite{MR4814692}.
\item The Toric Code satisfies \ref{M:Coarse} \cite{MR2804555} and \ref{M:HD} and \ref{M:ApproxSplit} \cite{MR2956822}.
Note that the proof of \ref{M:ApproxSplit}  in \cite{MR2956822} is constructive.
\item The Quantum Double Model for finite groups $G$ satisfies \ref{M:Coarse} \cite{MR3426207} and \ref{M:HD} and \ref{M:ApproxSplit} \cite{MR3426207, 2509.23734}.
In this case, the proof of \ref{M:ApproxSplit} factors through \ref{D:Factors} below.
\item Levin-Wen models satisfy \ref{M:HD} \cite{2509.23734} and \ref{M:Coarse} and \ref{M:ApproxSplit} \cite{2511.08382}.
\item Braided categorical nets satisfy \ref{M:HD} \cite{2506.19969}.
\end{itemize}
Additionally, \ref{M:DisjointAdd} holds for any quantum spin system, since in this case 
\[
\cM(\Lambda) \vee \cM(\Delta) = \bigvee_{j \in \Lambda \cup \Delta} \fA(\{j\}) = \cM(\Lambda \cup \Delta).
\]
\end{exs}

\begin{rem}
The precise definition of 
`sufficiently separated' in~\ref{M:DisjointAdd} 
and 
`sufficiently contained' in~\ref{M:ApproxSplit} 
depends on the model under consideration.
For concrete 2D models, it is given in terms of a minimum distance between the boundaries of the cones.
For the Levin--Wen model or the Toric Code, this distance can be found explicitly~\cite{MR2956822,2511.08382}.
For more general models in the same phase, one can then use locality bounds for the automorphism connecting the two states, see for example~\cite{MR4426734}.
\end{rem}

\begin{rem}
As we focus on nets of algebras associated to connected regions in this article, we will not explicitly study \ref{M:DisjointAdd} here.
However, it is an interesting condition for fusion spin chains and anyon spin chains \cite{MR4814692} that is not always satisfied.
We refer the reader to \cite{2503.20863} for various conjectures about when \ref{M:HD} and \ref{M:DisjointAdd} hold for such nets of algebras. 
These conjectures are the subject of the upcoming article \cite{FusionNetsOnS1}.
\end{rem}

\begin{facts}
We state some basic dependencies amongst \ref{M:Isotony}--\ref{M:ApproxSplit} cf.~\cite{MR703083}.
\begin{enumerate}[label=(D\arabic*)]
\item 
\ref{M:Isotony} and \ref{M:HD} imply \ref{M:Locality}.

\item 
\label{D:ProperlyInfiniteCone}
Suppose $\cM$ satisfies \ref{M:Isotony}, \ref{M:HD}, and \ref{M:Coarse}.
If $\Lambda, \Delta$ are sufficiently separated, $\Gamma\leq \Lambda'\cap \Delta'$, and $\cM(\Gamma)$ is properly infinite, then there is a type $\rm I$ factor $\cN$ such that $\cM(\Lambda)\subset \cN\subset \cM(\Delta')$.
\begin{proof}
Observe that 
$$
\cM(\Gamma)
\underset{\text{\ref{M:Isotony}}}{\subseteq} 
\cM(\Lambda')\cap \cM(\Delta') 
\underset{\text{\ref{M:HD}}}{=}
\cM(\Lambda)'\cap \cM(\Delta)'
$$
is also properly infinite.
Now \ref{M:Coarse} gives the hypotheses to apply \cite[Cor.~1(ii)]{MR703083}, which immediately implies the result. 
\end{proof}

\item 
\ref{M:ApproxSplit}
implies
$\cM(\Lambda) \vee \cM(\Delta)' \cong \cM(\Lambda) \otimes \cM(\Delta)'$
by
\cite[p368]{MR703083}.
Using~\ref{M:Locality} one gets~\ref{M:Coarse} for $\widehat{\Delta} \leq \Delta'$, or alternatively, it follows for $\Delta'$ using \ref{M:HD}.

\item 
\label{D:Factors}
If every $\cM(\Lambda)$ is a factor, then \ref{M:HD} and \ref{M:Coarse} together imply \ref{M:ApproxSplit}. 
\begin{proof}
In this case, \ref{M:Coarse} and \cite[Cor.~1(iv)]{MR703083} imply that there exists a type I factor $\cN$ such that $\cM(\Lambda) \subset \cN \subset \cM(\Delta)' = \cM(\Delta')$.
\end{proof}
\end{enumerate}

\end{facts}

\begin{rem}
\label{rem:ASBSHD}
Much can still be said in the case where bounded spread Haag duality in the sense of~\cite{MR4927814} is used instead of Haag duality. 
In this case, dependency \ref{D:ProperlyInfiniteCone} still holds, although more separation between $\Lambda$ and $\Delta$ may be required.
In the case where the algebras $\cM(\Lambda)$ are factors, finer distinctions can be made. 
Replacing Haag duality with bounded spread Haag duality still, together with \ref{M:Coarse}, implies \ref{M:ApproxSplit}. 
However, in this case, more space will be necessary for $\Lambda$ to be `sufficiently contained' in $\Delta$ to account for the spread in the bounded spread Haag duality. 
On the other hand, if the slightly weaker notion of bounded spread Haag duality used in \cite{2511.08382} is assumed, then the approximate split property must be weakened to assume that $\Delta$ is wider than $\Lambda$ by an arbitrarily small amount. 
This weaker form of the approximate split property is the one used in \cite{MR4426734} and is stable, in the sense that all states in the same phase satisfy it.
\end{rem}

We now include some basic results about the \emph{approximate split property} \ref{M:ApproxSplit}.
We define $\omega \coloneqq \omega_{\Omega_\psi}=\langle \Omega_\psi| \cdot|\Omega_\psi\rangle$ on $B(L^2(\fA,\psi))$.

\begin{lem}
\label{lem:SufficientlySeparatedState}
Suppose $(\fA,p)$ is an LTO with canonical state $\psi$. 
Whenever $\Lambda,\Delta$ are sufficiently separated, 
\[
\omega(xy) = \omega(x)\omega(y)
\qquad\qquad\qquad\qquad
\forall\, x\in \cA(\Lambda),\, y\in \cA(\Delta).
\]
\end{lem}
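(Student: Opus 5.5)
We reduce to local operators and then use \ref{LTO:QECC} to factorize the canonical state on bounded regions. The reduction to local operators works as follows. The map $(x,y)\mapsto \omega(xy)-\omega(x)\omega(y)$ is separately $\sigma$-WOT continuous in each variable, since $\omega$ is a vector state and multiplication is separately continuous. By Kaplansky density, $\bigcup_{R\subset\Lambda}\fA(R)$ is $\sigma$-WOT dense in $\cA(\Lambda)$, and similarly for $\Delta$. It therefore suffices to prove $\psi(xy)=\psi(x)\psi(y)$ for $x\in\fA(R)$ and $y\in\fA(T)$, where $R\subset\Lambda$ and $T\subset\Delta$ are bounded disk-like regions. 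Here $\omega$ restricted to $\fA$ is $\psi$. For this reduction we interpret `sufficiently separated' so that such $R,T$ are at distance more than $2s$ (plus the largeness constant $r$) from each other.

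For the factorization, we choose a large disk-like region $S$ with $R\cup T\ll_s S$. We also choose disk-like regions $S_R\supset R$ and $S_T\supset T$ with $R\ll_s S_R$, $T\ll_s S_T$, and $S_R\cap S_T=\emptyset$; the separation hypothesis ensures they fit. Put $U$ for the union of an $s$-neighborhood of $R$ and one of $T$. Then \ref{LTO:QECC} gives $p_{S_R}xp_{S_R}=\psi(x)p_{S_R}$ and $p_{S_T}yp_{S_T}=\psi(y)p_{S_T}$, and $p_{S_R}$ commutes with $y$ by locality. Since $p_S\le p_{S_R}p_{S_T}$ and the two local projections commute, we compute
\[
\psi(xy)p_S=p_Sxy\,p_S=p_S\,p_{S_T}p_{S_R}\,xy\,p_{S_R}p_{S_T}\,p_S
=p_S\,p_{S_T}\bigl(p_{S_R}xp_{S_R}\bigr)y\,p_{S_T}\,p_S .
\]
This equals $\psi(x)\,p_S\,p_{S_T}yp_{S_T}\,p_S=\psi(x)\psi(y)p_S$. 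Here we used that $p_{S_R}$ commutes with $y$ and with $p_{S_T}$, and that $p_{S_R}^2=p_{S_R}$.

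The main obstacle is that $p_S$ is not in general a product of $p_{S_R}$ and $p_{S_T}$. We only have $p_S\le p_{S_R}$ and $p_S\le p_{S_T}$, and we need these two projections to commute so that the middle step is legitimate. Commutation of $p_{S_R}$ with $p_{S_T}$ follows from locality, since $S_R\cap S_T=\emptyset$. With that in hand, sandwiching by $p_S$ absorbs both, so the computation goes through. The one point to verify carefully is that disjoint regions $S_R,S_T$ completely surrounding $R$ and $T$, and themselves sufficiently large, exist whenever the separation exceeds roughly $2s+2r$. This is what `sufficiently separated' should be taken to mean.
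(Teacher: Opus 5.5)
Your proof is correct and is essentially the paper's argument: sandwich $xy$ by $p_S$ for a large region $S$, insert the projection of a region completely surrounding $R$ that misses the support of $y$, apply \ref{LTO:QECC} and locality, and then extend from local operators to $\cA(\Lambda)$ and $\cA(\Delta)$ using density and normality of $\omega$. The one difference is that your second auxiliary region $S_T$ is superfluous, because $p_S y p_S=\psi(y)p_S$ already follows from $T\ll_s S$, so the commutation of $p_{S_R}$ with $p_{S_T}$ that you flag as the main obstacle never needs to arise; also, choose $S$ after $S_R$ so that $S_R\subset S$, since that containment is what gives $p_S\le p_{S_R}$.
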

\begin{proof}
We first check that $\psi(xy) = \psi(x) \psi(y)$ for $x \in \fA(R)$ and $y \in \fA(S)$, where $R \subset \Lambda$ and $S \subset \Delta$ are disk-like.
Let $\widehat{R}$ be disjoint from $S$ such that $R \ll_s \widehat{R}$. 
(Such an $\widehat{R}$ exists because $R \subset \Lambda$, $S \subset \Delta$, and $\Lambda$ and $\Delta$ are sufficiently separated.)
Let $\widehat{S}$ be such that $R, S \ll_s \widehat{S}$. 
For all $x \in \fA(R)$ and $y \in \fA(S)$, 
\[
\psi(x y) p_{\widehat{S}}
=
p_{\widehat{S}} xy p_{\widehat{S}}
=
p_{\widehat{S}} p_{\widehat{R}} x p_{\widehat{R}} y p_{\widehat{S}}
=
\psi(x) p_{\widehat{S}} p_{\widehat{R}} y p_{\widehat{S}}
=
\psi(x) \psi(y) p_{\widehat{S}},
\]
so $\psi(xy) = \psi(x)\psi(y)$. 
Now, $\cA(\Lambda) = (\bigcup_{R \subset \Lambda} \fA(R))''$ and $\cA(\Delta) = (\bigcup_{S \subset \Delta} \fA(S))''$.
The result thus follows by normality of $\omega=\omega_{\Omega_\psi}$, the Kaplansky Density Theorem, and the fact that multiplication is jointly SOT-continuous on bounded sets.
\end{proof}

The following lemma is well-known; we include a short proof for completeness and convenience of the reader.

\begin{lem}[{\cite[Rem.~p365]{MR703083},~c.f.~\cite{MR100798}}]
\label{lem:SeparatedTensorProductState}
Suppose $M$ is a von Neumann algebra with normal state $\phi$.
If $A,B\subset M$ are commuting von Neumann subalgebras such that
\begin{itemize}
\item $\phi(ab)=\phi(a)\phi(b)$ for all $a\in A$ and $b\in B$, and
\item $\phi$ has central support one on $A\vee B\subset M$,\footnote{Even if $\phi$ has central support one on both $A$ and $B$, it does not necessarily have central support one on $A\vee B$.
A simple counterexample to keep in mind is
$$M=\bbC^3,
\qquad
A=
\operatorname{span}\set{
\begin{bmatrix}
a
\\
b
\\
b
\end{bmatrix}
}{a,b\in\bbC},
\qquad
B=
\operatorname{span}\set{
\begin{bmatrix}
a
\\
a
\\
b
\end{bmatrix}
}{a,b\in\bbC},
\qquad\text{and}\qquad
\phi=\frac{1}{2}
\left\langle
\begin{bmatrix}
1
\\
0
\\
1
\end{bmatrix}
\middle|
\,\,\cdot\,\,
\middle|
\begin{bmatrix}
1
\\
0
\\
1
\end{bmatrix}
\right\rangle.
$$
}
\end{itemize}
then $A\vee B\cong A\otimes B$.
\end{lem}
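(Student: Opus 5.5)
The natural route is to pass to the GNS representation of $A\vee B$ and compare it with the product state on $A\otimes B$. I would set $N\coloneqq A\vee B\subset M$. Since $\phi$ has central support one on $N$, Lemma~\ref{lem:KernelOfActionCentralSupportOfState} gives $z_{\phi|_N}=z([\phi|_N])=1$. Hence the GNS representation $\pi_\phi$ of $N$ on $L^2(N,\phi)$ is faithful and normal, and $N\cong \pi_\phi(N)$. It therefore suffices to show $\pi_\phi(N)\cong A\otimes B$.

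Next I would build the unitary comparing the two Hilbert spaces. Write $\phi_A\coloneqq\phi|_A$ and $\phi_B\coloneqq\phi|_B$. Define
\[
U\colon L^2(A,\phi_A)\otimes L^2(B,\phi_B)\to L^2(N,\phi),
\qquad
a\Omega_{\phi_A}\otimes b\Omega_{\phi_B}\mapsto ab\,\Omega_\phi .
\]
The factorization hypothesis, together with commutativity of $A$ and $B$, shows that $U$ is isometric on finite sums:
\[
\phi\bigl(b_1^*a_1^*a_2b_2\bigr)=\phi\bigl((a_1^*a_2)(b_1^*b_2)\bigr)=\phi(a_1^*a_2)\,\phi(b_1^*b_2).
\]
So $U$ is well defined and isometric. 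Its range contains $AB\,\Omega_\phi$, and the span of $AB$ is a weakly dense $*$-subalgebra of $N$, so $U$ is onto. Moreover
\[
U^*\pi_\phi(a)\pi_\phi(b)U=\pi_{\phi_A}(a)\otimes\pi_{\phi_B}(b).
\]
Taking weak closures gives $\pi_\phi(N)\cong\pi_{\phi_A}(A)\,\bar\otimes\,\pi_{\phi_B}(B)$, the spatial tensor product.

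The remaining and main obstacle is to identify $\pi_{\phi_A}(A)$ with $A$, i.e.\ to show that $\phi_A$ has central support one in $A$, and likewise for $B$. Footnote-style pitfalls suggest this is not automatic, so I would argue directly. Let $z\in Z(A)$ with $\pi_{\phi_A}(z)=0$, i.e.\ $\phi(a^*za)=0$ for all $a\in A$. Then for all $a,a'\in A$ and $b,b'\in B$,
\[
\phi\bigl((ab)^*z(a'b')\bigr)=\phi(a^*za')\,\phi(b^*b')=0,
\]
because Cauchy--Schwarz gives $\phi(a^*za')=\phi((za)^*(za'))=0$ from $\phi((za)^*za)=0$. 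Hence $\pi_\phi(z)=0$ on $L^2(N,\phi)$. Faithfulness of $\pi_\phi$ on $N$ then forces $z=0$. So $\pi_{\phi_A}$ is faithful on $A$, similarly $\pi_{\phi_B}$ is faithful on $B$, and both are normal isomorphisms onto their images.

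Finally, since the spatial tensor product does not depend on the choice of faithful normal representations, we get $A\vee B\cong\pi_\phi(N)\cong A\otimes B$. This isomorphism is implemented by $ab\mapsto a\otimes b$.
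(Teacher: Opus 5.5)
Your proposal is correct and follows essentially the paper's route: reduce to $A\vee B$, get a faithful GNS representation from Lemma~\ref{lem:KernelOfActionCentralSupportOfState}, check that $\phi|_A$ and $\phi|_B$ also have faithful GNS representations, and use the spatial unitary $a\Omega_A\otimes b\Omega_B\mapsto ab\,\Omega$. The only cosmetic difference is in the middle step: you show that elements in the kernel of $\pi_{\phi_A}$ act as zero on $L^2(A\vee B,\phi)$, whereas the paper notes that $z_Az_B$ is a central projection of $A\vee B$ with $\phi(z_Az_B)=1$, which forces $z_A=z_B=1$.
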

\begin{proof}
We may assume that $M=A\vee B$.
Since $\phi$ has central support one on $M$, $M$ acts faithfully on $L^2(M,\phi)$ by Lemma \ref{lem:KernelOfActionCentralSupportOfState}.
We also claim that $\phi$ has central support one on both $A$ and $B$.
Indeed, let $z_A\in Z(A)$ and $z_B\in Z(B)$ be the central supports of $\phi|_A$ and $\phi|_B$.
Then since $z_Az_B\in P(Z(M))$ and
\[
\phi(z_Az_B)=\phi(z_A)\phi(z_B) = 1,
\]
we have $1=z_\phi\leq z_Az_B$ by Lemma \ref{lem:KernelOfActionCentralSupportOfState}, which implies $z_A=1_A$ and $z_B=1_B$.

Hence we have a spatial isomorphism
\[
L^2(A,\phi)\otimes L^2(B,\phi)
\xrightarrow{ a \Omega_A\otimes  b \Omega_B \mapsto ab\Omega_M}
L^2(M,\phi)
\]
that clearly intertwines the $A\otimes 1$ and $1\otimes B$ actions on $L^2(A,\phi)\otimes L^2(B,\phi)$
with the $A$ and $B$ actions on $L^2(M,\phi)$.
The result follows.
\end{proof}

\begin{cor}
Whenever $\omega$ has central support one on $\cA(\Lambda)\vee\cA(\Delta)$ for sufficiently separated $\Lambda,\Delta$, then 
$\cA(\Lambda)\vee\cA(\Delta)\cong \cA(\Lambda)\otimes \cA(\Delta)$, i.e., \ref{M:Coarse} holds.
\end{cor}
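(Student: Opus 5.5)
This follows by applying Lemma \ref{lem:SeparatedTensorProductState} with $M\coloneqq B(L^2(\fA,\psi))$, normal state $\phi\coloneqq\omega=\omega_{\Omega_\psi}$, $A\coloneqq\cA(\Lambda)$, and $B\coloneqq\cA(\Delta)$. I will check the lemma's three hypotheses in turn and then read off the conclusion.

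First, $A$ and $B$ commute. Since $\Lambda,\Delta$ are sufficiently separated, they are disjoint. Hence for disk-like $R\subset\Lambda$ and $S\subset\Delta$ we have $R\cap S=\emptyset$, and locality of the net gives $[\fA(R),\fA(S)]=0$. The unions $\bigcup_{R\subset\Lambda}\fA(R)$ and $\bigcup_{S\subset\Delta}\fA(S)$ therefore commute, and taking commutants shows that their bicommutants $\cA(\Lambda)$ and $\cA(\Delta)$ commute. (This is just \ref{M:Locality}.)

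Second, the factorization $\omega(xy)=\omega(x)\omega(y)$ for $x\in\cA(\Lambda)$ and $y\in\cA(\Delta)$ is exactly Lemma \ref{lem:SufficientlySeparatedState}.

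Third, $\omega$ has central support one on $A\vee B=\cA(\Lambda)\vee\cA(\Delta)$; this is the standing hypothesis of the corollary. With all three hypotheses in hand, Lemma \ref{lem:SeparatedTensorProductState} gives $\cA(\Lambda)\vee\cA(\Delta)\cong\cA(\Lambda)\otimes\cA(\Delta)$.

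It remains to make sure this is the isomorphism \ref{M:Coarse} asks for, namely one extending $xy\mapsto x\otimes y$. The lemma supplies a unitary $L^2(A,\omega)\otimes L^2(B,\omega)\cong L^2(A\vee B,\omega)$ that intertwines the actions of $A$ and $B$. Under this unitary, $x\otimes y$ acting on the left corresponds to the product $xy$. Moreover, $A\vee B$ acts faithfully on $L^2(A\vee B,\omega)$, because $\omega$ has central support one there (Lemma \ref{lem:KernelOfActionCentralSupportOfState}). So the isomorphism is implemented spatially, as \ref{M:Coarse} requires.

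Every step is a direct citation of an earlier result, so I expect no substantive obstacle. The only point needing any care is the last one: confirming that the abstract isomorphism is the spatial one demanded by \ref{M:Coarse}, which the faithfulness just noted settles.
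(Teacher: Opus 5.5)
Your proposal is correct and follows the paper's proof, which simply combines Lemma \ref{lem:SufficientlySeparatedState} (factorization of $\omega$) with Lemma \ref{lem:SeparatedTensorProductState}. Your explicit commutation check via locality and your remark on spatiality are details the paper leaves implicit; the spatiality is already contained in the proof of Lemma \ref{lem:SeparatedTensorProductState}.
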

\begin{proof}
Immediate from Lemmas \ref{lem:SufficientlySeparatedState} and \ref{lem:SeparatedTensorProductState}.
\end{proof}

\begin{rem}
For an actual quantum spin system (that is, when $\fA = \bigotimes_{v \in \bbZ^n} M_{n_v}$), $\cA(\Lambda) \vee \cA(\Delta)$ is necessarily a factor, so $\omega$ must have central support one. 
In this case, assuming Haag duality \ref{M:HD},  \ref{M:Coarse} and \ref{M:ApproxSplit} hold by \ref{D:Factors}, with a slightly weaker result in the case of bounded spread Haag duality (see Remark \ref{rem:ASBSHD}).
The proof for the spin system case is the subject of \cite[Thm.~3.29]{2511.08382}.
\end{rem}

\subsection{The boundary algebra as a compression}

For each cone-like region $\Lambda$, we define $p_\Lambda\coloneqq  \bigwedge_{R\subset \Lambda} p_R$, the infimum over all  $p_R$ associated to disk-like regions $R\subset \Lambda$.
Note that $p_\Lambda$ is the $\sigma$-WOT/SOT limit of the projections $p_R$ as the $R$ increase to $\Lambda$.
Since $\omega=\langle \Omega_\psi| \cdot |\Omega_\psi\rangle$,  $\omega(p_\Lambda)=1$ for every cone $\Lambda$.

\begin{lem}
\label{lem:BoundaryCommutesWithConeProjection}
$p_\Lambda$ commutes with $\fB(\Lambda)$ on $L^2(\fA,\psi)$.
\end{lem}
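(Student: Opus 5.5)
It suffices to show that each generator of $\fB(\Lambda)$ commutes with every $p_R$ for $R$ ranging over a cofinal family of disk-like regions $R\subset\Lambda$. Since $p_\Lambda$ is the SOT-limit of the decreasing net $p_R$, commutation then passes to $p_\Lambda$. Because $\fB(\Lambda)$ is the norm closure of the union of the local boundary algebras $\fB(I)$ over intervals $I\subset\partial\Lambda$, it is also enough to handle a fixed local element $b\in\fB(I)$. Commutation then extends to $\fB(\Lambda)$ by norm continuity.

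Fix $b\in\fB(I)$, and realize it as $b=xp_S$ with $x\in p_{R_0}\fA(R_0)p_{R_0}$, where $R_0\Subset_s S\subset\Lambda$ and $I=\partial R_0\cap\partial S\subset\partial\Lambda$. By the definition of $\fB(R_0\Subset_s S)$, $x$ commutes with $p_{\widehat S}$ whenever $R_0\Subset_s\widehat S$ with the same boundary face $\partial R_0\cap\partial\widehat S=I$. I will use the following cofinality claim: every disk-like $R\subset\Lambda$ is contained in some disk-like $\widehat S\subset\Lambda$ satisfying $S\subseteq\widehat S$, $R_0\Subset_s\widehat S$, and $\partial R_0\cap\partial\widehat S=I$. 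One builds $\widehat S$ by taking a large disk-like region in $\Lambda$ that contains $R\cup S$ and hugs $\partial\Lambda$ along $I$. Such regions exist because $\Lambda$ is a sufficiently large union of its disk-like regions and $\partial[\Lambda]\cong\bbR$. For these $\widehat S$ we get $p_{\widehat S}\le p_S$, hence $p_{\widehat S}=p_Sp_{\widehat S}=p_{\widehat S}p_S$. So
\[
b\,p_{\widehat S}=xp_Sp_{\widehat S}=xp_{\widehat S}=p_{\widehat S}x=p_{\widehat S}p_Sx .
\]
We also need $p_S$ to commute with $x$; this holds because $S$ itself is one of the admissible $\widehat S$. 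Therefore $p_Sx=xp_S=b$, and $b$ commutes with $p_{\widehat S}$ for a cofinal family of regions exhausting $\Lambda$. It follows that $b$ commutes with $p_\Lambda=\bigwedge p_{\widehat S}$, since SOT-limits of commuting bounded nets preserve commutation. Note that $b$ itself lies in $\fA$ and acts on $L^2(\fA,\psi)$ through the GNS representation, so all of these identities hold there.

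The main obstacle is the geometric cofinality claim. We need $\widehat S$ to be disk-like and contained in $\Lambda$, to exhaust $\Lambda$, and to keep $\partial R_0\cap\partial\widehat S$ equal to $I$, which requires $\widehat S$ to stay far (distance $\ge s$) from $R_0$ everywhere except along $I$. I would build it explicitly as follows: start from $S$ and enlarge it only in directions away from $I$ while keeping the face containing $I$ on $\partial\Lambda$. This is possible because $I\subset\partial\Lambda$ and $\Lambda$ lies on one side of $\partial[\Lambda]$. If this construction is awkward for general cone-like regions, a fallback is available. Approximate $p_\Lambda$ by any increasing sequence of disk-like $R_n$. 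Then $p_{\widehat S_n}\le p_{R_n}$ and $p_{\widehat S_n}\to p_\Lambda$ once $\widehat S_n\supseteq R_n$ with $\widehat S_n\subset\Lambda$, so only the existence of one such $\widehat S_n$ per $n$ is needed.
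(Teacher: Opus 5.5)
Your proposal is correct and follows essentially the same route as the paper: show that each local boundary element commutes with $p_{\widehat S}$ for the admissible regions $\widehat S\supseteq S$ inside $\Lambda$, pass to the SOT-limit $p_{\widehat S}\searrow p_\Lambda$, and then extend over $\fB(\Lambda)=\varinjlim \fB(I)$. You are slightly more careful than the paper in checking that the representative $xp_S$ (not just $x$) commutes with $p_{\widehat S}$, which you do using $p_{\widehat S}\le p_S$ and the fact that $S$ itself is admissible.
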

\begin{proof}
Suppose $x\in \fB(I)$ where $I=\partial R\cap \partial S$ for $R\Subset S\subset \Lambda$.
Then for every $S\subset T\subset \Lambda$, it follows that
$I=\partial R\cap \partial T$, and thus $[x,p_{T}]=0$ by definition of $\fB(I)$.
Since $p_T\searrow p_\Lambda$ as $T$ increases to $\Lambda$, 
\[
p_\Lambda x = \lim{}^{SOT}\, p_Tx = \lim{}^{SOT}\, xp_T = xp_\Lambda.
\]
Since $\fB(\Lambda)=\varinjlim \fB(I)$, the result follows.
\end{proof}

\begin{prop}
\label{prop:CornerOfC*Boundary}
$\overline{\fB(\Lambda)p_\Lambda}^{\| \cdot \|} =\overline{p_\Lambda \fA(\Lambda)p_\Lambda}^{\| \cdot \|}$
as operators on $L^2(\fA,\psi)$.
\end{prop}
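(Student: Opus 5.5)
We establish the two inclusions of dense subspaces: we show that $\fB(I)p_\Lambda \subseteq p_\Lambda \fA(R) p_\Lambda$ for suitable $R$, and that $p_\Lambda \fA(R) p_\Lambda \subseteq \fB(\Lambda) p_\Lambda$ for every disk-like $R\subset\Lambda$. Both sides are then norm closures of increasing unions over $R$ (respectively over intervals $I\subset\partial\Lambda$), so the two closures coincide. The key tool is the defining relation $\bbE(x)p_S = p_S x p_S$ for $x\in\fA(R)$ with $R\Subset_s S\subset\Lambda$, combined with $p_\Lambda\le p_S$. The latter means $p_\Lambda = p_\Lambda p_S = p_S p_\Lambda$.

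For the inclusion ``$\subseteq$'', take $b\in\fB(I)$ with $I=\partial R\cap\partial S$ and $R\Subset_s S\subset\Lambda$. By \ref{LTO:Boundary} and the construction of $\fB(I)$ as the image of $\bbE$, we can write $b=\bbE(x)$ for some $x\in\fA(R)$; in fact $b\in p_R\fA(R)p_R$ and $bp_S=p_S b$. Then $p_S b p_S = b p_S = p_S x p_S$. Multiplying on both sides by $p_\Lambda$ and using Lemma~\ref{lem:BoundaryCommutesWithConeProjection} gives
\[
b p_\Lambda = p_\Lambda b p_\Lambda = p_\Lambda p_S b p_S p_\Lambda = p_\Lambda p_S x p_S p_\Lambda = p_\Lambda x p_\Lambda \in p_\Lambda\fA(\Lambda)p_\Lambda .
\]
Since $\fB(\Lambda)$ is the norm closure of $\bigcup_I \fB(I)$, this inclusion passes to the closures.

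For the inclusion ``$\supseteq$'', let $x\in\fA(R)$ with $R\subset\Lambda$ disk-like. We choose a disk-like $S\subset\Lambda$ and an enlarged $\widetilde R\supseteq R$ with $\widetilde R\Subset_s S$ and $I=\partial\widetilde R\cap\partial S\subset\partial\Lambda$. If instead $R$ is far from $\partial\Lambda$, we may take $R\ll_s S$. In that case \ref{LTO:QECC} gives $p_S x p_S=\psi(x)p_S$, so $p_\Lambda x p_\Lambda=\psi(x)p_\Lambda$, which lies in $\fB(\Lambda)p_\Lambda$ because $1\in\fB(\Lambda)$. In the weakly surrounded case, \ref{LTO:Boundary} gives $p_S x p_S = \bbE(x) p_S$ with $\bbE(x)\in\fB(I)$. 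Compressing by $p_\Lambda$ yields $p_\Lambda x p_\Lambda = p_\Lambda \bbE(x) p_\Lambda = \bbE(x)p_\Lambda$, again by Lemma~\ref{lem:BoundaryCommutesWithConeProjection}. Taking norm closures of the union over $R$ then gives the reverse inclusion, since compression by $p_\Lambda$ is norm continuous and $\fA(\Lambda)$ is the closure of $\bigcup_R \fA(R)$.

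The main obstacle is the geometric step: producing, for an arbitrary disk-like $R\subset\Lambda$, a disk-like $S\subset\Lambda$ such that $R$ sits inside some $\widetilde R\Subset_s S$ (or $R\ll_s S$), with $\partial\widetilde R\cap\partial S$ a single disk-like piece of the hyperplane-type boundary. Here we must use that $\Lambda$ is a cone-like region as in Construction~\ref{construction:ConeLikeRegions}, namely that it is sufficiently large, a union of its sufficiently large disk-like regions, and has boundary homeomorphic to $\bbR$. We would also use isotony $\fA(R)\subset\fA(\widetilde R)$. A subtlety is that $\partial\Lambda$ need not be flat, whereas the definition of $\Subset_s$ requires $I$ to lie in a hyperplane. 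If a single $S$ cannot be found, the first fallback is to cover $R\cap\partial\Lambda$ by finitely many flat pieces. A cleaner alternative is to choose $\widetilde R$ and $S$ so that they meet only along a flat face of $[\Lambda]$ near $R$, which is possible because cone-like boundaries are locally flat away from finitely many corners. Everything else is formal.
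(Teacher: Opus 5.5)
Your argument is essentially the paper's proof. You get $\fB(\Lambda)p_\Lambda\subseteq p_\Lambda\fA(\Lambda)p_\Lambda$ from Lemma~\ref{lem:BoundaryCommutesWithConeProjection}, and the reverse inclusion from $p_\Lambda x p_\Lambda=p_\Lambda p_S x p_S p_\Lambda\in\fB(R\Subset S)p_\Lambda$ via \ref{LTO:Boundary}, then take norm closures. Your explicit handling of regions $R$ that do not meet $\partial\Lambda$ (via \ref{LTO:QECC}, or by enlarging $R$ to $\widetilde R$) is a bit more careful than the paper, which simply asserts that some disk-like $S\subset\Lambda$ with $R\Subset S$ exists.

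The one part to discard is your proposed handling of the flatness issue, because neither fallback works.
\begin{itemize}
\item Take $z\in R\cap\partial\Lambda$, and any $\widetilde R\supseteq R$ and disk-like $S\subset\Lambda$ containing $\widetilde R$. A cube adjacent to $[z]$ belongs to a site outside $\Lambda$, hence outside $S$, so $z\in\partial\widetilde R\cap\partial S$. Thus $I$ always contains $R\cap\partial\Lambda$.
\item If $R\cap\partial\Lambda$ is not contained in a single hyperplane, no choice of $\widetilde R,S$ makes $I$ flat. This is the typical situation for large $R$: a generic cone-like region has a staircase boundary $\partial[\Lambda]$ with infinitely many corners, not finitely many.
\item Covering $R\cap\partial\Lambda$ by several flat intervals does not help either. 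A general $x\in\fA(R)$ cannot be split into pieces localized near each interval, so \ref{LTO:Boundary} cannot be applied piecewise.
\end{itemize}

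The paper sidesteps this by reading $\Subset$, for subregions of a cone, with $I=\partial R\cap\partial S$ a disk-like interval of $\partial\Lambda$ that need not be flat. That reading is already forced elsewhere in the paper: by the definition $\fB(\Lambda)=\varinjlim\fB(I)$ over intervals $I\subset\partial\Lambda$ (flat pieces of a staircase cannot exhaust $\partial\Lambda$), and by the formulation of \ref{LTO:HD} for general cones. Under that reading, it suffices to choose $\widetilde R\supseteq R$ with $\widetilde R\cap\partial\Lambda$ a single interval and a thickening $S\subset\Lambda$. The rest of your proof then goes through unchanged.
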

\begin{proof}
Since $p_\Lambda$ and $\fB(\Lambda)$ commute by Lemma \ref{lem:BoundaryCommutesWithConeProjection},  $\fB(\Lambda)p_\Lambda \subset p_\Lambda \fA(\Lambda)p_\Lambda$, and thus $\overline{\fB(\Lambda)p_\Lambda} \subseteq \overline{p_\Lambda \fA(\Lambda)p_\Lambda}$.
Conversely, for each disk-like region $R\subset \Lambda$, we can find a disk-like region $S\subset\Lambda$ with  $R\Subset S$.
So for every $x\in \fA(R)$, 
$$
p_\Lambda xp_\Lambda = p_\Lambda p_Sxp_Sp_\Lambda \in \fB(R\Subset S)p_\Lambda \subset \fB(\Lambda)p_\Lambda.
$$
Then
$$
p_\Lambda\fA(\Lambda)p_\Lambda
=
p_\Lambda\overline{\bigcup_{R\subset \Lambda} \fA(R)}p_\Lambda 
\subseteq 
\overline{\bigcup_{R\subset \Lambda} p_\Lambda\fA(R)p_\Lambda}
\subseteq 
\overline{\fB(\Lambda)p_\Lambda},
$$ 
so $\overline{p_\Lambda\fA(\Lambda)p_\Lambda}\subseteq \overline{\fB(\Lambda)p_\Lambda}$. 
\end{proof}

\begin{cor}
\label{cor:pLambdaInjective}
If $\psi$ is faithful on $\fB(\Lambda)$, then the multiplication map $(\,\cdot\,) p_\Lambda : \fB(\Lambda)\to \fB(\Lambda)p_\Lambda$ is injective and $\fB(\Lambda)p_\Lambda =p_\Lambda \fA(\Lambda)p_\Lambda$ are both closed.
\end{cor}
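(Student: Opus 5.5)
The plan is to prove injectivity first and then deduce closedness from it, using only that $\fB(\Lambda)$ is a $\rmC^*$-algebra and that $x\mapsto xp_\Lambda$ is a $*$-homomorphism.

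\textbf{Step 1: the map is a $*$-homomorphism.} By Lemma \ref{lem:BoundaryCommutesWithConeProjection}, $p_\Lambda$ commutes with $\fB(\Lambda)$ on $L^2(\fA,\psi)$. Hence $x\mapsto xp_\Lambda$ is a $*$-homomorphism from $\fB(\Lambda)$ into $B(L^2(\fA,\psi))$, since
$$
xp_\Lambda\, yp_\Lambda = xy p_\Lambda
\qquad\text{and}\qquad
(xp_\Lambda)^*=x^*p_\Lambda.
$$
I interpret $\fB(\Lambda)$ as its $\rmC^*$-completion, i.e.\ the inductive limit $\rmC^*$-algebra. The map is then also well defined and continuous on the completion.

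\textbf{Step 2: injectivity.} Suppose $xp_\Lambda=0$. Then $x^*x\,p_\Lambda=0$ as well. Since $\omega(p_\Lambda)=1$, we have $p_\Lambda\Omega_\psi=\Omega_\psi$. Therefore
$$
\psi(x^*x)=\langle \Omega_\psi| x^*x|\Omega_\psi\rangle = \langle \Omega_\psi| x^*xp_\Lambda|\Omega_\psi\rangle=0.
$$
Faithfulness of $\psi$ on $\fB(\Lambda)$ gives $x=0$. This is the one place the hypothesis enters, so it is the main step.

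There is a subtlety: elements of $\fB(I)$ are defined as operators $\bbE(x)$ that are compressions to $p_S$. So I must make sure that "faithful on $\fB(\Lambda)$" refers to $\psi$ evaluated on the abstract $\rmC^*$-algebra, via its GNS image on $L^2(\fA,\psi)$. I would note that $\fB(I)\subset \fA$ is a genuine subalgebra, so $\psi$ restricts to it and acts through the GNS representation, and the computation above is legitimate.

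\textbf{Step 3: closedness.} An injective $*$-homomorphism between $\rmC^*$-algebras is isometric, so its range $\fB(\Lambda)p_\Lambda$ is norm closed. Proposition \ref{prop:CornerOfC*Boundary} then gives
$$
p_\Lambda\fA(\Lambda)p_\Lambda \subseteq \overline{p_\Lambda\fA(\Lambda)p_\Lambda} = \overline{\fB(\Lambda)p_\Lambda}=\fB(\Lambda)p_\Lambda.
$$
For the reverse inclusion, $\fB(\Lambda)p_\Lambda\subseteq p_\Lambda\fA(\Lambda)p_\Lambda$ holds because $xp_\Lambda=p_\Lambda xp_\Lambda$ with $x\in\fA(\Lambda)$ (using $\fB(\Lambda)\subseteq\fA(\Lambda)$ from the construction). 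Hence $\fB(\Lambda)p_\Lambda = p_\Lambda\fA(\Lambda)p_\Lambda$, and both are closed.
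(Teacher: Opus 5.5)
Your proof is correct and follows the paper's argument: the key computation $\psi(x^*x)=\langle\Omega_\psi|x^*x\,p_\Lambda|\Omega_\psi\rangle=0$ combined with faithfulness, then injective $\Rightarrow$ isometric $\Rightarrow$ closed range, and then the sandwich of inclusions from Proposition~\ref{prop:CornerOfC*Boundary}. The only difference is that the paper proves injectivity on each local $\fB(R\Subset S)$ and passes to $\varinjlim\fB(I)$ via isometry, so it needs faithfulness of $\psi$ only on the local algebras, whereas you apply faithfulness directly on the $\rmC^*$-completion.
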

\begin{proof}
Suppose $I\subset \partial \Lambda$ is an interval and $R\Subset S\subset \Lambda$ with $I=\partial R\cap \partial S$.
We claim that multiplication by $p_\Lambda$ on $\fB(R\Subset S)\to \fB(R\Subset S)p_\Lambda$ is injective.
Indeed,
if $x p_\Lambda = 0$, then 
$$
\omega(p_\Lambda x^* x p_\Lambda) = \omega(x^*x) = \psi(x^*x) = 0,
$$ 
where the first equality follows from \cite[\textsection 2.1.1]{MR2345476} (see \cite[Lem.~2.22]{MR4945955}).
By faithfulness of $\psi$ on $\fB(\Lambda)$, $x^*x = 0$ so $x =0$.

Now since $\fB(\Lambda)=\varinjlim \fB(I)$ and injective maps are isometric, it follows that multiplication by $p_\Lambda$ on $\fB(\Lambda)$ is injective and $\fB(\Lambda)p_\Lambda$ is closed.
It follows immediately that
$$
\fB(\Lambda)p_\Lambda 
\subseteq 
p_\Lambda \fA(\Lambda)p_\Lambda
\subseteq 
\overline{\bigcup_{R\subset \Lambda} p_\Lambda\fA(R)p_\Lambda}
\subseteq 
\fB(\Lambda)p_\Lambda,
$$
so all these inclusions are equalities.
\end{proof}

\begin{rem}
The article \cite[\textsection 6]{2507.03201} introduces a strengthened \ref{LTO:Injective} axiom in order to make a connection between LTO boundary algebras and their frustration-free Hamiltonian setup.
We can state a version of their strengthened axiom for abstract quantum spin systems as follows:
\begin{enumerate}[label=\textup{(LTO4$^{+}$)}]
\item
\label{LTO:SoupedUpInjective}
For a cone-like region $\Lambda$, let $p^{**}_\Lambda$ be the limit of the projections $p_R$ associated to disk-like regions $R\subset \Lambda$ as they increase to $\Lambda$, taken  $\sigma$-WOT in the universal enveloping von Neumann algebra $\fA(\Lambda)^{**}$.
Whenever $R\Subset S\subset \Lambda$ with $\partial R\cap \partial S \subset \partial \Lambda$,  
the map $x\mapsto xp^{**}_\Lambda$ is injective on $\fB(R \Subset S)$.
\end{enumerate}
In \cite[Thm.~6.24]{2507.03201}, it is proven for a quantum spin system that assuming \ref{LTO:SoupedUpInjective},
$$
\overline{p_{\bbH}^{**} \fA(\bbH)p_{\bbH}^{**}}^{\| \cdot \|} = \overline{\fB p_{\bbH}^{**}}^{\| \cdot \|}.
$$ 
They further conjecture that our axioms \ref{LTO:QECC}--\ref{LTO:Injective} are sufficient for their result \cite[Conj.~6.20]{2507.03201}.

Now clearly \ref{LTO:SoupedUpInjective} implies \ref{LTO:Injective}.
The converse is true provided $\psi$ is faithful on $\fB(\Lambda)$;
one can even show that multiplication by $p_\Lambda^{**}$ is injective on $\fB(\Lambda)$.
Indeed, by the Sherman--Takeda Theorem,\footnote{Suppose $A$ is a $\rmC^*$-algebra and $\varphi$ is a state on $A$.
By the Sherman--Takeda Theorem \cite[Thm.~III.2.4]{MR1873025}, we may identify the universal enveloping von Neumann algebra $A^{**}$ with $A''$ in its universal representation:
\[
\pi_u :A \to B\left(\bigoplus_{\phi \in S(A)} L^2(A,\phi)\right).
\]
Setting $\rmW^*_\varphi(A)\coloneqq A''$ acting on $L^2(A,\varphi)$,
by universality (see \cite[Defn.~III.2.3]{MR1873025}), there is a surjective $\sigma$-WOT continuous map $A^{**}\to \rmW^*_\varphi(A)$.
Thus there is a central projection $z\in Z(A^{**})$ such that $\rmW^*_\varphi(A) \cong zA^{**}$.}
we may identify $\cA(\Lambda) = z\fA(\Lambda)^{**}$
for some central projection $z\in Z(\fA(\Lambda)^{**})$.
Since $p_R\searrow p^{**}_\Lambda$ in $\fA(\Lambda)^{**}$ and $p_Rz\searrow p_\Lambda$ in $\cA(\Lambda) = \fA(\Lambda)^{**}z$, we have $p_\Lambda = p_\Lambda^{**}z$.
So if $x\in \fB(\Lambda)$ such that $xp_\Lambda^{**}=0$, then $xp_\Lambda=xp_\Lambda^{**}z=0$, 
and the result follows by Corollary \ref{cor:pLambdaInjective}.

\end{rem}

The following two properties that an LTO $(\fA,p)$ might satisfy in the canonical state $\psi$ for a cone-like region $\Lambda$ will be of particular importance.
\begin{enumerate}[label=\textup{($z_\psi$\arabic*)}]
\item 
\label{LTO:CentralSupportOne}
$p_\Lambda$ has central support one in $\cA(\Lambda)\coloneqq\fA(\Lambda)''$ acting on $L^2(\fA,\psi)$.
\item 
\label{LTO:FaithfulOnBoundary}
the canonical state $\omega$ is faithful on the compression $\cB(\Lambda)\coloneqq  p_\Lambda \cA(\Lambda)p_\Lambda$. 
\end{enumerate}
One should compare \ref{LTO:CentralSupportOne} with \ref{supp:phi(p)=1} and \ref{LTO:FaithfulOnBoundary} with \ref{supp:phiFaithful}.

\begin{exs}
\label{exs:TopOrderModelsSatisfyExtraAxioms}
    Clearly \ref{LTO:CentralSupportOne} holds whenever each $\cA(\Lambda)$ is a factor, e.g., when $(\fA,p)$ comes from an honest quantum spin system as in Examples \ref{exs:KnownTopologicalCommutingProjectorModels}.\footnote{That $\cA(\Lambda)$ is a factor in this case follows because the canonical state $\psi$ is pure and because disjoint additivity~\ref{M:DisjointAdd} holds.}
Moreover, each of these examples is also known to satisfy \ref{LTO:FaithfulOnBoundary} for every cone-like region $\Lambda$.
\end{exs}

\begin{rem}
Since $\omega(p_\Lambda)=1$, 
we always have
$[\omega|_{\cA(\Lambda)}]\leq p_\Lambda$ by \ref{supp:phi(p)=1}.
Hence \ref{LTO:FaithfulOnBoundary} is equivalent to 
the support $[\omega|_{\cA(\Lambda)}]=p_\Lambda$ by \ref{supp:phiFaithful}.
\end{rem}

\begin{question}
Do \ref{LTO:CentralSupportOne} or \ref{LTO:FaithfulOnBoundary} automatically hold for all LTOs?
\end{question}

We now give some applications of the conditions from \ref{LTO:CentralSupportOne} and/or \ref{LTO:FaithfulOnBoundary}
to the algebras associated to a cone-like region $\Lambda$.

\begin{lem}
\label{lem:Quasiequivalent}
Assuming
\ref{LTO:CentralSupportOne} and \ref{LTO:FaithfulOnBoundary},  
$\cA(\Lambda)\cong \fA(\Lambda)''$ acting on $L^2(\fA(\Lambda),\psi)$, i.e., the representations $L^2(\fA,\psi)$ and $L^2(\fA(\Lambda),\psi)$ of $\fA(\Lambda)$ are quasi-equivalent.
In particular, $\cA(\Lambda)$ acts faithfully on $L^2(\cA(\Lambda),\omega)$. 
\end{lem}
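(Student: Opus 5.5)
The strategy is to identify $L^2(\fA(\Lambda),\psi)$ with the cyclic subspace $K\coloneqq\overline{\cA(\Lambda)\Omega_\psi}\subseteq L^2(\fA,\psi)$, and then show that the projection $p_K\in\cA(\Lambda)'$ onto $K$ has central support one in $\cA(\Lambda)$. The first part is the standard GNS identification. Since $\psi|_{\fA(\Lambda)}$ is the vector state of $\Omega_\psi$, the map $x\Omega_{\psi|_{\fA(\Lambda)}}\mapsto x\Omega_\psi$ for $x\in\fA(\Lambda)$ extends to a unitary $L^2(\fA(\Lambda),\psi)\to K$ intertwining the two actions of $\fA(\Lambda)$. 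Hence $\fA(\Lambda)''$ on $L^2(\fA(\Lambda),\psi)$ is spatially isomorphic to the reduced algebra $\cA(\Lambda)p_K$. Moreover $K\cong L^2(\cA(\Lambda),\omega)$ as $\cA(\Lambda)$-modules. So both claims of the lemma reduce to showing that $x\mapsto xp_K$ is injective on $\cA(\Lambda)$, i.e.\ $z(p_K)=1$.

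By Example \ref{ex:CentralSupportAndL2}, $z(p_K)=z_\omega$, where $z_\omega$ is the central projection with $\ker(\pi_\omega)=\cA(\Lambda)z_\omega^\perp$ for $\omega|_{\cA(\Lambda)}$. By Lemma \ref{lem:KernelOfActionCentralSupportOfState}, $z_\omega=z([\omega|_{\cA(\Lambda)}])$. Next, $\omega(p_\Lambda)=1$ gives $[\omega|_{\cA(\Lambda)}]\le p_\Lambda$ by \ref{supp:phi(p)=1}. The faithfulness assumption \ref{LTO:FaithfulOnBoundary} says $\omega$ is faithful on $p_\Lambda\cA(\Lambda)p_\Lambda$, so \ref{supp:phiFaithful} gives $p_\Lambda\le[\omega|_{\cA(\Lambda)}]$, and therefore $[\omega|_{\cA(\Lambda)}]=p_\Lambda$ (this is the remark preceding the lemma). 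Thus $z_\omega=z(p_\Lambda)$, which equals $1$ by \ref{LTO:CentralSupportOne}. Hence $p_K$ has central support one, $x\mapsto xp_K$ is a faithful normal representation of $\cA(\Lambda)$, and $\cA(\Lambda)\cong\cA(\Lambda)p_K\cong\fA(\Lambda)''$ on $L^2(\fA(\Lambda),\psi)$.

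For quasi-equivalence I would note that the normal isomorphism $\cA(\Lambda)\to\cA(\Lambda)p_K$ restricts to the identity on $\fA(\Lambda)$. That is exactly quasi-equivalence of the two representations of $\fA(\Lambda)$. Faithfulness on $L^2(\cA(\Lambda),\omega)\cong K$ is the same statement, $z_\omega=1$.

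The only delicate step is the bookkeeping of which von Neumann algebra each support or central support lives in. The central support of $p_K$ must be taken in $Z(\cA(\Lambda))$ rather than $Z(\fA'')$, and the support of $\omega$ must be taken for its restriction to $\cA(\Lambda)$, since Warning-type subtleties arise otherwise. I expect that to be the main point to check carefully. Otherwise the argument is a chain of the cited lemmas.
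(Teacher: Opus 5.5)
Your proposal is correct and follows essentially the paper's route: identify $L^2(\fA(\Lambda),\psi)$ with $\overline{\cA(\Lambda)\Omega_\psi}$, which uses SOT-density of $\fA(\Lambda)$ in $\cA(\Lambda)$ and which you should state explicitly. Then reduce everything to faithfulness of $\cA(\Lambda)$ on this subspace, which follows from \ref{LTO:FaithfulOnBoundary} and \ref{LTO:CentralSupportOne}. The only difference is packaging: the paper argues directly that a central projection $z$ killing this subspace satisfies $zp_\Lambda\Omega_\psi=0$, hence $zp_\Lambda=0$ and $z=0$, whereas you chain $z(p_K)=z_\omega=z([\omega|_{\cA(\Lambda)}])=z(p_\Lambda)=1$ through the earlier support lemmas.
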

\begin{proof}
Observe that we may identify $L^2(\cA(\Lambda),\omega)=\overline{\cA(\Lambda)\Omega_\psi}$
and
$L^2(\fA(\Lambda),\psi)=\overline{\fA(\Lambda)\Omega_\psi}$
as subspaces of $L^2(\fA,\psi)$.
Since $\fA(\Lambda)$ is SOT-dense in $\cA(\Lambda)$ acting on $L^2(\fA,\psi)$,
we see that $\cA(\Lambda)\Omega_\psi \subset \overline{\fA(\Lambda)\Omega_\psi}$,
and thus we may identify
$L^2(\cA(\Lambda),\omega)=L^2(\fA(\Lambda),\psi)$.

It remains to prove that $\cA(\Lambda)$ acts faithfully on $L^2(\cA(\Lambda),\omega)$.
(Indeed, since $z_\omega \in \cA(\Lambda)$ is the central support of the projection onto $L^2(\cA(\Lambda),\omega)\subset L^2(\fA,\psi)$ as discussed in Example \ref{ex:CentralSupportAndL2}, quasi-equivalence follows by \cite[Thm.~10.3.3(ii)]{MR1468230}.)
If $z\in Z(\cA(\Lambda))$ is a central projection that acts as zero on $L^2(\cA(\Lambda),\omega)$,
then $zp_\Lambda \Omega_\psi=0$, which implies $zp_\Lambda=0$ as $\Omega_\psi$ is separating for $\cB(\Lambda)$ by \ref{LTO:FaithfulOnBoundary}.
But $zp_\Lambda=0$ implies $z=z\cdot z(p_\Lambda)=0$ by \ref{LTO:CentralSupportOne}.
\end{proof}

The next proposition shows that assuming \ref{LTO:CentralSupportOne} and \ref{LTO:FaithfulOnBoundary}, $\cB(\Lambda)$ can be identified with $\fB(\Lambda)''$ acting on $L^2(\fB(\Lambda),\psi)$, where $\fB(\Lambda)=\varinjlim \fB(I)$ is the boundary algebra along $\partial \Lambda$.
This subtlety was overlooked in \cite[\S5.4]{MR4945955}; however, the results stated there are still correct as they were only stated for the examples discussed in Examples \ref{exs:KnownTopologicalCommutingProjectorModels} and \ref{exs:TopOrderModelsSatisfyExtraAxioms}.

\begin{prop}
\label{prop:CompressionvNA}
Assuming
\ref{LTO:CentralSupportOne} and \ref{LTO:FaithfulOnBoundary}, 
$\cB(\Lambda)\cong \fB(\Lambda)''$ acting on $L^2(\fB(\Lambda),\psi)$.
\end{prop}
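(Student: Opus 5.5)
We want: $\cB(\Lambda)=p_\Lambda\cA(\Lambda)p_\Lambda$ acting on $p_\Lambda L^2(\fA,\psi)$ is isomorphic, as a von Neumann algebra, to $\fB(\Lambda)''$ acting on $L^2(\fB(\Lambda),\psi)$.

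The plan is to realize both as the same algebra acting on nested subspaces of $L^2(\fA,\psi)$ and use cyclicity and separating properties of $\Omega_\psi$. By Lemma \ref{lem:BoundaryCommutesWithConeProjection}, $\fB(\Lambda)$ commutes with $p_\Lambda$. By Corollary \ref{cor:pLambdaInjective}, which applies since \ref{LTO:FaithfulOnBoundary} gives faithfulness of $\psi$ on $\fB(\Lambda)$ via $\omega(xp_\Lambda)=\psi(x)$, the map $x\mapsto xp_\Lambda$ is an injective $*$-homomorphism. Its image $\fB(\Lambda)p_\Lambda=p_\Lambda\fA(\Lambda)p_\Lambda$ is norm closed.

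First, I will show $\cB(\Lambda)$ is the SOT-closure of $\fB(\Lambda)p_\Lambda$ on $p_\Lambda L^2(\fA,\psi)$. Compression by $p_\Lambda\in\cA(\Lambda)$ is SOT-continuous, and $\fA(\Lambda)$ is SOT-dense in $\cA(\Lambda)$. Hence $p_\Lambda\fA(\Lambda)p_\Lambda$ is SOT-dense in $p_\Lambda\cA(\Lambda)p_\Lambda$. So $\cB(\Lambda)=(\fB(\Lambda)p_\Lambda)''$ on $p_\Lambda L^2(\fA,\psi)$.

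Second, I will identify the GNS space. Since $p_\Lambda\Omega_\psi=\Omega_\psi$ and $\omega(xp_\Lambda)=\psi(x)$ for $x\in\fB(\Lambda)$, we get $L^2(\fB(\Lambda),\psi)\cong K\coloneqq\overline{\fB(\Lambda)p_\Lambda\Omega_\psi}\subseteq p_\Lambda L^2(\fA,\psi)$. This isomorphism intertwines $x$ with $xp_\Lambda|_K$. Consequently $\fB(\Lambda)''$ on $L^2(\fB(\Lambda),\psi)$ is spatially isomorphic to $\cB(\Lambda)$ restricted to $K$. Moreover $K=\overline{\cB(\Lambda)\Omega_\psi}$ by SOT density, so the projection $q$ onto $K$ lies in $\cB(\Lambda)'$.

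Third, I will show the restriction map $\cB(\Lambda)\to\cB(\Lambda)q$ is injective, hence a normal $*$-isomorphism. If $y\in\cB(\Lambda)$ and $yq=0$, then $y\Omega_\psi=0$. By \ref{LTO:FaithfulOnBoundary}, $\Omega_\psi$ is separating for $\cB(\Lambda)$, so $y=0$. This is the same argument as in Lemma \ref{lem:Quasiequivalent}, and it is where the hypothesis is essential. A normal injective $*$-homomorphism has SOT-closed image, so $\cB(\Lambda)\cong\cB(\Lambda)q=\fB(\Lambda)''$ on $K$.

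The main obstacle is keeping the identifications consistent: $\fB(\Lambda)p_\Lambda$ generates all of the compression $\cB(\Lambda)$, not just a subalgebra of it. This is exactly what Proposition \ref{prop:CornerOfC*Boundary} and Corollary \ref{cor:pLambdaInjective} supply. Hypothesis \ref{LTO:CentralSupportOne} is not needed directly in this plan. If desired, it can be used to additionally transfer the statement to $L^2(\fA(\Lambda),\psi)$ via Lemma \ref{lem:Quasiequivalent}, since $\cB(\Lambda)$ is then a corner of $\cA(\Lambda)$ acting faithfully there.
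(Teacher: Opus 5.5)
Your argument is correct in substance, but one justification is wrong and should be dropped. You say \ref{LTO:FaithfulOnBoundary} gives faithfulness of $\psi$ on $\fB(\Lambda)$ via $\omega(xp_\Lambda)=\psi(x)$, so that Corollary \ref{cor:pLambdaInjective} applies. This is circular. If $\psi(x^*x)=0$, then $\omega((xp_\Lambda)^*(xp_\Lambda))=0$, and \ref{LTO:FaithfulOnBoundary} only yields $xp_\Lambda=0$. Concluding $x=0$ needs exactly the injectivity of $x\mapsto xp_\Lambda$ that the corollary is meant to supply; the paper keeps faithfulness on $\fB(\Lambda)$ and on $\cB(\Lambda)$ as distinct hypotheses.

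You never actually use that injectivity. The identification $L^2(\fB(\Lambda),\psi)\cong K$ and the intertwining $x\mapsto xp_\Lambda|_K$ hold for every $x$. The density you need in your first step comes from Proposition \ref{prop:CornerOfC*Boundary}, $\overline{\fB(\Lambda)p_\Lambda}=\overline{p_\Lambda\fA(\Lambda)p_\Lambda}$, which requires no faithfulness. With that citation in place of the corollary, your proof goes through.

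Your route differs from the paper's. The paper identifies $L^2(\fB(\Lambda),\psi)$ with $p_\Lambda L^2(\cA(\Lambda),\omega)$, the same subspace as your $K=\overline{\cB(\Lambda)\Omega_\psi}$. It then invokes Lemma \ref{lem:Quasiequivalent}, which uses both \ref{LTO:CentralSupportOne} and \ref{LTO:FaithfulOnBoundary} to show that all of $\cA(\Lambda)$ acts faithfully on $L^2(\cA(\Lambda),\omega)$, and only afterwards compresses by $p_\Lambda$. You compress first, so you only need the reduction $\cB(\Lambda)\to\cB(\Lambda)q$ to be injective. That follows directly from $\Omega_\psi$ being separating for $\cB(\Lambda)$.

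Your argument is therefore shorter, and it shows the proposition holds under \ref{LTO:FaithfulOnBoundary} alone. Your remark that \ref{LTO:CentralSupportOne} is superfluous here is correct. The paper's detour buys the stronger statement that $L^2(\fA,\psi)$ and $L^2(\fA(\Lambda),\psi)$ are quasi-equivalent representations of $\fA(\Lambda)$. Central support one is genuinely needed for that: a central $z\perp z(p_\Lambda)$ kills $\overline{\cA(\Lambda)\Omega_\psi}$.
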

\begin{proof}
By Proposition \ref{prop:CornerOfC*Boundary}, $\overline{\fB(\Lambda)p_\Lambda}^{\|\cdot\|}= \overline{p_\Lambda\fA(\Lambda)p_\Lambda}^{\|\cdot\|}$, which allows us to identify
\begin{align*}
L^2(\fB(\Lambda),\psi) 
&= 
\overline{\fB(\Lambda) \Omega_\psi}^{\|\cdot\|_2}
=
\overline{\fB(\Lambda) p_\Lambda\Omega_\psi}^{\|\cdot\|_2}
=
\overline{p_\Lambda\fA(\Lambda)p_\Lambda\Omega_\psi}^{\|\cdot\|_2}
\\&=
p_\Lambda\overline{\fA(\Lambda)\Omega_\psi}^{\|\cdot\|_2}
=
p_\Lambda L^2(\fA(\Lambda),\psi)
=
p_\Lambda L^2(\cA(\Lambda),\omega).
\end{align*}
We thus see that the action of $\fB(\Lambda)$ on $L^2(\fB(\Lambda),\psi)$ factors through the action of $\overline{\fB(\Lambda)p_\Lambda}$.
Again by Proposition \ref{prop:CornerOfC*Boundary},
we may identify the action of 
$\overline{\fB(\Lambda)p_\Lambda}$ on $L^2(\fB(\Lambda),\psi)$
with the action of
$\overline{p_\Lambda\fA(\Lambda)p_\Lambda}$ on 
$p_\Lambda L^2(\cA(\Lambda),\omega)$,
so both algebras generate isomorphic von Neumann algebras.
The former generates $\fB(\Lambda)''$ by definition, and we claim that the latter generates $\cB(\Lambda)=p_\Lambda \cA(\Lambda)p_\Lambda$.
We prove this latter claim in three steps.

\item[\underline{Step 1:}]
The SOT-closures of 
$p_\Lambda\fA(\Lambda)p_\Lambda$
and
$\overline{p_\Lambda\fA(\Lambda)p_\Lambda}$
on $p_\Lambda L^2(\cA(\Lambda),\omega)$
agree.
Indeed, the SOT-closure of $p_\Lambda\fA(\Lambda)p_\Lambda$ is norm closed and thus contains $\overline{p_\Lambda\fA(\Lambda)p_\Lambda}$.

\item[\underline{Step 2:}]
The SOT-closures of $p_\Lambda\fA(\Lambda)p_\Lambda$ on $p_\Lambda L^2(\cA(\Lambda),\omega)$
and
$L^2(\cA(\Lambda),\omega)$
agree.
Indeed, if $p_\Lambda a_ip_\Lambda \eta \to x\eta$ for all $\eta\in L^2(\cA(\Lambda),\omega)$, then $x=p_\Lambda xp_\Lambda$, and thus one can check SOT-convergence solely on $p_\Lambda L^2(\cA(\Lambda),\omega)$.

\item[\underline{Step 3:}]
The SOT-closure of $p_\Lambda\fA(\Lambda)p_\Lambda$
on $p_\Lambda L^2(\cA(\Lambda),\omega)$
is $\cB(\Lambda)$.
Indeed, the action of $\fA(\Lambda)$ on $L^2(\cA(\Lambda),\omega)$ 
extends to a faithful normal action of $\cA(\Lambda)$
by Lemma \ref{lem:Quasiequivalent} (this is where we use \ref{LTO:CentralSupportOne} and \ref{LTO:FaithfulOnBoundary}),
and since multiplication by $p_\Lambda$ is SOT-continuous, we see that the SOT-closure of $p_\Lambda\fA(\Lambda)p_\Lambda$ is exactly $p_\Lambda\cA(\Lambda)p_\Lambda$.
\end{proof}

\subsection{Conditional expectations}
\label{sec:ConditionalExpectations}

For this section, we fix a cone-like region $\Lambda$ and write $\fB$ for the boundary algebra along $\partial \Lambda$.
A new feature that arises with the more general disk-like regions is that whenever $I_1\subset I_2$, we get a canonical $\psi$-preserving \emph{conditional expectation} $\fB(I_2)\to \fB(I_1)$ as follows.

\begin{construction}
\label{construct:ConditionalExpectations}
Choose $R_i \Subset_s S$ with 
$R_1\subset R_2$ 
and
$\partial R_i \cap \partial S = I_i$ for $i=1,2$.
(Taking $R_2\supset R_1$ and a large $\Delta$ will not affect the boundary algebras by \ref{LTO:Surjective} and \ref{LTO:Injective}.)
We then pick a more general region $\widehat{S}$ that satisfies $S \subset \widehat{S}$ and $\partial R_2\cap \partial \widehat{S} =  I_1$.
Here is a cartoon of an example.
\[
\tikzmath{
\foreach \x in {-9,...,19}{
\foreach \y in {1,...,19}{
\filldraw[gray!50!white] ($ .1*(\x,\y) $) circle (.01cm);
}}
\draw[thick, cyan] (0,0) rectangle (1,1);
\node[cyan] at (.5,.5) {$\scriptstyle R_1$};
\draw[thick, blue] (0,0) rectangle (1.5,1.5);
\node[blue] at (1.25,1.25) {$\scriptstyle R_2$};
\draw[thick, red] (-1,-.02) rectangle (2,2);
\node[red] at (-.5,1.5) {$\scriptstyle S$};
}
\qquad\qquad\qquad\qquad
\tikzmath{
\foreach \x in {-9,...,19}{
\foreach \y in {1,...,19}{
\filldraw[gray!50!white] ($ .1*(\x,\y) $) circle (.01cm);
}}
\foreach \x in {11,...,19}{
\foreach \y in {-9,...,0}{
\filldraw[gray!50!white] ($ .1*(\x,\y) $) circle (.01cm);
}}
\draw[thick, cyan] (0,0) rectangle (1,1);
\node[cyan] at (.5,.5) {$\scriptstyle R_1$};
\draw[thick, blue] (0,0) rectangle (1.5,1.5);
\node[blue] at (1.25,1.25) {$\scriptstyle R_2$};
\draw[thick, red] (-1,-.02) rectangle (2,2);
\node[red] at (-.5,1.5) {$\scriptstyle S$};
\draw[thick, orange] (-1.02,-.04) -- (1.02,-.04) -- (1.02,-1) -- (2.02,-1) -- (2.02,2.02) -- (-1.02,2.02) -- (-1.02,-.04);
\node[orange] at (1.5,-.5) {$\scriptstyle \widehat{S}$};
}
\]
The map
$\fB(I_2)\cong \fB(R_2\Subset_s S) \to \fB(R_2 \Subset_s \widehat{S})$ given by $xp_S \mapsto p_{\widehat{S}} xp_{\widehat{S}}$ is a surjective ucp map onto
\[
\fB(R_2 \Subset_s \widehat{S}) 
\underset{\text{\ref{LTO:Surjective}}}{=}
\fB(R_1 \Subset_s \widehat{S}) 
\underset{\text{\ref{LTO:Injective}}}{\cong}
\fB(R_1 \Subset_s S)
\cong
\fB(I_1)
\]
that is clearly $\fB(R_1 \Subset_s \widehat{S})\cong \fB(I_1)$ bimodular.
We thus have our desired conditional expectation $\fB(I_2)\to \fB(I_1)$.
Taking the inductive limit as $I_2\to \infty$, we get an inductive limit conditional expectation $\fB\to \fB(I_1)$ by \cite[Cor.~2.19]{MR4945955}.
\end{construction}

Since this conditional expectation preserves the canonical state $\psi$, it extends to a normal conditional expectation $\fB''\to \fB(I)''$ in $B(L^2(\fB,\psi))$ in the usual way, under the additional assumption that $\psi$ is faithful on $\cB$, 
cf.~\ref{supp:phiFaithful}.\footnote{Faithfulness on $\fB$ may not always imply faithfulness on $\fB''$; see \cite{MR355622}.}
In more detail, we have the following lemma.

\begin{lem}
Suppose $A\subset B$ is a unital inclusion of $\rmC^*$-algebras, $\phi$ is a state on $B$, and $E: B\to A$ is a $\phi$-preserving conditional expectation.
Let $\Omega_A,\Omega_B$ denote the canonical cyclic vectors in $L^2(A,\phi), L^2(B,\phi)$ respectively, and let $\pi_A:A\to B(L^2(A,\phi))$ and $\pi_B:B\to B(L^2(B,\phi))$ denote the GNS representations.
Assume that the extension $\langle \Omega_B | \,\cdot\,|\Omega_B\rangle$ of $\phi$ to $\pi_B(B)''$, still denoted $\phi$, is faithful.
\begin{enumerate}[label=(\arabic*)]
\item
    \label{it:isometry}
$\iota : L^2(A,\phi)\to L^2(B,\phi)$ given by $a\Omega_A\mapsto a\Omega_B$ is a well-defined isometry.
\item
    \label{it:expconjugate}
For all $b\in B$, $E(b)=\iota^* b\iota \in B(L^2(A,\phi))$, and thus $E$ extends to a normal ucp map $\pi_B(B)''\to \pi_A(A)''$, which we also denote by $E$, satisfying
$\langle \Omega_B| x|\Omega_B\rangle=\langle \Omega_A| E(x)|\Omega_A\rangle$ for all $x\in \pi_B(B)''$.
\item
    \label{it:stariso}
The representation $(\pi_B)|_A: A\to B(L^2(B,\phi))$ extends to a normal representation of $\pi_A(A)''$ that is a $*$-isomorphism onto $\pi_B(A)''$.
\item
    \label{it:normcondexp}
The map $b\mapsto \iota^*b\iota$ is a well-defined normal conditional expectation from $\pi_B(B)''\to \pi_B(A)''$ on $L^2(B,\phi)$.
\end{enumerate}
\end{lem}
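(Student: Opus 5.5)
We prove the four items in order, using only GNS constructions and the faithfulness of $\phi$ on $\pi_B(B)''$.

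For \ref{it:isometry}, we compute for $a\in A$:
\[
\|a\Omega_B\|^2=\phi(a^*a)=\|a\Omega_A\|^2 .
\]
Hence $a\Omega_A\mapsto a\Omega_B$ is well-defined and isometric on the dense subspace $A\Omega_A$, and it extends to an isometry $\iota$.
Note that $\iota \pi_A(a) = \pi_B(a)\iota$ on $A\Omega_A$, hence everywhere.

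For \ref{it:expconjugate}, we use the $\phi$-preservation and $A$-bimodularity of $E$. For $a,a'\in A$ and $b\in B$,
\[
\langle a\Omega_A|\iota^*b\iota|a'\Omega_A\rangle=\phi(a^*ba')=\phi(E(a^*ba'))=\phi(a^*E(b)a')=\langle a\Omega_A|E(b)|a'\Omega_A\rangle .
\]
Density then gives $\pi_A(E(b))=\iota^*\pi_B(b)\iota$. The map $x\mapsto\iota^*x\iota$ is defined on all of $B(L^2(B,\phi))$ and is normal and ucp. Since it sends $\pi_B(B)$ into $\pi_A(A)\subseteq\pi_A(A)''$, which is $\sigma$-weakly closed, normality gives that it sends $\pi_B(B)''$ into $\pi_A(A)''$. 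The state identity follows from $\iota\Omega_A=\Omega_B$.

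For \ref{it:stariso}, the intertwining relation $\iota\pi_A(a)=\pi_B(a)\iota$ shows that the range $K=\iota L^2(A,\phi)=\overline{A\Omega_B}$ is $\pi_B(A)$-invariant. Its projection $e=\iota\iota^*$ therefore lies in $\pi_B(A)'$, and $\pi_A$ is unitarily equivalent to the subrepresentation of $\pi_B|_A$ on $K$. Consequently the compression map $\pi_B(A)''\to \pi_B(A)''e\cong\pi_A(A)''$, $y\mapsto ye$, is a normal surjective $*$-homomorphism. The key point is that this map is injective. If $ye=0$ for $y\in\pi_B(A)''$, then $y\Omega_B=ye\Omega_B=0$. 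Then $\phi(y^*y)=0$, and faithfulness of $\phi$ on $\pi_B(B)''\supseteq\pi_B(A)''$ gives $y=0$.

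The inverse $\pi_A(A)''\to\pi_B(A)''$ is a $*$-isomorphism of von Neumann algebras and hence automatically normal. It extends $\pi_B|_A$, as required. This injectivity is the only place faithfulness enters in this item, and it is the main point of the lemma.

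For \ref{it:normcondexp}, let $\theta:\pi_A(A)''\to\pi_B(A)''$ be the isomorphism from \ref{it:stariso}. We interpret $b\mapsto\iota^*b\iota$ as the composite $\theta(\iota^*b\iota)$, so that it lands in $\pi_B(A)''\subset B(L^2(B,\phi))$. It is normal, ucp, and unital.

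On $\pi_B(A)''$ this composite is the identity. Indeed, for $y\in\pi_B(A)''$ we have $\iota^*y\iota=\theta^{-1}(y)$, because $\iota^* y\iota$ corresponds to the compression $ye$. Being idempotent onto its range and contractive, the composite is a conditional expectation by Tomiyama's theorem; alternatively, bimodularity can be checked directly from $\iota^*y_1xy_2\iota=\iota^*y_1\iota\,\iota^*x\iota\,\iota^*y_2\iota$, using $e\in\pi_B(A)'$.
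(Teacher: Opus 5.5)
Your proposal is correct. Items (1), (2) and (4) follow the paper's argument. For (4), you make explicit that $b\mapsto\iota^*b\iota$ has to be composed with the isomorphism from (3), and you check idempotence and bimodularity via $e=\iota\iota^*\in\pi_B(A)'$; this is more careful than the paper's ``follows immediately''.

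Item (3) is proved by a genuinely different route.

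\textbf{The paper's route.} It proves normality of the extension of $(\pi_B)|_A$ by working on the dense subspace $B_0\Omega_B$, where $B_0$ is the Tomita algebra of $\pi_B(B)''$. There it shows that the matrix coefficients of $\pi_B(a)$ are vector functionals for $\pi_A$. This uses the KMS condition $\phi(b^*ac)=\phi(\sigma^\phi_i(c)b^*a)$ together with the normal extension of $E$ from (2). Injectivity then comes from $E$ being a left inverse, and surjectivity from injectivity of $E$ on $\pi_B(A)''$, via Kaplansky density.

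\textbf{Your route.} You identify $L^2(A,\phi)$ with the cyclic subspace $K=\overline{A\Omega_B}$. The induction $y\mapsto ye$ by $e\in\pi_B(A)'$ is then a normal surjection onto $\pi_B(A)''e\cong\pi_A(A)''$. Injectivity follows because $\Omega_B\in K$ is separating for $\pi_B(A)''$, and normality of the inverse is automatic.

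\textbf{Comparison.} Your argument is more elementary. It avoids modular theory entirely and does not use $E$ at all. It also shows that (3) only needs $\phi$ to be faithful on $\pi_B(A)''$, rather than on all of $\pi_B(B)''$ as the Tomita-algebra argument requires. The paper's route has the minor advantage of exhibiting directly that the extended $E$ inverts the isomorphism. You recover the same fact, just as quickly, from the compression picture as $\iota^*y\iota=\theta^{-1}(y)$ for $y\in\pi_B(A)''$.
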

\begin{proof}
Statement~\ref{it:isometry} follows immediately from 
\[
\|\iota a\Omega_A\|^2_\phi = \|a\Omega_B\|_\phi^2 = \phi(a^*a) = \|a\Omega_A\|_\phi^2.
\]
Statement~\ref{it:expconjugate} is verified by taking inner products against $a\Omega_A, c\Omega_A$ for $a,c\in A$:
\[
\langle a\Omega_A | \iota^*b \iota c\Omega_A\rangle 
=
\langle a\Omega_B| bc\Omega_B\rangle
=
\phi(a^*bc)
=
\phi(E(a^*bc))
=
\phi(a^*E(b)c)
=
\langle a\Omega_A | E(b)c\Omega_A\rangle.
\]
The formula $x\mapsto \iota^* x\iota$ is manifestly normal and ucp on $\pi_B(B)''$, 
so it takes values in 
$\pi_A(A)''$ by normality and the Kaplansky Density Theorem.

Now for~\ref{it:stariso}, to see that $(\pi_B)_A$ extends to a normal representation of $\pi_A(A)''$, 
it suffices to find a dense subset $D\subset L^2(B,\phi)$ such that for every $\eta_1,\eta_2\in D$,
there are $\xi_1,\xi_2\in L^2(A,\phi)$ such that
\[
\langle \eta_1 | \pi_B(a) \eta_2 \rangle_{L^2(B,\phi)}= \langle \xi_1 | \pi_A(a)\xi_2\rangle_{L^2(A,\phi)}
\qquad\qquad\qquad
\forall\, a\in A.
\] 
Since $\phi$ is faithful on $\pi_B(B)''$, we can consider the corresponding modular automorphism group $\sigma^\phi$.
Let $D=B_0\Omega_B$ where $B_0$ is the Tomita algebra of $\pi_B(B)''$, a $*$-algebra contained in the set of entire elements for $\sigma^\phi$.
Note that $D$ is dense in $L^2(B,\phi)$ \cite[p99-102]{MR1943006}.
The KMS condition implies that for all $b,c\in B_0$,
\begin{align*}
\langle b\Omega_B | \pi_B(a) c\Omega_B\rangle_{L^2(B,\phi)}
&=
\varphi(b^* a c)
=
\varphi(\sigma^\phi_{i}(c)b^* a)
=
\varphi(E(\sigma^\phi_{i}(c)b^*) a)
=
\langle E(b \sigma^\phi_{-i}(c^*)) \Omega_A |  a\Omega_A\rangle_{L^2(A,\phi)},
\end{align*}
as desired, where we used that $E$ can be extended to $\pi_B(B)''$ by~\ref{it:expconjugate}.

Now by normality, we immediately have that the image of $(\pi_B)|_A$ on $\pi_A(A)''$ lands in $\pi_B(A)''$.
To see injectivity, observe that $\pi_A = E \circ (\pi_B)|_A$. 
By normality, this equality holds on the extension to $\pi_A(A)''$.
Thus the extension of $(\pi_B)_A$ to $\pi_A(A)''$ has left inverse $E$ and is thus injective.
To show $(\pi_B)|_A$ is also surjective, it suffices to prove $E$ is injective on $\pi_B(A)''$, since then $E$ has a left-inverse that must be equal to $(\pi_B)|_A$ as well.
Suppose $x\in \pi_B(A)''$ with $E(x)=0$.
By the Kaplansky Density Theorem, there is a bounded sequence $(x_n)\subset A$ with $\pi_B(x_n)\to x$ SOT, and thus $\sigma$-WOT.
Since $E$ is normal, we have
$x_n=E(\pi_B(x_n))\to E(x)=0$ $\sigma$-WOT.
Since $(\pi_B)|_A$ is normal, we see that $x=0$.

The final claim~\ref{it:normcondexp} follows immediately.
\end{proof}

\begin{cor}
\label{cor:ModularGroupPreservesSubalgebras}
Suppose the extension of $\psi$ is faithful on $\fB(\Lambda)''$.\footnote{Note that if we assume \ref{LTO:CentralSupportOne} and \ref{LTO:FaithfulOnBoundary}, we may identify $\fB(\Lambda)''$ with $\cB(\Lambda)$ by Proposition \ref{prop:CompressionvNA}.}
The modular automorphism group $\sigma^\psi$ on $\fB(\Lambda)''$ preserves each of the subalgebras $\fB(I)''$.
\end{cor}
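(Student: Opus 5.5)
This follows from Takesaki's theorem: if $N\subset M$ is a von Neumann subalgebra and $\phi$ is a faithful normal state on $M$, then a $\phi$-preserving normal conditional expectation $M\to N$ exists if and only if $\sigma^\phi_t(N)=N$ for all $t$. So the plan is to produce, for each interval $I\subset\partial\Lambda$, a faithful normal $\psi$-preserving conditional expectation $\fB(\Lambda)''\to\fB(I)''$ inside $B(L^2(\fB(\Lambda),\psi))$, and then invoke the ``only if'' direction.

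\textbf{Step 1: the C$^*$-level expectation.} Take $A=\fB(I)\subset B=\fB(\Lambda)$ with the state $\psi$. Construction \ref{construct:ConditionalExpectations}, passed to the inductive limit via \cite[Cor.~2.19]{MR4945955}, gives a conditional expectation $E:\fB(\Lambda)\to\fB(I)$. We must check that it preserves $\psi$. At each finite stage, $E(xp_S)=p_{\widehat S}xp_{\widehat S}$, read in $\fB(I_1)$. Applying $\psi$ and using $\psi(p_T y p_T)=\psi(y)$ for every region $T$ (this follows since $\psi(p_T)=1$, as in the proof of Corollary \ref{cor:pLambdaInjective}) gives $\psi\circ E=\psi$ on each $\fB(I_2)$. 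Density then extends this to all of $\fB(\Lambda)$.

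\textbf{Step 2: pass to von Neumann algebras.} The hypothesis is that $\psi$ is faithful on $\fB(\Lambda)''$, so the preceding lemma applies. Part \ref{it:normcondexp} gives a normal conditional expectation $x\mapsto\iota^*x\iota$ from $\pi_B(B)''=\fB(\Lambda)''$ onto $\pi_B(A)''$, and part \ref{it:stariso} identifies $\pi_B(A)''$ with $\fB(I)''$ acting on $L^2(\fB(\Lambda),\psi)$. Part \ref{it:expconjugate} shows this expectation preserves the vector state $\langle\Omega_B|\cdot|\Omega_B\rangle$, which is the faithful extension of $\psi$. Since $\fB(I)''$ is a unital von Neumann subalgebra of $\fB(\Lambda)''$, all hypotheses of Takesaki's theorem now hold.

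\textbf{Step 3: conclude.} By Takesaki's theorem (e.g.\ \cite[Thm.~IX.4.2]{MR1943006}), $\sigma^\psi_t(\fB(I)'')=\fB(I)''$ for all $t\in\bbR$. Moreover, $\sigma^\psi$ restricted to $\fB(I)''$ is the modular group of $\psi|_{\fB(I)''}$.

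\textbf{Main obstacle.} The delicate point is Step 1: checking that the inductive-limit expectation is well defined and $\psi$-preserving. This requires that the finite-stage expectations $\fB(I_2)\to\fB(I_1)$ are compatible as $I_2$ grows, and that the identification $\fB(R_1\Subset_s\widehat S)\cong\fB(I_1)$ through \ref{LTO:Injective} respects $\psi$. I would handle both points with the $p_T$-compression invariance of $\psi$ and the independence of $\bbE$ from the choice of $R\Subset_s S$.
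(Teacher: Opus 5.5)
Your proposal is correct and follows essentially the same route as the paper: extend the $\psi$-preserving conditional expectations $\fB(\Lambda)\to\fB(I)$ of Construction~\ref{construct:ConditionalExpectations} to normal conditional expectations $\fB(\Lambda)''\to\fB(I)''$ on $L^2(\fB(\Lambda),\psi)$ using the preceding lemma, then apply Takesaki's theorem. The paper takes $\psi$-preservation for granted, and your explicit check via $\psi(p_T y p_T)=\psi(y)$ for regions $T$ is sound.
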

\begin{proof}
When $\psi$ is faithful, the 
$\psi$-preserving conditional expectations $\fB(\Lambda)\to \fB(I)$ from Construction \ref{construct:ConditionalExpectations} extend to normal conditional expectations $\fB''\to \fB(I)''$ on $L^2(\fB(\Lambda),\psi)$.
Now apply Takesaki's Theorem \cite{MR0303307}.
\end{proof}

\begin{rem}
If we think of the modular automorphism group $\sigma^\psi$ as describing some dynamics on the boundary (and $\psi$ a thermal state for this dynamics), the corollary implies that the dynamics have a zero-velocity Lieb--Robinson bound.
This has also been observed in the tensor network picture, where for certain commuting parent Hamiltonians in the bulk, the boundary state is Gibbs state for commuting interactions (cf. Footnote 1 and the paragraph below Main Result 1 in~\cite{MR3927082}).
More generally, locality properties of the boundary state are conjectured to be related to the spectral gap in the bulk~\cite{MR3927082}.
\end{rem}

\begin{ex}
For the Levin-Wen model, if we take intervals $I_{n-1}\subset I_n$ along our cut $\cK$ where $I_n$ is obtained from $I_{n-1}$ by adding a single site, the unique $\psi$-preserving conditional expectation $E^{n}_{n-1} : \fB_{n}\to \fB_{n-1}$ is given by
\[
E^{n}_{n-1}(\varphi) \coloneqq  \frac{1}{D_\cX} \sum_{x_{n}\in\Irr(\cX)} d_{x_{n}} 
\tikzmath{
\draw (-.4,-.3) -- (-.4,-.7) node[below]{$\scriptstyle X$};
\draw (-.4,.3) -- (-.4,1.1) node[above]{$\scriptstyle X$};
\node at (-.05,.7) {$\cdots$};
\node at (-.05,-.5) {$\cdots$};
\draw (.2,-.3) -- (.2,-.7) node[below]{$\scriptstyle X$};
\draw (.2,.3) -- (.2,1.1) node[above]{$\scriptstyle X$};
\draw (.4,.3) -- (.4,.7) arc(180:0:.3cm) --node[right]{$\scriptstyle \overline{x_{n}}$} (1,-.3) arc(0:-180:.3cm);
\roundNbox{}{(0,0)}{.3}{.3}{.3}{$\varphi$}
\filldraw (.4,.5) node[right]{$\scriptstyle x_n$} circle (.05cm);
}
\qquad
\qquad
\text{where}
\qquad
\qquad
\tikzmath{
\draw (0,-.5) -- (0,.5);
\filldraw (0,0) node[right]{$\scriptstyle x_n$} circle (.05cm);
}
=p_{x_{n}}
\in\End_\cX(X).
\]
One may verify by direct calculation that this conditional expectation is obtained by the procedure in Construction \ref{construct:ConditionalExpectations}.
We suppress the proof, as it is entirely similar to the calculation of the canonical state $\psi$ in \cite[Prop.~5.5]{MR4945955} and the reflection positivity axiom \ref{LTO:RP} for Levin-Wen in \S\ref{sec:LevinWen} below.
\end{ex}

\section{A Haag duality axiom for LTOs}
\label{sec:hdlto}

We continue our assumption here that $(\fA,p)$ is an LTO with projections assigned to disk-like regions $R$, which are extended to cone-like regions by looking at the net of von Neumann algebras $\Lambda\mapsto \cA(\Lambda)\coloneqq\fA(\Lambda)''$ acting on $L^2(\fA,\psi)$.
As before, we write $\cB(\Lambda)\coloneqq p_\Lambda \cA(\Lambda) p_\Lambda$, which is the \emph{boundary algebra} along $\partial \Lambda$ (cf.~\cite[\S5.4]{MR4945955}).

The {\it Haag duality} axiom for LTOs \ref{LTO:HD} is a compatibility condition for the boundary algebra of a cone-shaped region $\Lambda$ and its complement $\Lambda'$.
As our axiom treats $\Lambda$ and $\Lambda'$ with equal footing, we write $\Lambda_+=\Lambda$ and $\Lambda_-=\Lambda'$ in this section.
Our axiom below is inspired by~\cite[Thm.~4.1.3]{2509.23734}.

\begin{axiom}
The Haag duality axiom for LTOs states:
\begin{enumerate}[label=\textup{(LTO-HD)}]
\item 
\label{LTO:HD}
For any cone $\Lambda_+$ with complement $\Lambda_-$, whenever $R \ll S$ are sufficiently large disk-like regions such that both
$R_\pm\coloneqq R\cap \Lambda_\pm$ and $S_\pm\coloneqq S\cap \Lambda_\pm$ are also sufficiently large disk-like regions with $R_\pm\Subset S_\pm$,
and $\partial[\Lambda_\pm]$ intersects both $\partial[R]$ and $\partial[S]$ transversely,
then
$
p_{S_+} \fA(R_+) p_S
=
p_{S_+}p_{S_-} \fA(R) p_S
=
p_{S_-} \fA(R_-) p_S
$,
i.e.,
\[
\fB(R_+\Subset S_+)p_S
=
p_{S_+}p_{S_-} \fA(R) p_S
=
\fB(R_-\Subset S_-)p_S.
\]
\end{enumerate}
\end{axiom}

The left hand side below is a cartoon of this general setup.
\[
\tikzmath{
\foreach \x in {-10,...,10}{
\foreach \y in {-10,...,10}{
\filldraw[gray!50!white] ($ .2*(\x,\y) $) circle (.01cm);
}}
\draw[thick, blue] (0.1,2) -- (-.5,-.5) -- (2,.1);
\node[blue] at (.5,-.05) {$\scriptstyle \Lambda_+$};
\node[blue] at (.5,-.5) {$\scriptstyle \Lambda_-$};
\draw[thick, red] (-1.7,-1.7) rectangle (1.7,1.7);
\node[red] at (1.5,1.5) {$\scriptstyle S_+$};
\node[red] at (-1.4,-1.5) {$\scriptstyle S_-$};
\draw[thick, orange] (-1.3,-1.3) rectangle (1.3,1.3);
\node[orange] at (1,1.1) {$\scriptstyle R_+$};
\node[orange] at (-1,-1.1) {$\scriptstyle R_-$};
}
\qquad\qquad\qquad\qquad
\tikzmath{
\foreach \x in {-10,...,10}{
\foreach \y in {-10,...,10}{
\filldraw[gray!50!white] ($ .2*(\x,\y) $) circle (.01cm);
}}
\draw[thick, snake, knot, cyan] (0.1,2) node[above]{$\scriptstyle \phantom{\cK}$} -- (.1,-2) node[below]{$\scriptstyle \cK$};
\node[blue] at (.4,0) {$\scriptstyle \Lambda_-$};
\node[blue] at (-.3,0) {$\scriptstyle \Lambda_+$};
\draw[thick, red] (-1.7,-1.7) rectangle (1.7,1.7);
\node[red] at (1.5,1.5) {$\scriptstyle S_-$};
\node[red] at (-1.4,-1.5) {$\scriptstyle S_+$};
\draw[thick, orange] (-1.3,-1.3) rectangle (1.3,1.3);
\node[orange] at (1,1.1) {$\scriptstyle R_-$};
\node[orange] at (-1,-1.1) {$\scriptstyle R_+$};
}
\]
The right hand side above illustrates the special case when our complementary cones $\Lambda_\pm$ are complementary half-spaces $\bbH_\pm$ determined a hyperplane $\cK$ that is normal to one of the standard coordinate axes in our $\bbZ^n$ lattice.
We specialize to this case for the remainder of the section for ease of exposition.
Our main results are equally valid for more general cone-like regions.

\begin{rem}
In the case that our LTO $(\fA,p)$ comes from a quantum spin system, $\fA(R)=\fA(R_+)\otimes \fA(R_-)$, and the condition
\[
\fB(R_+\Subset S_+)p_S
=
p_{S_+}p_{S_-} \fA(R) p_S
=
\fB(R_-\Subset S_-)p_S
\]
reduces to the condition
\[
\fB(R_+\Subset S_+)p_S
=
\fB(R_-\Subset S_-)p_S.
\]
\end{rem}

Intuitively, \ref{LTO:HD} says that when restricted to the ground state space for $\Delta$, we can either act with the boundary algebra `left' of the cut, or equivalently with the boundary algebra from the `right'.
The ability to create excitations at the boundary either from the left or the right plays a crucial role in the proof of Haag duality in~\cite{MR2956822,MR3426207,2509.23734}.

\subsection{Examples}

\begin{ex}
\label{ex:ToricCodeHD}
A complete description of the boundary algebras for the Toric Code was given in \cite[\S3]{MR4945955}.
There are two scenarios depending on whether the boundary is rough or smooth, and a $\bbZ/2$ flip symmetry will naturally swap a rough boundary with a smooth boundary.
The boundary algebras are supported on sites within distance one to the cut; we denote the sites on the left $(+)$ side by $I$ and the sites on the right $(-)$ side by $J$.
The generators for the left and right sides are given as follows.
\begin{equation}
\label{eq:TC-BoundaryExcitations}
\tikzmath{
\draw (1.5,.05) -- (1.5,3.7);
\foreach \y in {.75,1.5,2.25,3}{
    \draw (2.25,\y) -- (1.5,\y);
}
\draw[thick, red] (2.25,1.5) -- (1.5,1.5) -- (1.5,2.25) -- (2.25,2.25);
\node[red] at (1.3,1.875) {$\scriptstyle X$};
\node[red] at (1.875,1.3) {$\scriptstyle X$};
\node[red] at (1.875,2.45) {$\scriptstyle X$};
\node[red] at (2.5,1.875) {$D_p$};
\node[red] at (1.875,1.875) {$\scriptstyle p$};
\draw[thick, orange] (1.5,3) -- (2.25,3);
\node[orange] at (1.875,3.2) {$\scriptstyle Z$};
\node[orange] at (2.5,3) {$C_\ell$};
\node at (2,-.3) {$\scriptstyle\fB(R_+\Subset S_+)\coloneqq 
{\rm C^*}\set{C_\ell,D_p}{\ell,p\subset I}$};
\draw[thick, DarkGreen, rounded corners=5pt] (1,3.7) -- (2.25,3.7) -- (2.25,.05) -- (1,.05);
\node[DarkGreen] at (1.25,3.5) {$R_+$}; 
}
\qquad
\tikzmath{
\draw[scale=.75] (-2.5,-2.5) grid (2.5,2.5);
\foreach \x in {-1.5,-.75,0,.75, 1.5}{
\foreach \y in {-1.125,-.375,.375,1.125}{
\filldraw (\x,\y) circle (.05cm);
}}
\foreach \x in {-1.125,-.375,.375,1.125}{
\foreach \y in {-1.5,-.75,0,.75, 1.5}{
\filldraw (\x,\y) circle (.05cm);
}}
\node[DarkGreen] at (-.5,-2.1) {$\scriptstyle I$};
\foreach \y in {-1.5,-.75,0,.75, 1.5}{
\filldraw[DarkGreen] (-.375,\y) circle (.05cm);
}
\foreach \y in {-1.125,-.375,.375,1.125}{
\filldraw[DarkGreen] (-.75,\y) circle (.05cm);
}
\node[blue] at (.25,-2.1) {$\scriptstyle J$};
\foreach \y in {-1.5,-.75,0,.75, 1.5}{
\filldraw[blue] (.375,\y) circle (.05cm);
}
\foreach \y in {-1.125,-.375,.375,1.125}{
\filldraw[blue] (0,\y) circle (.05cm);
}
\draw[thick, cyan, snake] (-.1725,-1.875) node[below]{$\scriptstyle \cK$}-- (-.1725,1.875);
}
\qquad
\tikzmath{
\draw (-2.25,.05) -- (-2.25,3.7);
\foreach \y in {.75,1.5,2.25,3}{
    \draw (-2.25,\y) -- (-1.5,\y);
}
\draw[thick, red] (-2.25,.75) -- (-2.25,1.5);
\node[red] at (-2.05,1.175) {$\scriptstyle X$};
\node[red] at (-2.6,1.175) {$D_\ell$};
\draw[thick, orange] (-1.5,3) -- (-2.25,3);
\draw[thick, orange] (-2.25,2.25) -- (-2.25,3.7);
\node[orange] at (-1.5,3.1) {$\scriptstyle Z$};
\node[orange] at (-2,2.675) {$\scriptstyle Z$};
\node[orange] at (-2,3.375) {$\scriptstyle Z$};
\node[orange] at (-2.6,3) {$C_s$};
\node at (-2,-.3) {$\scriptstyle\fB(R_-\Subset S_-)\coloneqq 
{\rm C^*}\set{C_s,D_\ell}{\ell,s\subset J}$};
\draw[thick, blue, rounded corners=5pt] (-1.25,3.7) -- (-2.35,3.7) -- (-2.35,.05) -- (-1.25,.05);
\node[blue] at (-1.5,3.5) {$R_-$}; 
}
\end{equation}
It follows immediately from the descriptions of the boundary algebras in \eqref{eq:TC-BoundaryExcitations} that the Toric Code satisfies \ref{LTO:HD}.
Indeed, for each generator $x_+\in \fB(R_+\Subset S_+)$, there is a term $t$ of the Hamiltonian that is absorbed by $p_S$ such that $x_+t=x_-t$ for some generator $x_-\in \fB(R_-\Subset S_-)$.
Said another way, for stars $s$ and plaquettes $p$ contained in $R$ that intersect the cut $\cK$, we see that $A_s=C_\ell\otimes C_s$ and $B_p=D_p\otimes D_\ell$ as below.
\[
A_s
=
\tikzmath{
\draw (-1,0) -- (1,0);
\draw (0,-1) -- (0,1);
\draw[thick, cyan, snake] (-.25,-1) -- (-.25,1);
\filldraw[DarkGreen] (-.5,0) circle (.05cm);
\node[DarkGreen] at (-.5,.2) {$\scriptstyle Z$};
\node[DarkGreen] at (-.5,-.2) {$\scriptstyle \ell$};
\filldraw[blue] (.5,0) circle (.05cm);
\node[blue] at (.5,.2) {$\scriptstyle Z$};
\filldraw[blue] (0,.5) circle (.05cm);
\node[blue] at (.2,.5) {$\scriptstyle Z$};
\filldraw[blue] (0,-.5) circle (.05cm);
\node[blue] at (.2,-.5) {$\scriptstyle Z$};
}
\qquad\qquad\qquad
B_p
=
\tikzmath{
\draw (0,0) rectangle (1,1);
\draw[thick, cyan, snake] (.75,-.5) -- (.75,1.5);
\filldraw[DarkGreen] (0,.5) circle (.05cm);
\node[DarkGreen] at (-.2,.5) {$\scriptstyle X$};
\filldraw[DarkGreen] (.5,0) circle (.05cm);
\node[DarkGreen] at (.5,-.2) {$\scriptstyle X$};
\filldraw[DarkGreen] (.5,1) circle (.05cm);
\node[DarkGreen] at (.5,1.2) {$\scriptstyle X$};
\filldraw[blue] (1,.5) circle (.05cm);
\node[blue] at (.85,.5) {$\scriptstyle \ell$};
\node[blue] at (1.2,.5) {$\scriptstyle X$};
}
\]
\end{ex}

\begin{ex}
\label{ex:QuantumDoubleHD}
A complete description of the boundary algebras for Kitaev's Quantum Double Model for finite groups $G$ was given in \cite[\S3]{MR4814524}.
Again, there are two scenarios depending on whether the boundary is rough or smooth, and again a $\bbZ/2$ flip symmetry will swap the two.
The boundary algebras are supported on sites within distance one to the cut; we denote the sites on the left $(+)$ side by $I$ and the sites on the right $(-)$ side by $J$.
The generators for the left and right sides are given as follows.\footnote{In \cite{MR4814524}, generators for the boundary algebra were shown to be partial plaquette operators.
The formulae for the generators on the right-hand side for $\fB(R_-\Subset S_-)$ appear different from those for the smooth boundary algebra in \cite{MR4814524} as the partial plaquette and star operators for a right-facing boundary are different from the partial plaquette and star operators for the left-facing boundary.
} 
\[
\tikzmath{
\draw (1.5,.05) -- (1.5,3.7);
\foreach \y in {.75,1.5,2.25,3}{
    \draw (2.25,\y) -- (1.5,\y);
}
\draw[thick, red] (2.25,1.5) -- (1.5,1.5) -- (1.5,2.25) -- (2.25,2.25);
\foreach \x in {1.875}{
\foreach \y in {.75, 1.5, 2.25, 3}{
\filldraw (\x, \y) circle (0.05cm);
}}
\foreach \x in {1.5}{
\foreach \y in {.375, 1.125, 1.875, 2.625, 3.375}{
\filldraw (\x, \y) circle (0.05cm);
}}
\foreach \x in {1.875}{
\foreach \y in {1.5, 2.25}{
\filldraw[red] (\x,\y) circle (.05cm);
}}
\foreach \x in {1.5}{
\foreach \y in {1.875}{
\filldraw[red] (\x,\y) circle (.05cm);
}}
\foreach \x in {1.875}{
\foreach \y in {3}{
\filldraw[orange] (\x,\y) circle (.05cm);
}}
\node[red][scale = 0.8] at (1.15,1.875) {$R_{g^{-1}}$};
\node[red][scale = 0.8] at (1.875,1.3) {$L_g$};
\node[red][scale = 0.8] at (1.875,2.5) {$R_{g^{-1}}$};
\node[red] at (2.7,1.875) {$Q_p^{(g)}$};
\node[red] at (1.875,1.875) {$\scriptstyle p$};
\draw[thick, orange] (1.5,3) -- (2.25,3);
\node[orange] at (1.875,3.2) {$\scriptstyle P_g$};
\node[orange] at (2.7,3) {$P_\ell^{g}$};
\node at (2,-.5) {$\scriptstyle\fB(R_+\Subset S_+)\coloneqq 
\rmC^*\set{P_\ell^{g},Q_p^{(g)}}{ \ell,p\subset I,\, g\in G}$};
\draw[thick, DarkGreen, rounded corners=5pt] (1,3.7) -- (2.25,3.7) -- (2.25,.05) -- (1,.05);
\node[DarkGreen] at (1.25,3.5) {$R_+$}; 
}
\qquad\qquad
\tikzmath{
\draw (-2.25,.05) -- (-2.25,3.7);
\foreach \y in {.75,1.5,2.25,3}{
    \draw (-2.25,\y) -- (-1.5,\y);
}
\foreach \x in {-1.875}{
\foreach \y in {.75, 1.5, 2.25, 3}{
\filldraw (\x, \y) circle (0.05cm);
}}
\foreach \x in {-2.25}{
\foreach \y in {.375, 1.125, 1.875, 2.625, 3.375}{
\filldraw (\x, \y) circle (0.05cm);
}}
\foreach  \x in {-2.25}{
\foreach \y in {1.125}{
\filldraw[red] (\x,\y) circle (.05cm);
}}
\foreach  \x in {-2.25}{
\foreach \y in {2.625, 3.375}{
\filldraw[orange] (\x,\y) circle (.05cm);
}}
\foreach  \x in {-1.875}{
\foreach \y in {3}{
\filldraw[orange] (\x,\y) circle (.05cm);
}}
\draw[thick, red] (-2.25,.75) -- (-2.25,1.5);
\node[orange] at (-3,3) {$\scriptstyle \sum\limits_{hk\ell^{-1} = g}$};
\node[orange] at (-2.05,2.575) {$\scriptstyle P_{\ell}$};
\node[orange] at (-2.05,3.375) {$\scriptstyle P_{h}$};
\node[orange] at (-1.85,2.85) {$\scriptstyle P_{k}$};
\node[red] at (-2,1.175) {$\scriptstyle L_g$};
\node[red] at (-2.8,1.175) {$L_\ell^{g}$};
\draw[thick, orange] (-1.5,3) -- (-2.25,3);
\draw[thick, orange] (-2.25,2.25) -- (-2.25,3.7);
\node[orange] at (-4.2,3) {$S_s^{(g)}=$};
\node at (-2,-.4) {$\scriptstyle\fB(R_-\Subset S_-)\coloneqq 
\rmC^*\set{S_s^{(g)},L_\ell^{g}}{ \ell,s\subset J,\,g\in G}$};
\draw[thick, blue, rounded corners=5pt] (-1.25,3.7) -- (-2.35,3.7) -- (-2.35,.05) -- (-1.25,.05);
\node[blue] at (-1.5,3.5) {$R_-$}; 
}
\]
Above,
$R_g$, $L_g$, and $P_g$ act on $\bbC^G = \operatorname{span}\set{|g \rangle}{g \in G}$ by $R_g |h \rangle \coloneqq |hg \rangle$, $L_g |h \rangle \coloneqq |gh \rangle$, and $P_g |h \rangle \coloneqq \delta_{g=h} |h \rangle$.

Again, for each generator $x_+\in \fB(R_+\Subset S_+)$, there is a term $t$ of the Hamiltonian that is absorbed by $p_S$ such that $x_+t=x_-t$ for some generator $x_-\in \fB(R_-\Subset S_-)$.
As before, for stars $s$ and plaquettes $p$ contained in $R$ that intersect the cut $\cK$, we see that $P_\ell^{g} A_s = S^{(g)}_s A_s$ and $Q_p^{(g^{-1})}B_p^g = L_\ell^g $.
\end{ex}

\begin{ex}
We will give a simple argument in~\S\ref{sec:LevinWen} below that the Levin-Wen and Walker-Wang models satisfy the stronger reflection positivity axiom for LTOs \ref{LTO:RP} for $\bbZ/2$-symmetric regions, which implies \ref{LTO:HD} for $\Lambda_\pm=\bbH_\pm$.
The general case for \ref{LTO:HD} for arbitrary cone-like regions follows similarly.
\end{ex}

\subsection{Haag duality for the net of cone von Neumann algebras}

In this section, we consider the standard half-spaces $\bbH_\pm$ on the $\pm$ side of a hyperplane $\cK$ which bisects our lattice $\cL$.
We write $\cB_\pm=p_{\bbH_{\pm}} \cA(\bbH_{\pm}) p_{\bbH_{\pm}}$.
Assuming \ref{LTO:CentralSupportOne} and \ref{LTO:FaithfulOnBoundary}, 
$\cB_\pm$ is isomorphic to the von Neumann algebra generated by $\fB_\pm$ on $L^2(\fB_\pm,\psi_{\fB_\pm})$.
We are now in a position that we can prove Theorem~\ref{thm:Main} from the introduction.

\begin{thm}
\label{thm:HD}
Suppose the LTO $(\fA,p)$ satisfies \ref{LTO:HD} along with:
\begin{itemize}
\item 
\ref{LTO:CentralSupportOne} for $\bbH_\pm$: each $p_{\bbH_\pm}$ has central support 1 in $\cA(\bbH_\pm)$, and 
\item 
\ref{LTO:FaithfulOnBoundary} for $\bbH_\pm$: the canonical state $\psi$ is faithful on $\cB_\pm$.
\end{itemize}
Assume moreover that $\fA(R)$ is finite dimensional whenever $R$ is finite.
Then $\cA(\bbH_\pm)' = \cA(\bbH_\mp)$.
\end{thm}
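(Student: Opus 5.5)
The plan is to reduce Haag duality for the cones to a commutant statement for the two boundary algebras on the common ground-state subspace, and to prove that statement with Tomita--Takesaki theory. Set $P\coloneqq p_{\bbH_+}p_{\bbH_-}$. Since $p_{\bbH_\pm}\in\cA(\bbH_\pm)$, locality makes these commuting projections, so $P$ is a projection; put $K\coloneqq PL^2(\fA,\psi)$. Locality also gives $\cA(\bbH_-)\subseteq\cA(\bbH_+)'$, so only the reverse inclusion needs proof, and by symmetry one sign suffices.

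\emph{Step 1: \ref{LTO:HD} gives an anti-isomorphism between the boundary algebras on $K$.} Fix $R\ll S$ as in \ref{LTO:HD} and $x_+\in\fB(R_+\Subset S_+)$. The axiom gives some $x_-\in\fB(R_-\Subset S_-)$ with $x_+p_S=x_-p_S$. Multiplying by $p_{\bbH_+}p_{\bbH_-}\le p_S$ and using Lemma~\ref{lem:BoundaryCommutesWithConeProjection}, I get $x_+P=x_-P$ on $L^2(\fA,\psi)$. If also $y_+P=y_-P$, then locality gives
\[
x_+y_+P=x_+y_-P=y_-x_+P=y_-x_-P ,
\]
so $x_+P\mapsto x_-P$ is a $*$-anti-homomorphism $\theta$. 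By Corollary~\ref{cor:pLambdaInjective} it is well defined and injective, and it is isometric, so it extends to norm closures. Finite dimensionality of $\fA(R)$ is what makes $p_S\fA(R)p_S$ closed and lets the axiom's set equalities be used elementwise at each finite stage. Using $\omega(x_+P)=\omega(x_-P)$ and Proposition~\ref{prop:CompressionvNA}, $\theta$ is $\omega$-preserving, hence normal, and extends to an anti-isomorphism from $\cB_+P$ onto $\cB_-P$.

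\emph{Step 2: Tomita--Takesaki on $K$.} By \ref{LTO:FaithfulOnBoundary}, $\Omega_\psi$ is separating for $\cB_+P$ acting on $K$; note $P\Omega_\psi=\Omega_\psi$. For cyclicity, $\overline{\cB_+P\Omega_\psi}$ contains $P\fA(R)p_S\Omega_\psi=P\fA(R)\Omega_\psi$ for all $R$ by the middle term of \ref{LTO:HD}, hence equals $K$. So $(\cB_+P,K,\Omega_\psi)$ is in standard form. The anti-isomorphism $\theta$ preserves $\omega$ and lands in the commutant, and therefore agrees with $x\mapsto Jx^*J$ on the cyclic vector: both satisfy $x\Omega\mapsto x^*\Omega$ in the appropriate sense. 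This should yield $\cB_-P=J(\cB_+P)J=(\cB_+P)'\cap B(K)$. I would make the identification $\theta(x)=Jx^*J$ precise via $S x\Omega=x^*\Omega$ together with $x\Omega=\theta(x)\Omega$.

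\emph{Step 3: lifting from $K$ to the whole space.} This is the main obstacle. Let $y\in\cA(\bbH_+)'$. Then $p_{\bbH_+}yp_{\bbH_+}$ commutes with $\cB_+$ on $p_{\bbH_+}L^2$, and compressing further by $p_{\bbH_-}$, Step 2 places $PyP$ in $\cB_-P=P\cA(\bbH_-)P$. The same holds for every $PvyvwP$ with $v,w\in\cA(\bbH_-)$, since $\cA(\bbH_-)\subseteq\cA(\bbH_+)'$. Because $p_{\bbH_+}$ has central support $1$ in $\cA(\bbH_+)$ (\ref{LTO:CentralSupportOne}), every vector is reachable as $\sum a_iP\xi_i$ with $a_i\in\cA(\bbH_+)$. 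Similarly, \ref{LTO:CentralSupportOne} for $\bbH_-$ gives partial isometries $v_i\in\cA(\bbH_-)$ with $\sum v_ip_{\bbH_-}v_i^*=1$. I would try to show that the map $y\mapsto\bigl(p_{\bbH_-}v_i^*yv_jp_{\bbH_-}\bigr)_{i,j}$ determines $y$, with entries in $p_{\bbH_-}\cA(\bbH_-)p_{\bbH_-}$ once restricted to $K$. The delicate point is passing from the compression by $P$ to the compression by $p_{\bbH_-}$ alone; here I would use $y$ commuting with $\cA(\bbH_+)$ and $\overline{\cA(\bbH_+)K}=L^2$. The goal is then to reassemble $y=\sum_{i,j}v_i(\cdots)v_j^*\in\cA(\bbH_-)$, which gives $\cA(\bbH_+)'\subseteq\cA(\bbH_-)$.
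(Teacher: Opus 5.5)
\textbf{There is a genuine gap, and it starts in Step~1.}

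Your Step~1 rests on a false inequality. The projection $P=p_{\bbH_+}p_{\bbH_-}$ is \emph{not} below $p_S$ when $S$ straddles $\cK$. One only has $P\le p_{S_+}p_{S_-}$, whereas $p_S$ also imposes the constraints that cross the cut (e.g.\ the star $A_s$ meeting $\cK$ in the toric code). So from $x_+p_S=x_-p_S$ you only get $x_+\Omega=x_-\Omega$.

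In fact $x_+P=x_-P$ cannot hold. Together with \ref{LTO:HD} it would give $\cB_+P=\cB_-P$ on $K$. Since these algebras commute, $\cB_+$ would then be abelian, which fails for Levin--Wen, where $\fB_n=\End_\cX(X^{\otimes n})$.

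With the correct relation $x_+\Omega=x_-\Omega$, the map $x_+\mapsto x_-$ is still linear and anti-multiplicative. But comparing with $Jy^*J\Omega=\Delta^{1/2}y\Omega$ shows that $x_-=J\sigma^\psi_{-i/2}(x_+^*)J$ (for analytic $x_+$), not $Jx_+^*J$. This map is $*$-preserving only when $\psi$ is tracial on the boundary. In that tracial case (toric code, quantum double) your Step~2 does go through once corrected: $x_-$ and $Jx_+^*J$ both lie in $(\cB_+P)'$ and agree on the separating vector $\Omega$, so $\fB_-(I)P=J\fB_+(I)PJ$. In general, however, the identification $\theta(x)=Jx^*J$ is false. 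Levin--Wen is exactly such a case: the $\sigma^\psi_{\mp i/2}$ twist appears, cf.\ \ref{LTO:RP}. Without further input you cannot conclude that the image of $\cB_+P$ is all of $(\cB_+P)'$, rather than a proper subalgebra for which $\Omega$ happens to be cyclic.

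\textbf{The missing ingredient} is something tying the global modular data of $(\cB_+P,\Omega)$ to the local algebras. The paper gets it from Corollary~\ref{cor:ModularGroupPreservesSubalgebras}.
\begin{itemize}
\item The $\psi$-preserving conditional expectations $\fB\to\fB(I)$, together with Takesaki's theorem, make $\sigma^\psi$ preserve each $\fB_+(I)''$. Hence $\Delta$ and $J$ restrict to the finite-dimensional space $L^2I=\overline{\fB_+(I)\Omega}=\overline{\fB_-(I)\Omega}$.
\item On $L^2I$, $\Omega$ is cyclic and separating for both commuting algebras, so a dimension count gives $\fB_-(I)=\fB_+(I)'$. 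This, not norm-closedness, is where finite dimensionality is really used.
\item Hence $J\fB_+(I)J=J_I\fB_+(I)J_I=\fB_-(I)$, and passing to the inductive limit gives $J\cB_+J=\cB_-$.
\end{itemize}
Alternatively, modular invariance makes $\sigma^\psi_{-i/2}$ an algebra automorphism of each $\fB_+(I)$, which repairs your formula directly.

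\textbf{Step~3 is not the real obstacle.} Fact~\ref{fact:CompressionCommutant} says that if $p\in M$ has central support one, then $x'\mapsto x'p$ is injective on $M'$. Applied with $p_{\bbH_+}$ and then $p_{\bbH_-}$, it reduces Haag duality directly to $(\cB_+p_{\bbH_-})'=\cB_-p_{\bbH_+}$ on $K$. Note also that $\overline{\cA(\bbH_+)PL^2(\fA,\psi)}\subseteq p_{\bbH_-}L^2(\fA,\psi)$. So your claim that every vector has the form $\sum a_iP\xi_i$ with $a_i\in\cA(\bbH_+)$ is false unless $p_{\bbH_-}=1$.
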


We remind the reader that this result is equally valid for arbitrary cone-like regions; we only specialize to $\Lambda=\bbH_+$ for convenience.
We begin with the following observation distilled from the proof of \cite[Lem.~2.1.4]{2509.23734}.

\begin{fact}
\label{fact:CompressionCommutant}
When $M,N\subset B(H)$ are von Neumann algebras with $N\subseteq M'$ and 
$p\in M$ has central support one, then the following are equivalent:
\begin{enumerate}[label=\textup{($p$\arabic*)}]
\item
\label{p:Commutant} 
$N=M'$, or equivalently, $N'=M$,
\item
\label{p:Compression}
$(Np)'=pN'p = pMp$, or equivalently, $Np=M'p=(pMp)'$, acting on $pH$.
\end{enumerate}
Indeed, \ref{p:Commutant} clearly implies \ref{p:Compression}, and since the map $M'\to M'p$ is injective, \ref{p:Compression} implies \ref{p:Commutant}.

Now if $q\in N$ also has central support one, then 
$pq$ has central support $q$ in $Mq$.
As an immediate corollary, 
$N=M'$ if and only if $pMp=(Np)'$
acting on $pH$, if and only if $pMpq = (qNpq)'$ acting on $pqH$.
\end{fact}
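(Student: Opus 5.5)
The first claim is a direct commutant computation; the corollary then follows by applying it twice, once with respect to $q$ and once with respect to $pq$. The one input I use throughout is the standard fact that the central support $z(p)$ of $p\in M$ is the projection onto $\overline{MpH}$. The hypothesis $z(p)=1$ thus means $\overline{MpH}=H$.

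\emph{Equivalence of \ref{p:Commutant} and \ref{p:Compression}.} The two forms of \ref{p:Commutant} are equivalent by the bicommutant theorem. For the reduced algebras on $pH$ I use the standard identities $(pMp)'=M'p$ and $(M'p)'=pMp$. These make the two forms of \ref{p:Compression} equivalent to each other, and show that \ref{p:Commutant} implies \ref{p:Compression}: if $N=M'$ then $Np=M'p=(pMp)'$.

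For the converse, assume $Np=M'p$ and let $x\in M'$. Choose $y\in N$ with $xp=yp$. Since $N\subseteq M'$, the element $x-y$ lies in $M'$ and kills $pH$, hence kills $M pH$, which is dense in $H$. So $x=y\in N$, giving $M'\subseteq N$ and therefore $N=M'$. This is exactly the injectivity of the map $M'\to M'p$.

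\emph{The corollary with $q$.} Suppose $q\in N$ has central support one in $N$. Since $q\in N\subseteq M'$, the projection $q$ commutes with $M$, so $Mq$ is a von Neumann algebra on $qH$ containing $pq$. Its central support is the projection onto
\[
\overline{Mpq\,qH}=\overline{qMpH}=q\overline{MpH}=qH,
\]
so $pq$ has central support $1_{qH}=q$ in $Mq$.

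Now I run the first part twice.
\begin{enumerate}
\item Apply it with the roles of the two algebras swapped: the pair $M\subseteq N'$ with the projection $q\in N$, which has central support one. This gives $N'=M$ if and only if $qNq=(Mq)'$ on $qH$, i.e.\ $N=M'$ if and only if $qNq=(Mq)'$.
\item Apply it on $qH$ to the pair $qNq\subseteq (Mq)'$ with the projection $pq\in Mq$, which has central support one there. This gives $qNq=(Mq)'$ if and only if $pq\,Mq\,pq=(qNq\,pq)'$ on $pqH$.
\end{enumerate}
Since $q$ commutes with $p$ and with $M$, we have $pq\,Mq\,pq=pMpq$. Also $qNq\,pq=qNpq$. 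So the last condition reads $pMpq=(qNpq)'$. The middle equivalence $N=M'\iff pMp=(Np)'$ is just the first part.

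The only point needing care is the bookkeeping of which ambient Hilbert space ($H$, $qH$ or $pqH$) each commutant is taken in; no step is substantive beyond that.
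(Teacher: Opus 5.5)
Your proof is correct and follows the paper's argument: ($p$2)$\Rightarrow$($p$1) is exactly the paper's injectivity of $M'\to M'p$, which you justify via $z(p)=1\iff \overline{MpH}=H$. For the corollary you unpack the paper's hint that $pq$ has central support $q$ in $Mq$ by compressing first by $q$ (using that $q$ has central support one in $N$) and then by $pq$ inside $Mq$. The only point left implicit is that $Np$ is a von Neumann algebra on $pH$ with $(Np)'=pN'p$, because $p\in M\subseteq N'$; this is what justifies passing between the two forms of ($p$2).
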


\begin{proof}[Proof of Theorem \ref{thm:HD}]
Since each $p_{\bbH_{\pm}}$ has central support $1$ in $\cA(\bbH_{\pm})$, by Fact \ref{fact:CompressionCommutant}, the statement is equivalent to
\begin{equation}
\label{eq:CommutantToShow}
(p_{\bbH_+} \cA(\bbH_+) p_{\bbH_+} p_{\bbH_-})' 
\overset{?}{=} 
p_{\bbH_-} \cA(\bbH_-) p_{\bbH_-} p_{\bbH_+}
\end{equation}
acting on the compressed Hilbert space $\cH\coloneqq p_{\bbH_+} p_{\bbH_-} L^2(\fA, \psi)$.
Since $p_\Delta\Omega = \Omega$ for every region $\Delta$,
$\Omega$ is cyclic for $p_{\bbH_{\pm}} \cA(\bbH_{\pm}) p_{\bbH_{\pm}} p_{\bbH_{\mp}}=\cB_\pm p_{\bbH_{\mp}}$
acting on $\cH$ by \ref{LTO:HD}. 
Again, since $p_{\bbH_{\pm}}\Omega=\Omega$, $\Omega$ is also separating by faithfulness of $\psi$.
Since $p_{\bbH_{\pm}}$ has central support 1, we see that 
$p_{\bbH_{\pm}} \cA(\bbH_{\pm}) p_{\bbH_{\pm}} p_{\bbH_{\mp}} \cong \cB_\pm$ acting on $\cH \cong L^2(\cB_\pm, \psi)$.
Under this identification,  proving \eqref{eq:CommutantToShow} above is equivalent to proving
$\cB_+=\cB_-'$ on $\cH$.

Consider a boundary interval $I$.
For every $\bbZ/2$-flip symmetric region $R$ with $I=R \cap \cK$, we can use \ref{LTO:HD} to identify both of the finite dimensional subspaces $L^2(\fB_\pm(I),\psi)\subset L^2(\fB_\pm, \psi)$ 
with the subspace
\[
p_{\bbH_+} p_{\bbH_-} \overline{\fA(R)\Omega}\subset 
p_{\bbH_+} p_{\bbH_-} L^2(\fA, \psi)=\cH.
\]
We will simply write $L^2I \subset \cH$ for this common subspace, which contains $\Omega$.
Observe that $L^2I$ carries a left $\fB_+(I)$-action and a commuting right $\fB_-(I)$-action, and $\Omega$ is cyclic and separating for both these actions.
By \ref{LTO:HD}, these algebras are each other's commutants.

Let $\Delta, J$ be canonical Tomita-Takesaki operators for the von Neumann algebra $\cB_+$
acting on $L^2(\cB_+,\psi)\cong \cH$.
We claim that $\Delta,J$ restrict to $\Delta_I, J_I$, the canonical Tomita-Takesaki operators for the von Neumann algebra $\fB_+(I)$ on $L^2(\fB_+(I),\psi)\cong L^2I$.
By Corollary \ref{cor:ModularGroupPreservesSubalgebras}, $\sigma^\psi$ preserves $\fB_{+}(I)$ for each region $I\subset \cK$, so $\Delta^{it}$ commutes with each projection $p_I$ onto $L^2(\fB_{+}(I), \psi)$. 
Since $\Delta$ is affiliated with the von Neumann algebra generated by $\set{\Delta^{it}}{t \in \bbR}$ by Stone's Theorem, $\Delta$ commutes with $p_I$, so $\Delta$ preserves each $L^2(\fB_{+}(I), \psi)$. 
Therefore, $J = S_\psi \Delta^{-1/2}$ also preserves $L^2(\fB_{+}(I), \psi)$,
as does $F_\psi = J\Delta^{-1/2}$.
It follows that $\Delta$ and $J$ restrict to $\Delta_I$ and $J_I$, defined on $L^2(\fB_+(I),\psi|_{\fB_+(I)})$.

Now under the identification $L^2(\fB_+,\psi)\cong \cH$ and $L^2(\fB_+(I),\psi)\cong L^2I$,
we have 
\[
J\fB_+(I)J = J_I\fB_+(I)J_I = \fB_-(I),
\]
which immediately implies
\[
J\cB_+ J 
= 
J\left(\varinjlim \fB_+(I)\right)''J
=
\left(\varinjlim J_I\fB_+(I)J_I\right)''
=
\left(\varinjlim \fB_-(I)\right)''
=
\cB_-''
\]
as desired.
\end{proof}

\begin{rem}
Observe that the proof of Theorem \ref{thm:HD} did not really require the local algebras $\fA(R)$ to be finite dimensional, but rather the local boundary algebras $\fB(I)$ needed to be finite dimensional. 
\end{rem}

\section{A reflection positivity axiom for LTOs}

Throughout this section we again assume that for an LTO $(\fA, p)$, the local algebra $\fA(R)$ is finite dimensional if $R$ is finite
(although we can get by if only the boundary algebras are finite dimensional as above).

Suppose we have a codimension 1 hyperplane $\cK$ that divides our lattice $\cL$ into two halves with a $\bbZ/2$ reflection symmetry $\theta$.
Set $\bbH_\pm$ to be the regions on either side of the hyperplane, giving local operator algebras $\fA(\bbH_\pm)$.
We assume we have an involution (conjugate-linear $*$-isomorphism of period 2) $\Theta:\fA\to \fA$ such that $\Theta(\fA(R))=\fA(\theta R)$ and $\Theta(p_{R}) = p_{\theta R}$ for every $R\subset \cL$.
In particular, $\Theta$ gives a conjugate-linear algebra isomorphism $\fA(\bbH_+)\to \fA(\bbH_-)$.

\begin{lem}
The map $u_\Theta:x\Omega \mapsto \Theta(x)\Omega$ is antiunitary on $L^2(\fA,\psi)$.
\end{lem}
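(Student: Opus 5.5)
The key point is that $\psi\circ\Theta=\overline{\psi}$, i.e.\ $\psi(\Theta(x))=\overline{\psi(x)}$ for all $x\in\fA$. Granting this, the map $u_\Theta$ preserves inner products up to conjugation. For $x,y\in\fA$,
\[
\langle \Theta(x)\Omega\,|\,\Theta(y)\Omega\rangle
=\psi(\Theta(x)^*\Theta(y))
=\psi(\Theta(x^*y))
=\overline{\psi(x^*y)}
=\overline{\langle x\Omega|y\Omega\rangle}
=\langle y\Omega|x\Omega\rangle .
\]
Here we used that $\Theta$ is a conjugate-linear $*$-isomorphism, so $\Theta(x)^*\Theta(y)=\Theta(x^*y)$. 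The same computation with $x=y$ shows that if $x\Omega=0$ then $\Theta(x)\Omega=0$. Hence $u_\Theta$ is well defined on the dense subspace $\fA\Omega$, conjugate-linear (since $\Theta$ is), and isometric. It extends by continuity to a conjugate-linear isometry of $L^2(\fA,\psi)$. Its range contains $\Theta(\fA)\Omega=\fA\Omega$, since $\Theta^2=\id$ makes $\Theta$ surjective. The range is therefore dense and, being the isometric image of a complete space, closed. So $u_\Theta$ is surjective and hence antiunitary; in fact $u_\Theta^2=1$.

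It remains to prove $\psi(\Theta(x))=\overline{\psi(x)}$, which is the only place the LTO structure enters. By nondegeneracy and continuity of both sides, it suffices to take $x\in\fA(R)$ for a finite region $R$. Choose $S$ with $R\ll_s S$. Then $\theta R\ll_s\theta S$, because $\theta$ is a lattice reflection and so an isometry of the $\ell^\infty$ metric that preserves the class of (disk-like, sufficiently large) regions. By the definition of the canonical state, $p_SxP_S=\psi(x)p_S$. Applying $\Theta$ and using $\Theta(p_S)=p_{\theta S}$ together with conjugate-linearity gives
\[
p_{\theta S}\Theta(x)p_{\theta S}=\overline{\psi(x)}\,p_{\theta S}.
\]
Since $\Theta(x)\in\fA(\theta R)$ and $\theta R\ll_s\theta S$, the left side equals $\psi(\Theta(x))p_{\theta S}$ by \ref{LTO:QECC} and the independence of $\psi$ from the choice of surrounding region. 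Comparing coefficients of the nonzero projection $p_{\theta S}$ yields the claim.

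The main obstacle is this last step: checking that $\theta$ maps surrounding pairs $R\ll_s S$ to surrounding pairs, so that the defining formula for $\psi$ applies on the reflected side. This should be immediate from $\theta$ being an isometric lattice symmetry. Everything else is the standard argument that an isometry with dense range extends to a unitary, here in its antilinear form.
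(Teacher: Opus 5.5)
Your proof is correct and follows the paper's route. The paper likewise first obtains $\psi(\Theta(x))=\overline{\psi(x)}$ by applying $\Theta$ to $p_S x p_S=\psi(x)p_S$ for $R\ll S$, using $\theta R\ll\theta S$ and $\Theta(p_S)=p_{\theta S}$, and then concludes from $\|\Theta(x)\Omega\|_2^2=\psi(x^\dagger x)=\|x\Omega\|_2^2$ that $u_\Theta$ is an invertible antilinear isometry. Your points on well-definedness, the polarized inner-product identity, and surjectivity via $\Theta^2=\mathrm{id}$ make explicit what the paper leaves implicit.
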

\begin{proof}
First, suppose $R\ll S$ and $x\in \fA(R)$ so that
$\psi(x)p_S=p_Sxp_S$.
Since $\theta R\ll \theta S$, we have
\[
\psi(\Theta(x))p_{\theta S}
=
p_{\theta S}\Theta(x)p_{\theta S}
=
\Theta(p_S)\Theta(x)\Theta(p_S)
=
\Theta(p_Sxp_S)
=
\Theta(\psi(x)p_S)
=
\overline{\psi(x)}p_{\theta S}.
\]
Hence $\psi(\Theta(y))=\overline{\psi(y)}$ for all $y\in \fA$.
We now see
\[
\|\Theta(x)\Omega\|_2^2
=
\psi(\Theta(x^\dag x))
=
\psi(x^\dag x)
=
\|x\Omega\|_2^2.
\]
Thus $u_\Theta$ is an invertible anti-linear isometry, i.e., an antiunitary.
\end{proof}

For a $\theta$-symmetric region $R=\theta R$, we write $R_\pm \coloneqq  R \cap \bbH_\pm$.
Observe that if $R,S$ are $\theta$-symmetric regions with $R \ll S$, then $R_\pm \Subset S_\pm$.
The {\it reflection positivity} axiom for LTOs is a {\it local} compatibility condition with $\theta$.

\begin{axiom}
The reflection positivity axiom for LTOs states:
\begin{enumerate}[label=\textup{(LTO-RP)}]
\item 
\label{LTO:RP}
The canonical state $\psi$ is faithful on $\fB$, and
whenever $R \ll S$ are sufficiently large disk-like $\theta$-symmetric regions such that $\partial[R]$ and $\partial[S]$ are transverse to $\cK$,
\[
\Theta(\sigma^\psi_{-i/2}(x^\dag)) p_S
=
x p_S
\qquad\qquad\qquad\qquad
\forall\,x\in \fB(R_+\Subset S_+)
\]
where $\sigma^\psi$ is the modular automorphism group of the canonical state.
Note that the above is well-defined as the boundary algebra $\fB(R_+\Subset S_+)$ is finite dimensional, so all elements are analytic.
\end{enumerate}
\end{axiom}

\begin{rem}
Clearly \ref{LTO:RP} implies \ref{LTO:HD} for LTOs coming from quantum spin systems for $\Lambda_\pm=\bbH_\pm$.
For abstract quantum spin systems, we do not necessarily know that
$$
p_{S_+}p_{S_-} \fA(R)p_S
\overset{?}{=}
p_{S_+}p_{S_-} \fA(R_+)\fA(R_-)p_S.
$$
\end{rem}

\begin{rem}
The reader might wonder what role the (non $*$-preserving!) algebra map $\sigma^\psi_{-i/2}$ plays in the so-called `reflection positivity' axiom \ref{LTO:RP}.
We have a nice description in the setting where 
each $p_{\bbH_\pm}$ has central support one in $\cA(\bbH_\pm)$, $\psi$ is faithful on the boundary (von Neumann) algebra $\cB(\bbH_\pm)=p_{\bbH_\pm} \cA(\bbH_\pm)p_{\bbH_\pm}$, and \ref{LTO:HD} holds for $\Lambda_\pm=\bbH_\pm$.
Recall from the proof of Theorem~\ref{thm:HD} that 
we have a compatibility of the modular group $\sigma^\psi$ for $\cB_+=\fB''_+$ on $L^2(\fA,\psi)$ and for each $\fB(I)$.
Writing $\Omega \in L^2(\cB_+,\psi)$ for the image of $1$, observe that for $I=R\cap \cK$ for $\bbZ/2$-symmetric regions $R\ll S$,
$$
u_\Theta x\Omega
=
u_\Theta x p_S \Omega
=
\sigma^\psi_{-i/2}(x^\dag) p_S \Omega
=
\sigma^\psi_{-i/2}(x^\dag) \Omega
=
J_\psi x \Omega
\qquad\qquad
\forall\,
x\in \fB_+(I)=\fB(R_+\Subset S_+).
$$
Hence the reflection positivity axiom for LTOs is telling us that $u_\Theta=J_\psi$ on 
$$
p_{\bbH_+}p_{\bbH_-}L^2(\fA,\psi)
\cong L^2(\fB_+,\psi).
$$
\end{rem}

\begin{ex}
Consider the Toric Code as in~\S\ref{sec:hdlto}, where for convenience we now rotate the lattice by $45^{\circ}$, and choose a horizontal hyperplane $\cK$ (alternatively, we can take a diagonal hyperplane $\cK$ in the original model):
\[
\tikzmath[scale=.7]{
\begin{scope}
    \clip (.25,2.25) rectangle (7.75, 5.75);
\foreach \x in {-8,-6,-4,-2,0,2,4,6,8}{
    \draw (\x,0) -- ++(8,8);
    \draw (\x,8) -- ++(8,-8);
}

\node at (2.15,4.5) {\color{red}$\scriptstyle X$};
\node at (3.85,4.5) {\color{red}$\scriptstyle X$};
\node at (5.15,4.5) {\color{blue}$\scriptstyle Z$};
\node at (6.8,4.5) {\color{blue}$\scriptstyle Z$};
\draw[thick,red] (2,4) -- (3,5) -- (4,4);
\draw[thick,blue] (5,5) -- (6,4) -- (7,5);
\foreach \x in {0.5,...,8.5}{
    \foreach \y in {0.5,...,6.5}{
        \filldraw[black] (\x,\y) circle (.05cm);
    }
\filldraw[red] (2.5,4.5) circle (.05cm);
\filldraw[red] (3.5,4.5) circle (.05cm);
\filldraw[blue] (5.5,4.5) circle (.05cm);
\filldraw[blue] (6.5,4.5) circle (.05cm);
}
\end{scope}
\draw[thick, cyan, snake, knot] (0,4) -- (8,4) node[right, cyan]{$\scriptstyle \mathcal{K}$};
}
\]
Let $\theta$ be the reflection through the hyperplane $\mathcal{K}$ and write $\widehat{\theta}$ for the automorphism of $\fA$ it induces.
Define the anti-unitary operator $K: \mathbb{C}^2 \to \mathbb{C}^2$ as complex conjugation with respect to the standard basis.
Write $k$ for the conjugate-linear automorphism of $\fA$ obtained by conjugating with $K$ at each site.
Then $\Theta = \widehat{\theta} \circ k$ is a conjugate-linear $*$-automorphism of period 2 such that $\Theta(\fA(R)) = \fA(\theta R)$.
We claim that the Toric Code satisfies~\ref{LTO:RP} for $\Theta$ (cf.~\cite[\S 6.1]{2510.20662}).

Choose $\theta$-symmetric regions $R \ll S$.
We write $R_+$ (resp. $S_+$) for the part above the cut $\cK$.
Observe that $\Theta(x_j^\dagger) = x_{\theta j}$ for $x \in \fA(\{j\})$. 
It follows that $\Theta(p_S) = p_{\theta S}$.
By adapting~\cite[Algorithm~3.10]{MR4945955} it follows that $\fB(R_+ \Subset S_+)$ is generated by the operators $Z \otimes Z$ and $X \otimes X$ localized on the \emph{partial} stars and plaquettes near the boundary, as indicated by the blue and red edges respectively in the figure above.
Observe that if $j, j+1$ are the two edges of a star $s$ straddling the cut $\mathcal{K}$, we have that $\Theta(Z_j Z_{j+1}) A_s = Z_j Z_{j+1}$, with a similar statement for the partial plaquette operators.
Because the canonical boundary state $\psi$ is a trace for the Toric Code, and hence the modular automorphisms act trivially, the claim readily follows.
\end{ex}

In the remainder of this section, we prove \ref{LTO:RP} for the Levin-Wen model based on a unitary fusion category (UFC) $\cX$.
We equip $\cX$ with its unique unitary spherical structure and spherical tracial state $\tr_\cX$ \cite{MR2091457,MR4133163}, and we make heavy use of the graphical calculus \cite[\S2.5]{MR3663592}.
We write $\Irr(\cX)$ for a set of representatives of the simple objects of $\cX$, and for $x\in\Irr(\cX)$, we write $d_x\coloneqq \tr_\cX(\id_x)$ for its quantum dimension.

\subsection{Skein modules}

Recall that the skein module $\cS_\cX(\bbD, n)$ for the UFC $\cX$ on a disk with $n$ boundary points is the Hilbert space
\[
    \cS_\cX(\bbD, n) \coloneqq  \cX\left( 1_\cX \to X^{\otimes n}\right) = \bigoplus_{x_1, \dots, x_n \in \Irr(\cX)} \cX(1_{\cX} \to x_1 \otimes \cdots \otimes x_n),
\]
where $X \coloneqq  \bigoplus_{x \in \Irr(\cX)} x$ and all direct sums are orthogonal.
The inner product is given by
\[
\left\langle
\tikzmath{
\draw (-.3,.3) --node[left]{$\scriptstyle x_1$} (-.3,.7);
\node at (.05,.5) {$\cdots$};
\draw (.3,.3) --node[right]{$\scriptstyle x_n$} (.3,.7);
\roundNbox{}{(0,0)}{.3}{.3}{.3}{$f$}
}
\middle|
\tikzmath{
\draw (-.3,.3) --node[left]{$\scriptstyle y_1$} (-.3,.7);
\node at (.05,.5) {$\cdots$};
\draw (.3,.3) --node[right]{$\scriptstyle y_n$} (.3,.7);
\roundNbox{}{(0,0)}{.3}{.3}{.3}{$g$}
}
\right\rangle
\coloneqq 
\prod_{j=1}^n \delta_{x_j=y_j} \frac{1}{\sqrt{d_{x_j}}}
\cdot
\tikzmath{
\draw (-.3,.3) --node[left]{$\scriptstyle x_1$} (-.3,.7);
\node at (.05,.5) {$\cdots$};
\draw (.3,.3) --node[right]{$\scriptstyle x_n$} (.3,.7);
\roundNbox{}{(0,0)}{.3}{.3}{.3}{$g$}
\roundNbox{}{(0,1)}{.3}{.3}{.3}{$f^\dag$}
}\,.
\]
We will make great use of the {\it fusion/semisimplicity relation} for UFCs;
that is, 
\begin{equation}
\label{eq:Fusion}
\sum_{z\in\Irr(\cX)} \sqrt{d_z}\cdot
\tikzmath{
\draw (-.3,-.6) node[below]{$\scriptstyle x$} -- (0,-.3) -- (.3,-.6) node[below]{$\scriptstyle y$};
\draw (-.3,.6) node[above]{$\scriptstyle x$} -- (0,.3) -- (.3,.6) node[above]{$\scriptstyle y$};
\draw (0,-.3) --node[left]{$\scriptstyle z$} (0,.3);
\filldraw[red] (0,-.3) circle (.05cm);
\filldraw[red] (0,.3) circle (.05cm);
}
=
\sqrt{d_xd_y}\cdot
\tikzmath{
\draw (-.3,-.6) node[below]{$\scriptstyle x$} --(-.3,.6) node[above]{$\scriptstyle x$};
\draw (.3,-.6) node[below]{$\scriptstyle y$} --(.3,.6) node[above]{$\scriptstyle y$};
}
\end{equation}
where as in \cite[\S2.5]{MR3663592}, the pair of shaded nodes above means summing over an orthonormal basis (ONB) of the trivalent skein module $\cS_\cX(\bbD,3)$.

We now analyze the skein module $\cH_n\coloneqq \cS_\cX(\bbD,2n)$ as a standard form for $\fB_n\coloneqq \End_\cX(X^{\otimes n})$.
For $\varphi\in \cX(a_1\otimes \cdots \otimes a_n\to b_1\otimes \cdots \otimes b_n)$
where all $a_j,b_j\in \Irr(\cX)$,
we have the gluing operator $\Gamma_\varphi$ on $\cH_n$ given by
\[
\tikzmath{
\draw (-.3,.3) --node[left]{$\scriptstyle y_1$} (-.3,.7);
\draw (-.3,-.3) --node[left]{$\scriptstyle x_1$} (-.3,-.7);
\node at (.05,.5) {$\cdots$};
\node at (.05,-.5) {$\cdots$};
\draw (.3,.3) --node[right]{$\scriptstyle y_n$} (.3,.7);
\draw (.3,-.3) --node[right]{$\scriptstyle x_n$} (.3,-.7);
\roundNbox{}{(0,0)}{.3}{.3}{.3}{$f$}
}
\overset{\Gamma_\varphi}\longmapsto 
\prod_{j=1}^n \delta_{y_j=a_j} \left(\frac{d_{b_j}}{d_{a_j}}\right)^{1/4}
\tikzmath{
\draw (-.3,.3) --node[left]{$\scriptstyle a_1$} (-.3,.7);
\draw (-.3,-.3) --node[left]{$\scriptstyle x_1$} (-.3,-.7);
\node at (.05,.5) {$\cdots$};
\node at (.05,-.5) {$\cdots$};
\draw (.3,.3) --node[right]{$\scriptstyle a_n$} (.3,.7);
\draw (.3,-.3) --node[right]{$\scriptstyle x_n$} (.3,-.7);
\draw (-.3,1.3) --node[left]{$\scriptstyle b_1$} (-.3,1.7);
\node at (.05,1.5) {$\cdots$};
\draw (.3,1.3) --node[right]{$\scriptstyle b_n$} (.3,1.7);
\roundNbox{}{(0,0)}{.3}{.3}{.3}{$f$}
\roundNbox{}{(0,1)}{.3}{.3}{.3}{$\varphi$}
}
\]
The fourth roots above are needed to ensure that $\Gamma_\varphi^\dag = \Gamma_{\varphi^\dag}$ \cite[Eq.~(2)]{2305.14068}, i.e., $\Gamma: \fB_n\to B(\cH_n)$ is a unital $*$-algebra map.
We also have a precomposition gluing operator $\widetilde{\Gamma}_\varphi$ on $\cH_n$ given by
\[
\tikzmath{
\draw (-.3,.3) --node[left]{$\scriptstyle y_1$} (-.3,.7);
\draw (-.3,-.3) --node[left]{$\scriptstyle x_1$} (-.3,-.7);
\node at (.05,.5) {$\cdots$};
\node at (.05,-.5) {$\cdots$};
\draw (.3,.3) --node[right]{$\scriptstyle y_n$} (.3,.7);
\draw (.3,-.3) --node[right]{$\scriptstyle x_n$} (.3,-.7);
\roundNbox{}{(0,0)}{.3}{.3}{.3}{$f$}
}
\overset{\widetilde{\Gamma}_\varphi}\longmapsto 
\prod_{j=1}^n \delta_{x_j=b_j} \left(\frac{d_{a_j}}{d_{b_j}}\right)^{1/4}
\tikzmath{
\draw (-.3,.3) --node[left]{$\scriptstyle b_1$} (-.3,.7);
\draw (-.3,-.3) --node[left]{$\scriptstyle a_1$} (-.3,-.7);
\node at (.05,.5) {$\cdots$};
\node at (.05,-.5) {$\cdots$};
\draw (.3,.3) --node[right]{$\scriptstyle b_n$} (.3,.7);
\draw (.3,-.3) --node[right]{$\scriptstyle a_n$} (.3,-.7);
\draw (-.3,1.3) --node[left]{$\scriptstyle y_1$} (-.3,1.7);
\node at (.05,1.5) {$\cdots$};
\draw (.3,1.3) --node[right]{$\scriptstyle y_n$} (.3,1.7);
\roundNbox{}{(0,0)}{.3}{.3}{.3}{$\varphi$}
\roundNbox{}{(0,1)}{.3}{.3}{.3}{$f$}
}\,
\]
where again, $\widetilde{\Gamma}_\varphi^\dag = \widetilde{\Gamma}_{\varphi^\dag}$.
If we let $J_n\coloneqq  \dag: \cH_n\to \cH_n$, which is an involution (a period 2 conjugate-linear unitary isomorphism), we compute that
\[
(J_n \widetilde{\Gamma}_{\varphi^\dag} J_n )
f
=
\prod_{j=1}^n \delta_{y_j=a_j} \left(\frac{d_{b_j}}{d_{a_j}}\right)^{1/4}
J_n\left(
\tikzmath{
\draw (-.3,.3) --node[left]{$\scriptstyle a_1$} (-.3,.7);
\draw (-.3,-.3) --node[left]{$\scriptstyle b_1$} (-.3,-.7);
\node at (.05,.5) {$\cdots$};
\node at (.05,-.5) {$\cdots$};
\draw (.3,.3) --node[right]{$\scriptstyle a_n$} (.3,.7);
\draw (.3,-.3) --node[right]{$\scriptstyle b_n$} (.3,-.7);
\draw (-.3,1.3) --node[left]{$\scriptstyle x_1$} (-.3,1.7);
\node at (.05,1.5) {$\cdots$};
\draw (.3,1.3) --node[right]{$\scriptstyle x_n$} (.3,1.7);
\roundNbox{}{(0,0)}{.3}{.3}{.3}{$\varphi^\dag$}
\roundNbox{}{(0,1)}{.3}{.3}{.3}{$f^\dag$}
}
\right)
=
\prod_{j=1}^n \delta_{y_j=a_j} \left(\frac{d_{b_j}}{d_{a_j}}\right)^{1/4}
\tikzmath{
\draw (-.3,.3) --node[left]{$\scriptstyle a_1$} (-.3,.7);
\draw (-.3,-.3) --node[left]{$\scriptstyle x_1$} (-.3,-.7);
\node at (.05,.5) {$\cdots$};
\node at (.05,-.5) {$\cdots$};
\draw (.3,.3) --node[right]{$\scriptstyle a_n$} (.3,.7);
\draw (.3,-.3) --node[right]{$\scriptstyle x_n$} (.3,-.7);
\draw (-.3,1.3) --node[left]{$\scriptstyle b_1$} (-.3,1.7);
\node at (.05,1.5) {$\cdots$};
\draw (.3,1.3) --node[right]{$\scriptstyle b_n$} (.3,1.7);
\roundNbox{}{(0,0)}{.3}{.3}{.3}{$f$}
\roundNbox{}{(0,1)}{.3}{.3}{.3}{$\varphi$}
}
=
\Gamma_\varphi f
.
\]
We can summarize the above calculation as follows, where we write $\Omega_n\in \cH_n$ to denote the cyclic and separating vector $\id_n\in \fB_n$.

\begin{lem}
\label{lem:SkeinModuleJOps}
The skein module $\cH_n$ with the positive cone
$P_n=\operatorname{span}_{\bbR_{\geq 0}}\set{\Gamma_\varphi \widetilde{\Gamma}_{\varphi^\dag}\Omega_n}{\varphi\in \fB_n}$
is a standard form for $\fB_n$
satisfying
$\displaystyle J_n\Gamma_{\varphi}^\dag J_n
=
\widetilde{\Gamma}_\varphi$.
\end{lem}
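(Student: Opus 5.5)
The plan is to verify Haagerup's axioms for a standard form directly. Everything is finite dimensional, so no closures or domain questions arise. The most economical route is to show that $\Omega_n$ is cyclic and separating for $\Gamma(\fB_n)$ and that $J_n$ is the modular conjugation of the pair $(\Gamma(\fB_n),\Omega_n)$. Then $P_n$ is automatically the natural positive cone, because the natural cone is the closed convex hull of $\set{xJxJ\Omega_n}{x\in\Gamma(\fB_n)}$.

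First I will record the algebraic facts.
\begin{itemize}
\item $\Gamma$ is a unital $*$-representation of $\fB_n$ and $\widetilde\Gamma$ is a unital $*$-anti-representation; the $*$-compatibility is the quoted identity $\Gamma_\varphi^\dag=\Gamma_{\varphi^\dag}$ from \cite{2305.14068}.
\item Post- and precomposition commute, so $[\Gamma_\varphi,\widetilde\Gamma_\phi]=0$.
\item Bending the top $n$ strands down identifies $\cH_n$ linearly with $\bigoplus_{\vec a,\vec b}\cX(a_1\otimes\cdots\otimes a_n\to b_1\otimes\cdots\otimes b_n)\cong\fB_n$, since each simple appears exactly once in $X$. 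Under this identification $\Gamma_\varphi$ and $\widetilde\Gamma_\varphi$ are left and right multiplication by $\varphi$, twisted by the diagonal scalars $(d_{b_j}/d_{a_j})^{\pm1/4}$.
\item Hence $\Gamma_\varphi\Omega_n$ equals $\varphi$ up to an invertible diagonal rescaling. So $\Omega_n$ is cyclic, and it is separating because $\Gamma$ is faithful.
\end{itemize}

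Second, I will identify the modular data. Let $\rho\in\fB_n$ be the positive invertible diagonal element acting on the $\vec x$-isotypic summand by $\prod_j d_{x_j}$. Up to normalization, this is the density of $\psi_n$ from Examples~\ref{exs:KnownBoundaryAlgebras}. I expect the vector state $\langle\Omega_n|\Gamma_{(\cdot)}\Omega_n\rangle$ to equal $\tr_\cX(\rho^{1/2}\,\cdot\,)$ up to a scalar, with the square root coming from the $1/\sqrt{d_{x_j}}$ weights in the inner product. I then set $\Delta_n\coloneqq\Gamma_{\rho^{\alpha}}\widetilde\Gamma_{\rho^{-\alpha}}$ for the appropriate exponent $\alpha$. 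The key computation is
\[
J_n\Delta_n^{1/2}\Gamma_\varphi\Omega_n=\Gamma_{\varphi^\dag}\Omega_n .
\]
On a matrix unit $\varphi\colon\vec a\to\vec b$, this reduces to checking that the scalar $(d_{\vec b}/d_{\vec a})^{1/4}$ produced by $\Gamma$ is exactly cancelled by $\Delta_n^{1/2}$ after $J_n=\dag$ swaps $\vec a$ and $\vec b$. Once this holds, uniqueness of the polar decomposition of $S$ gives that $J_n$ is the modular conjugation and $\Delta_n$ is the modular operator. For this I must also check that $J_n$ is antiunitary, i.e.\ $\langle f^\dag|g^\dag\rangle=\overline{\langle f|g\rangle}$; this follows from the sphericality of $\tr_\cX$, by rotating the closed diagram.

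Third, I will read off the conclusions. Tomita--Takesaki theory gives $J_n\Gamma(\fB_n)J_n=\Gamma(\fB_n)'$, together with the standard-form axioms $J_nzJ_n=z^*$ for $z$ central and $J_n\xi=\xi$ for $\xi\in P_n$. The displayed computation preceding the lemma gives $J_n\widetilde\Gamma_{\varphi^\dag}J_n=\Gamma_\varphi$. Conjugating by $J_n$ and using $\Gamma_\varphi^\dag=\Gamma_{\varphi^\dag}$ yields $J_n\Gamma_\varphi^\dag J_n=\widetilde\Gamma_\varphi$. In particular the commutant is $\widetilde\Gamma(\fB_n)$. Finally, $\Gamma_\varphi\widetilde\Gamma_{\varphi^\dag}\Omega_n=\Gamma_\varphi J_n\Gamma_\varphi J_n\Omega_n$, because $J_n\Omega_n=\Omega_n$. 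So $P_n$ is precisely the natural cone $\set{xJxJ\Omega_n}{x\in\Gamma(\fB_n)}$, which is self-dual and satisfies all of Haagerup's axioms.

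The main obstacle is the second step: the bookkeeping of the quantum-dimension weights. These are the fourth roots in $\Gamma$ and $\widetilde\Gamma$ and the $1/\sqrt{d}$ factors in the inner product, and they must combine so that $J_n$ is exactly the modular conjugation, with the correct exponent of $\rho$ in $\Delta_n$. I would first test the computation on $n=1$ with a single matrix unit $a\to b$, and then extend multiplicatively over the tensor factors.
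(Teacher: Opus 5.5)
Your proposal is correct. For the identity $J_n\Gamma_\varphi^\dag J_n=\widetilde\Gamma_\varphi$ you use exactly what the paper uses. The lemma is stated as a summary of the graphical computation $J_n\widetilde\Gamma_{\varphi^\dag}J_n=\Gamma_\varphi$ displayed just before it, and conjugating by the involution $J_n$ gives the stated form.

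Where you differ is the ``standard form'' part. The paper treats it as evident once $J_n$ is seen to be an antiunitary involution exchanging the commuting left and right gluing actions. It computes modular data only afterwards, for the weight $\omega$ on $L^2(\fB_n,\omega)$, and transports them to $\cH_n$ through the unitary $u$. You instead compute $S$ and $\Delta_n$ directly on $\cH_n$, and identify $J_n$ with the Tomita conjugation of $(\Gamma(\fB_n),\Omega_n)$ by uniqueness of polar decomposition. From general Tomita--Takesaki theory you then read off $J_n\Gamma(\fB_n)J_n=\Gamma(\fB_n)'$, self-duality of $P_n$, and the remaining Haagerup axioms. This verifies the part the paper leaves implicit, at the cost of explicit modular bookkeeping; the paper's route keeps the identification of $J_n$ purely diagrammatic. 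To equate $P_n$ with the natural cone without taking closures, note that in finite dimensions the set $\{xJ_nxJ_n\Omega_n\}$ is already the whole natural cone.

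One caution: your heuristic guess for the vector state is wrong. Write $d_{\vec a}\coloneqq d_{a_1}\cdots d_{a_n}$. Because $\cH_n$ has $2n$ boundary points, the inner-product normalizations contribute $1/d_{\vec a}$ on the diagonal, not a square root. Hence $\langle\Omega_n|\Gamma_\varphi\Omega_n\rangle=\delta_{\vec a=\vec b}\,d_{\vec a}^{-1}\tr_\cX(\varphi)=\tr_\cX(\rho^{-1}\varphi)$. This is the weight $\omega$ of \eqref{eq:SkeinModuleWeight}, the ``opposite'' of $\psi_n$, not a multiple of $\tr_\cX(\rho^{1/2}\,\cdot\,)$.

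Your key identity therefore forces $\alpha=-1$. The operator $\Delta_n=\Gamma_{\rho^{-1}}\widetilde\Gamma_{\rho}$ acts on $\varphi\colon\vec a\to\vec b$ by the scalar $d_{\vec a}/d_{\vec b}$, and then $J_n\Delta_n^{1/2}\Gamma_\varphi\Omega_n=(d_{\vec a}/d_{\vec b})^{1/4}\varphi^\dag=\Gamma_{\varphi^\dag}\Omega_n$, as required. Since you left $\alpha$ to be fixed by this identity, nothing breaks, but a $\rho^{1/2}$-based ansatz would have failed the check.
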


Observe that this standard form (Hilbert algebra) has non-trivial modular automorphism group:
\begin{equation}
\label{eq:ModularGroupOnSkeinModule}
\Gamma_\varphi \Omega_n
=
\left(\frac{d_{b_1}\cdots d_{b_n}}{d_{a_1}\cdots d_{a_n}}\right)^{1/4}
\tikzmath{
\draw (-.3,.3) --node[left]{$\scriptstyle a_1$} (-.3,.7);
\node at (.05,.5) {$\cdots$};
\draw (.3,.3) --node[right]{$\scriptstyle a_n$} (.3,.7);
\draw (-.3,1.3) --node[left]{$\scriptstyle b_1$} (-.3,1.7);
\node at (.05,1.5) {$\cdots$};
\draw (.3,1.3) --node[right]{$\scriptstyle b_n$} (.3,1.7);
\roundNbox{}{(0,1)}{.3}{.3}{.3}{$\varphi$}
}
=
\left(\frac{d_{b_1}\cdots d_{b_n}}{d_{a_1}\cdots d_{a_n}}\right)^{1/2}
\widetilde{\Gamma}_\varphi \Omega_n.
\end{equation}
We define a faithful weight (non-normalized positive linear functional) given on
$\varphi\in \cX(a_1\otimes \cdots \otimes a_n\to b_1\otimes \cdots \otimes b_n)$
by
\begin{equation}
\label{eq:SkeinModuleWeight}
\omega(\varphi)
\coloneqq 
\langle \Omega_n | \Gamma_\varphi\Omega_n\rangle_{\cH_n}
=
\left(\frac{d_{b_1}\cdots d_{b_n}}{d_{a_1}\cdots d_{a_n}}\right)^{1/4}
\cdot
\langle \Omega_n | \varphi\rangle_{\cH_n}
=
\prod_{j=1}^n \delta_{a_j=b_j} \frac{1}{d_{a_j}}\tr_\cX(\varphi).
\end{equation}
Observe that $\omega(\id_n) = |\Irr(\cX)|^n$.

We now calculate the action of the modular automorphism group $\sigma^\omega$ on $\fB_n$.
The adjoint of the operator
$S_\omega \varphi\coloneqq \varphi^\dag$
is determined by the following calculation, where
$\phi\in \cX(b_1\otimes \cdots \otimes b_n\to a_1\otimes \cdots \otimes a_n)$:
\[
\begin{aligned}
\langle S_\omega \varphi |\phi\rangle
&=
\langle \varphi^\dag |\phi\rangle
\\&=
\left(\frac{1}{d_{b_1}\cdots d_{b_n}}\right)
\cdot
\tr_\cX(\varphi\circ \phi)
\\&=
\left(\frac{d_{a_1}\cdots d_{a_n}}{d_{b_1}\cdots d_{b_n}}\right)
\cdot
\left(\frac{1}{d_{a_1}\cdots d_{a_n}}\right)
\cdot
\tr_\cX(\phi\circ \varphi)
\\&=
\left(\frac{d_{a_1}\cdots d_{a_n}}{d_{b_1}\cdots d_{b_n}}\right)
\cdot
\langle \phi^\dag| \varphi\rangle.
\end{aligned}
\qquad\Longrightarrow\qquad
S_\omega^*(\phi)
=
\left(\frac{d_{a_1}\cdots d_{a_n}}{d_{b_1}\cdots d_{b_n}}\right)
\cdot
\phi^\dag.
\]
Setting $\Delta_\omega\coloneqq S_\omega^*S_\omega$, we have
\[
\Delta_\omega \varphi = S_\omega^*S_\omega \varphi = S_\omega^* \varphi^\dag 
=
\left(\frac{d_{a_1}\cdots d_{a_n}}{d_{b_1}\cdots d_{b_n}}\right)
\cdot
\varphi,
\]
which gives the following formula for the modular automorphism group:
\begin{equation}
\label{eq:SkeinModuleWEightModularAutomorphismGroup}
\sigma^\omega_t(\varphi)
=
\Delta_\omega^{it} \varphi \Delta_\omega^{-it}
=
\left(\frac{d_{a_1}\cdots d_{a_n}}{d_{b_1}\cdots d_{b_n}}\right)^{it}
\varphi.
\end{equation}
Now using $J_\omega = S_\omega\Delta_\omega^{-1/2}$,  we have 
\[
J_\omega \varphi 
= 
\left(\frac{d_{a_1}\cdots d_{a_n}}{d_{b_1}\cdots d_{b_n}}\right)^{-1/2} 
\cdot 
\varphi^\dag
=
\left(\frac{d_{b_1}\cdots d_{b_n}}{d_{a_1}\cdots d_{a_n}}\right)^{1/2}
\cdot 
\varphi^\dag.
\]
We thus see that \eqref{eq:ModularGroupOnSkeinModule} witnesses the usual formula for the right action \cite[(2.5.9)]{UQSLbook};
for $\zeta: b_1\otimes\cdots \otimes b_n \to c_1\otimes \cdots\otimes c_n$,
\[
\zeta\lhd \varphi
\coloneqq 
J_\omega\varphi^\dag J_\omega \zeta
=
\left(\frac{d_{c_1}\cdots d_{c_n}}{d_{b_1}\cdots d_{b_n}}\right)^{1/2}
\cdot
J_\omega\varphi^\dag\zeta^\dag 
=
\left(\frac{d_{a_1}\cdots d_{a_n}}{d_{b_1}\cdots d_{b_n}}\right)^{1/2}
\cdot 
\zeta \varphi
=
\zeta\sigma_{-i/2}^\omega(\varphi).
\]

\begin{lem}
The unitary $u:L^2(\fB_n, \omega)\to \cH_n$ given by 
$\varphi\mapsto \Gamma_\varphi\Omega_n$ intertwines the left and right $\fB_n$-actions.
\end{lem}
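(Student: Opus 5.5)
The plan is to check three things directly from the definitions: that $u$ is isometric, that it is onto, and that it intertwines both actions. Throughout I identify $L^2(\fB_n,\omega)$ with $\fB_n$ with inner product $\langle \varphi|\phi\rangle_\omega=\omega(\varphi^\dag\phi)$, cyclic vector $\id_n$, left action by composition, and right action of $\zeta$ given by $J_\omega\zeta^\dag J_\omega$. By the computation just before the statement, this right action is $\varphi\mapsto \varphi\,\sigma^\omega_{-i/2}(\zeta)$. On $\cH_n$ the left action is $\Gamma$, and the right action is $\widetilde{\Gamma}$, since Lemma \ref{lem:SkeinModuleJOps} gives $J_n\Gamma_\zeta^\dag J_n=\widetilde{\Gamma}_\zeta$.

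\emph{Unitarity.} Since $\Gamma:\fB_n\to B(\cH_n)$ is a unital $*$-homomorphism (using $\Gamma_\varphi^\dag=\Gamma_{\varphi^\dag}$), we get
$$\langle \Gamma_\varphi\Omega_n|\Gamma_\phi\Omega_n\rangle=\langle\Omega_n|\Gamma_{\varphi^\dag\phi}\Omega_n\rangle=\omega(\varphi^\dag\phi)$$
by \eqref{eq:SkeinModuleWeight}. Hence $u$ is a well-defined isometry; faithfulness of $\omega$ means there is no kernel to quotient by. For surjectivity, take $\varphi\in\cX(a_1\otimes\cdots\otimes a_n\to b_1\otimes\cdots\otimes b_n)$ with simple $a_j,b_j$. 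By \eqref{eq:ModularGroupOnSkeinModule}, $\Gamma_\varphi\Omega_n$ is a nonzero scalar multiple of the skein vector $\varphi$ itself. These $\varphi$ span $\cH_n$ under the identification of $\cS_\cX(\bbD,2n)$ with $\End_\cX(X^{\otimes n})$ used in the pictures. Since everything is finite dimensional, $u$ is unitary.

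\emph{Left action.} This is immediate: $u(\psi\varphi)=\Gamma_{\psi\varphi}\Omega_n=\Gamma_\psi\Gamma_\varphi\Omega_n=\Gamma_\psi u(\varphi)$.

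\emph{Right action.} This is the only step needing care, because it involves tracking the fourth-root normalizations. I need $u(\varphi\,\sigma^\omega_{-i/2}(\zeta))=\widetilde{\Gamma}_\zeta\Gamma_\varphi\Omega_n$.

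First, $\Gamma_\varphi$ and $\widetilde{\Gamma}_\zeta$ commute. Post- and precomposition commute as diagrams. The Kronecker deltas concern independent ends (top labels versus bottom labels). The scalar prefactors $(d_b/d_a)^{1/4}$ and $(d_{a'}/d_{b'})^{1/4}$ depend only on the labels of $\varphi$ and $\zeta$, so they multiply regardless of order. Hence $\widetilde{\Gamma}_\zeta\Gamma_\varphi\Omega_n=\Gamma_\varphi\widetilde{\Gamma}_\zeta\Omega_n$.

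Next, for $\zeta\in\cX(a_1\otimes\cdots\otimes a_n\to b_1\otimes\cdots\otimes b_n)$, equation \eqref{eq:ModularGroupOnSkeinModule} gives
$$\widetilde{\Gamma}_\zeta\Omega_n=\left(\frac{d_{a_1}\cdots d_{a_n}}{d_{b_1}\cdots d_{b_n}}\right)^{1/2}\Gamma_\zeta\Omega_n.$$
Evaluating \eqref{eq:SkeinModuleWEightModularAutomorphismGroup} at $t=-i/2$ gives the same scalar: $\sigma^\omega_{-i/2}(\zeta)=\bigl(d_{a_1}\cdots d_{a_n}/d_{b_1}\cdots d_{b_n}\bigr)^{1/2}\zeta$. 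So $\widetilde{\Gamma}_\zeta\Omega_n=\Gamma_{\sigma^\omega_{-i/2}(\zeta)}\Omega_n$.

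Combining the two,
$$\widetilde{\Gamma}_\zeta u(\varphi)=\Gamma_\varphi\Gamma_{\sigma^\omega_{-i/2}(\zeta)}\Omega_n=u\bigl(\varphi\,\sigma^\omega_{-i/2}(\zeta)\bigr)=u\bigl(J_\omega\zeta^\dag J_\omega\varphi\bigr).$$
This holds on the spanning set of $\varphi$ and $\zeta$ with simple labels, and extends by linearity. The only point to double-check is that the exponent conventions in \eqref{eq:ModularGroupOnSkeinModule} and \eqref{eq:SkeinModuleWEightModularAutomorphismGroup} produce exactly the same factor; this is where a sign error would show up.
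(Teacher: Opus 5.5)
Your proof is correct and is essentially the paper's direct computation: the paper verifies $\Gamma_\varphi u\zeta=u(\varphi\zeta)$ and $\widetilde{\Gamma}_\varphi u\zeta=u(\zeta\,\sigma^\omega_{-i/2}(\varphi))$ by expanding both sides from the definitions and tracking the fourth-root factors. Your argument reorganizes that same calculation through multiplicativity of $\Gamma$, the commutation $\Gamma_\varphi\widetilde{\Gamma}_\zeta=\widetilde{\Gamma}_\zeta\Gamma_\varphi$, and \eqref{eq:ModularGroupOnSkeinModule}, and your exponent check $\sigma^\omega_{-i/2}(\zeta)=(d_{a_1}\cdots d_{a_n}/d_{b_1}\cdots d_{b_n})^{1/2}\zeta$ is right; your check that $u$ is isometric and onto is a useful addition, since the paper only asserts that $u$ is unitary.
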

\begin{proof}
For $\zeta: c_1\otimes\cdots \otimes c_n\to a_1\otimes\cdots\otimes a_n$,
\[
\Gamma_\varphi
u \zeta
=
\Gamma_\varphi
\Gamma_\zeta
\Omega_n
=
\left(\frac{d_{b_1}\cdots d_{b_n}}{d_{c_1}\cdots d_{c_n}}\right)^{1/4}
\cdot\varphi \zeta
=
\Gamma_{\varphi \zeta}\Omega_n
=
u(\varphi \zeta).
\]
Similarly, if $\zeta: b_1\otimes\cdots\otimes b_n \to c_1\otimes\cdots\otimes c_n$,
\begin{align*}
\widetilde{\Gamma}_\varphi u\zeta
&=
\widetilde{\Gamma}_\varphi \Gamma_\zeta
\Omega_n
=
\left(\frac{d_{a_1}\cdots d_{a_n}}{d_{b_1}\cdots d_{b_n}}\right)^{1/4}
\left(\frac{d_{c_1}\cdots d_{c_n}}{d_{b_1}\cdots d_{b_n}}\right)^{1/4}
\zeta\varphi
\\&=
\left(\frac{d_{a_1}\cdots d_{a_n}}{d_{b_1}\cdots d_{b_n}}\right)^{1/2}
\Gamma_{\zeta\varphi}\Omega_n
=
\left(\frac{d_{a_1}\cdots d_{a_n}}{d_{b_1}\cdots d_{b_n}}\right)^{1/2}
u\zeta\varphi
=
u(\zeta\sigma^\omega_{-i/2}(\varphi)).
\qedhere
\end{align*}
\end{proof}

\begin{rem}
The canonical state $\psi$ on the boundary algebra $\fB_n$ \cite[Prop.~5.5]{MR4945955} is given on
$\varphi\in \cX(a_1\otimes \cdots \otimes a_n\to b_1\otimes \cdots \otimes b_n)$ by
\[
\psi(\varphi)
\coloneqq 
\frac{1}{D_\cX^n}
\prod_{j=1}^n \delta_{a_j=b_j} d_{a_j} \tr_\cX(\varphi).
\]
Similar to the calculation above, 
\[
\Delta_\psi \varphi = \left(\frac{d_{b_1}\cdots d_{b_n}}{d_{a_1}\cdots d_{a_n}}\right)
\cdot
\varphi,
\]
giving the fomula (c.f.~\cite[Eq.~(5.7)]{MR4945955}\footnote{For the Tomita-Takesaki modular theory, it is standard to use $\beta=-1$ KMS states, whereas in \cite[Eq.~(5.7)]{MR4945955}, we worked with a $\beta=1$ KMS state.})

\begin{equation}
\label{eq:SigmaPsi}
\sigma^\psi_t(\varphi)
=
\Delta_\psi^{it}\varphi \Delta_\psi^{-it}
=
\left(\frac{d_{b_1}\cdots d_{b_n}}{d_{a_1}\cdots d_{a_n}}\right)^{it}
\cdot \varphi.
\end{equation}
Thus $\sigma^\psi_t = \sigma^\omega_{-t}$, i.e., $\psi$ and $\omega$ are in some sense `opposite' weights.\footnote{If $M$ is a von Neumann algebra with weight $\phi$, we can define $\phi^{\op}$ on $M^{\op}$ by $\phi^{\op}(x^{\op})=\phi(x)$.
Then $\sigma^{\phi^{\op}}_t(x^{\op})=(\sigma_{-t}^\phi(x))^{\op}$. 
Or in other words, a $\beta$-KMS state for $t \mapsto \sigma_t$ is obviously a $(-\beta)$-KMS state for $t \mapsto \sigma_{-t}$.}
\end{rem}

\subsection{The Levin-Wen model}
\label{sec:LevinWen}

We now consider the unitary tensor category version of the Levin-Wen model from \cite[\S2]{2305.14068}.
We work on a $\bbZ^2$ lattice, where each vertex carries the 4-valent skein module $\cH_v\coloneqq \cS_\cX(\bbD,4)\cong \End_\cX(X^{\otimes 2})$.
We have edge operators $A_\ell$ for each edge $\ell$ of our lattice $\cL$ that glue the skein module along the edge $\ell$.
Note that the operators $A_\ell$ enforce that the labels along $\ell$ match, or we get zero.
For a region $R$, we write $p^A_R$ for the projection onto the image of $\prod_{\ell\in R} A_\ell$.

Given a plaquette $p\subset R$, we have a plaquette operator $B_p\coloneqq \frac{1}{D_\cX}\sum_{x\in\Irr(\cX)} d_x \cdot B_p^x$ defined on the image of $\prod_{\ell\in p} A_\ell$.
The operator $B_p$ is a weighted sum of the $B_p^x$ that glue in a loop labeled by the simple object $x$ into the punctured skein module corresponding to the image of $\prod_{\ell\in p} A_\ell$.
For a region $R$, we write $p_R^B=\prod_{p\in R} B_p$ and finally
$$
p_R\coloneqq p_R^Bp_R^A.
$$
Then $(p_R)$ is a net of projections of a quantum spin system that satisfies the LTO axioms \cite[Thm.~4.8]{MR4945955}.
Moreover, for a disk-like region $R$, $\im(p_R)$ is isomorphic to the skein-module $\cS_\cX(\bbD,|\partial R|)$, where $|\partial R|$ denotes the number of edges that intersect $\partial R$ \cite[Thm.~2.9]{2305.14068} (see also \cite{MR3204497}).
We refer the reader to \cite[\S4]{MR4945955} for further details.

We now equip our $\bbZ^2$ lattice $\cL$ with a $\bbZ/2$-symmetry $\theta$ that reflects about a 1D hyperplane $\cK$.

\begin{assume}
\label{assume:yAxisCut}
For simplicty, we will consider our $\bbZ^2$ lattice as a subset of $\bbR^2$, offset by $(1/2,0)$, and our hyperplane $\cK$ will be the $y$-axis in $\bbR^2$.    
\end{assume}

The conjugate linear unitary $\theta$ maps $\cH_{(x,y)}$ to $\cH_{(-x,y)}$ by
\[
\cH_{(x,y)}
\ni
\tikzmath{
\draw (-.7,0) --node[below]{$\scriptstyle a$} (-.3,0);
\draw (.7,0) --node[above]{$\scriptstyle d$} (.3,0);
\draw (0,-.7) --node[left]{$\scriptstyle b$} (0,-.3);
\draw (0,.7) --node[right]{$\scriptstyle c$} (0,.3);
\draw[thin, cyan] (-.7,.7) -- (.7,-.7);
\draw[->, cyan, thin] (-.6,.4) -- (-.4,.6);
\draw[<-, cyan, thin] (.6,-.4) -- (.4,-.6);
\roundNbox{fill=white}{(0,0)}{.3}{0}{0}{$f$}
}
\coloneqq 
\tikzmath{
\draw (-.2,.3) --node[left]{$\scriptstyle c$} (-.2,.7);
\draw (.2,.3) --node[right]{$\scriptstyle d$} (.2,.7);
\draw (-.2,-.3) --node[left]{$\scriptstyle a$} (-.2,-.7);
\draw (.2,-.3) --node[right]{$\scriptstyle b$} (.2,-.7);
\roundNbox{}{(0,0)}{.3}{.1}{.1}{$f$}
}
\qquad\longmapsto\qquad
\tikzmath{
\draw (-.7,0) --node[below]{$\scriptstyle d$} (-.3,0);
\draw (.7,0) --node[above]{$\scriptstyle a$} (.3,0);
\draw (0,-.7) --node[left]{$\scriptstyle \overline{b}$} (0,-.3);
\draw (0,.7) --node[right]{$\scriptstyle \overline{c}$} (0,.3);
\draw[thin, cyan] (-.7,.7) -- (.7,-.7);
\draw[->, cyan, thin] (-.6,.4) -- (-.4,.6);
\draw[<-, cyan, thin] (.6,-.4) -- (.4,-.6);
\roundNbox{fill=white}{(0,0)}{.3}{0}{0}{$\theta f$}
}
\coloneqq 
\tikzmath{
\draw (-.2,.3) --node[right]{$\scriptstyle a$} (-.2,.7);
\draw (.2,.3) arc(180:0:.2cm) --node[right]{$\scriptstyle \overline{b}$} (.6,-.7);
\draw (-.2,-.3) arc(0:-180:.2cm) --node[left]{$\scriptstyle \overline{c}$} (-.6,.7);
\draw (.2,-.3) --node[left]{$\scriptstyle d$} (.2,-.7);
\roundNbox{}{(0,0)}{.3}{.1}{.1}{$f^\dag$}
}
\in \cH_{(-x,y)}\,.
\]

Reflection positivity was first proved for the Levin-Wen model in \cite{MR4132814}.
The following lemma, whose proof is straightforward and left to the reader, shows that this version of the Levin-Wen model is reflection positive in the sense of \cite[Def.~4.1]{2510.20662}.

\begin{lem}
For every edge $\ell$ and plaquette $p$ in $\cL$,
$\theta A_{\ell} \theta = A_{\theta \ell}$ and $\theta B_{p} \theta = B_{\theta p}$.
Hence $p_{\theta R}=p_R$ for every disk-like region $R$.
\end{lem}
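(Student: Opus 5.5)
The plan is to verify the identities at the level of generators, then deduce the statement about $p_R$ from the definitions. The lemma has three parts: $\theta A_\ell\theta=A_{\theta\ell}$, $\theta B_p\theta=B_{\theta p}$, and the consequence $p_{\theta R}=p_R$. (The last is presumably intended as $\Theta(p_R)=\theta p_R\theta = p_{\theta R}$, so that $\Theta\coloneqq \theta(\,\cdot\,)\theta$ satisfies the standing assumptions of the section.) Throughout, $\theta$ denotes the conjugate-linear unitary $\bigotimes_v \cH_v\to\bigotimes_v\cH_{\theta v}$ obtained by tensoring the local maps displayed above.

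\textbf{Step 1: the maps $\theta$ and the edge operators.} First check that each local map $\cH_{(x,y)}\to\cH_{(-x,y)}$ is a conjugate-linear unitary, and that it squares to the identity up to the identification $\cH_{(-x,y)}\to\cH_{(x,y)}$. This uses three facts:
\begin{itemize}
\item $f\mapsto f^\dag$ is conjugate-linear;
\item rotating by cups and caps using the unitary spherical structure is unitary for the skein inner product, since the $1/\sqrt{d}$ normalizations are invariant under $x\mapsto\overline{x}$;
\item the pivotal structure makes the double rotation trivial.
\end{itemize}
Next consider $A_\ell$. Its only effect is to glue, or match, labels along $\ell$. The reflection sends horizontal edges to horizontal edges: the label $a$ on the left leg becomes the label on the right leg at the mirror vertex, with no dual, because horizontal legs are not bent. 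Vertical edges go to vertical edges with labels dualized at both endpoints. Hence the matching condition along $\ell$ transforms into the matching condition along $\theta\ell$. The gluing is implemented by contracting with an ONB of the relevant hom space. Conjugating with $\dag$ sends an ONB to an ONB, and the $\dag$ and rotations commute with gluing. Therefore $\theta A_\ell\theta=A_{\theta\ell}$.

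\textbf{Step 2: the plaquette operators.} The operator $B_p^x$ glues an $x$-loop into the punctured skein module on the image of $\prod_{\ell\in p}A_\ell$. Applying $\dag$ reverses orientation, so conjugation by $\theta$ turns the $x$-loop around $p$ into an $\overline{x}$-loop around $\theta p$. Because $\theta$ is conjugate-linear, the coefficients $d_x/D_\cX$ are complex-conjugated, but they are real, so they are unchanged. Since $d_{\overline{x}}=d_x$, reindexing the sum over $\Irr(\cX)$ gives $\theta B_p\theta=\frac1{D_\cX}\sum_x d_x B^{\overline{x}}_{\theta p}=B_{\theta p}$.

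\textbf{Step 3: the projections.} It follows that $\theta p_R^A\theta=p^A_{\theta R}$ and $\theta p_R^B\theta=p^B_{\theta R}$, because $\theta$ maps the edges and plaquettes of $R$ bijectively onto those of $\theta R$. Since conjugation by $\theta$ is multiplicative, $\theta p_R\theta=p_{\theta R}$. For $\theta$-symmetric $R$ this is literally $p_{\theta R}=p_R$.

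\textbf{Main obstacle.} The delicate point is Step 1: keeping track of the diagrammatic bookkeeping at vertices adjacent to $\cK$. There one must check that the fourth-root normalizations in the gluing and the bending of vertical legs by cups and caps are compatible, so that gluing commutes with $\theta$ exactly and not merely up to a scalar. I would handle this by computing a single vertex pair $(x,y),(-x,y)$ in coordinates, and using that the inner product formula is invariant under $\dag$ composed with rotation.
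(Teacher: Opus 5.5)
Your proposal is correct and is the direct generator-by-generator diagrammatic check the paper has in mind when it leaves this proof to the reader, including your reading of the last claim as $\theta p_R\theta = p_{\theta R}$. One small slip in Step 2: $\theta$ is a mirror reflection combined with a reversal of every strand's orientation, so on a coupon-free loop the two reversals cancel and in fact $\theta B_p^x\theta = B_{\theta p}^x$ for a fixed orientation convention, not $B_{\theta p}^{\overline{x}}$; this is harmless, because your reindexing with $d_x = d_{\overline{x}}$ gives $\theta B_p\theta = B_{\theta p}$ either way.
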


Due to an annoying clash of convention, we will write $\bbH_+$ for the half space with \emph{negative} $x$-values and $\bbH_-$ for the half-space with \emph{positive} $x$-values.
Given a $\theta$-symmetric disk-like region $R\subset \cL$, we write $R_\pm\coloneqq R\cap \bbH_\pm$, and we observe that $R_\pm$ is always disk-like.
The reason for swapping the typical $\pm$ convention here is that on the local ground state space $\im(p_R)$, which is isomorphic to the skein module $\cS_\cX(\bbD,|\partial R|)$,
the gluing operators $\Gamma_\varphi$ for $\varphi\in \cX(a_1\otimes\cdots \otimes a_n \to b_1\otimes\cdots\otimes b_n)$ along the hyperplane $\cK$ act on $R_+$ by post-composition, whereas on $R_-$ along $\cK$, the gluing operator $\widetilde{\Gamma}_\varphi$ acts by pre-composition.

Similarly to Lemma \ref{lem:SkeinModuleJOps}, when acting on sites along $\cK$, 
one can show that $\theta \Gamma_\varphi^\dag \theta = \widetilde{\Gamma}_\varphi$, where $\Gamma_\varphi$ acts on the right-hand side (by post-composition) of the half-plane $\bbH_+$ on the left of $\cK$, and  $\widetilde{\Gamma}_\varphi$ acts on the left-hand side (by pre-composition) on the half-plane $\bbH_-$ on the right of $\cK$. 
(The proof in the next subsection illustates these conventions.)
We omit the proof which is a straightforward exercise using the graphical calculus.

\subsection{The proof of \ref{LTO:RP} for Levin-Wen models}

We now show \ref{LTO:RP} for the Levin-Wen model.
We start with a ground state $|\Omega\rangle$, for which $|\Omega\rangle= \textcolor{red}{p_R}|\Omega\rangle$.
We consider the case where $R \ll_1 S$, and $I=\partial R \cap S$ is $n$ sites along a vertical slice in our Levin-Wen lattice.
\[
\tikzmath{
\draw (-.2,-.2) grid (3.2,4.2);
\draw[thick, cyan, snake] (1.5,.5) -- (1.5,3.5);
\node[cyan, scale=.75] at (1.25,1.5) {$I$};
}
\]
In order to apply the gluing operator $\Gamma_\varphi$, we first use \eqref{eq:Fusion} to resolve all plaquette operators that act on the sites in $I$.
In the diagrams below, to ease the notation, we suppress all unnecessary sums over simples and scalars, only keeping track of labels, sums, and scalars for sites in $I$, which are marked in cyan in the diagram on the left.
\[
\tikzmath{
\draw (-.2,-.2) grid (3.2,4.2);
\foreach \x in {0,1,2}{
\foreach \y in {0,1,2,3}{
\filldraw[red, thick, fill=gray!30, rounded corners=5pt] ($ (\x,\y) + (.2,.2) $) rectangle ($ (\x,\y) + (.8,.8) $);
}}
\draw[thick, cyan, snake] (1.5,.5) -- (1.5,3.5);
\node[cyan, scale=.75] at (1.4,1.8) {$\scriptstyle I$};
}
\begin{tikzcd}
\mbox{}
\arrow[r,squiggly]
&
\mbox{}
\end{tikzcd}
\sum_{\substack{
r_1,r_2,r_3,r_4,
\\
a_1,a_2,a_3
\\
y_1,y_2,y_3 
\\
\in \Irr(\cX)
}}
\frac{\sqrt{d_{r_1}d_{r_4}}\sqrt{d_{a_1}d_{a_2}d_{a_3}}}{D_\cX^{4}\sqrt{d_{x_1}d_{x_2}d_{x_3}}}
\tikzmath{
\draw (-.2,4) -- (5.2,4);
\foreach \x in {0,1,4,5}{
\draw (\x,-.2) -- (\x,4.2);
}
\foreach \y in {0,1,2,3}{
\draw (-.2,\y) -- (5.2,\y);
\foreach \x in {0,4}{
\filldraw[red, thick, fill=gray!30, rounded corners=5pt] ($ (\x,\y) + (.2,.2) $) rectangle ($ (\x,\y) + (.8,.8) $);
}
}
\foreach \y in {1,2,3}{
\node at ($ (1.2,4-\y) + (0,.1) $) {$\scriptstyle x_{\y}$};
\node at ($ (1.6,4-\y) + (0,.1) $) {$\scriptstyle y_{\y}$};
\node at ($ (2.2,4-\y) + (0,.1) $) {$\scriptstyle a_{\y}$};
\node at ($ (2.8,4-\y) + (0,.1) $) {$\scriptstyle a_{\y}$};
\node at ($ (3.4,4-\y) + (0,.1) $) {$\scriptstyle y_{\y}$};
\node at ($ (3.8,4-\y) + (0,.1) $) {$\scriptstyle x_{\y}$};
}
\draw[red, thick] (1.4,1)  -- (1.4,.8) to[out=-90,in=-90] (3.6,.8) -- (3.6,1);
\draw[red, thick] (1.8,3)  -- (1.8,3.4) to[out=90,in=90] (3.2,3.4) -- (3.2,3);
\draw[thick, red] (1.4,3) to[out=-90,in=90] (1.8,2);
\draw[thick, red] (1.4,2) to[out=-90,in=90] (1.8,1);
\draw[thick, red] (3.6,2) to[out=-90,in=90] (3.2,1);
\draw[thick, red] (3.6,3) to[out=-90,in=90] (3.2,2);
\node[red] at (1.6,3.5) {$\scriptstyle r_1$};
\node[red] at (1.4,2.5) {$\scriptstyle r_2$};
\node[red] at (1.4,1.5) {$\scriptstyle r_3$};
\node[red] at (1.2,.7) {$\scriptstyle r_4$};
\filldraw[purple] (1.4,1) circle (.05cm);
\filldraw[purple] (3.6,1) circle (.05cm);
\filldraw[orange] (1.8,1) circle (.05cm);
\filldraw[orange] (3.2,1) circle (.05cm);
\filldraw[yellow] (1.4,2) circle (.05cm);
\filldraw[yellow] (3.6,2) circle (.05cm);
\filldraw[green] (1.8,2) circle (.05cm);
\filldraw[green] (3.2,2) circle (.05cm);
\filldraw[blue] (1.4,3) circle (.05cm);
\filldraw[blue] (3.6,3) circle (.05cm);
\filldraw[violet] (1.8,3) circle (.05cm);
\filldraw[violet] (3.2,3) circle (.05cm);
\draw[thick, cyan, snake] (2.5,.5) -- (2.5,3.5);
}
\]
We now apply $\Gamma_\varphi$ at $I$; ignoring all regions outside the plaquettes containing $I$, we obtain
\[
\left(
\frac{d_{b_1}d_{b_2}d_{b_3}}{d_{a_1}d_{a_2}d_{a_3}}
\right)^{1/4}
\frac{\sqrt{d_{a_1}d_{a_2}d_{a_3}}}{D_\cX^{4}\sqrt{d_{x_1}d_{x_2}d_{x_3}}}
\sum_{\substack{
r_1,r_2,r_3,r_4,
\\
y_1,y_2,y_3 
\\
\in \Irr(\cX)
}}
\sqrt{d_{r_1}d_{r_4}}
\tikzmath{
\draw (.8,4) -- (5.2,4);
\foreach \x in {1,5}{
\draw (\x,-.2) -- (\x,4.2);
}
\foreach \y in {0,1,2,3}{
\draw (.8,\y) -- (5.2,\y);
}
\foreach \y in {1,2,3}{
\node at ($ (1.2,4-\y) + (0,.1) $) {$\scriptstyle x_{\y}$};
\node at ($ (1.6,4-\y) + (0,.1) $) {$\scriptstyle y_{\y}$};
\node at ($ (2,4-\y) + (0,.1) $) {$\scriptstyle a_{\y}$};
\node at ($ (3,4-\y) + (0,.15) $) {$\scriptstyle b_{\y}$};
\node at ($ (4,4-\y) + (0,.1) $) {$\scriptstyle a_{\y}$};
\node at ($ (4.4,4-\y) + (0,.1) $) {$\scriptstyle y_{\y}$};
\node at ($ (4.8,4-\y) + (0,.1) $) {$\scriptstyle x_{\y}$};
}
\draw[red, thick] (1.4,1) arc(-180:0:1.6 and .8) (4.6,1);
\draw[red, thick] (1.8,3) to[out=90,in=90] (4.2,3);
\draw[thick, red] (1.4,3) to[out=-90,in=90] (1.8,2);
\draw[thick, red] (1.4,2) to[out=-90,in=90] (1.8,1);
\draw[thick, red] (4.6,2) to[out=-90,in=90] (4.2,1);
\draw[thick, red] (4.6,3) to[out=-90,in=90] (4.2,2);
\filldraw[fill=white, rounded corners=5pt, very thick] (2.2,.7) rectangle (2.8,3.3);
\node at (2.5,2) {$\varphi$};
\node[red] at (1.8,3.5) {$\scriptstyle r_1$};
\node[red] at (1.4,2.5) {$\scriptstyle r_2$};
\node[red] at (1.4,1.5) {$\scriptstyle r_3$};
\node[red] at (1.3,.7) {$\scriptstyle r_4$};
\filldraw[purple] (1.4,1) circle (.05cm);
\filldraw[purple] (4.6,1) circle (.05cm);
\filldraw[orange] (1.8,1) circle (.05cm);
\filldraw[orange] (4.2,1) circle (.05cm);
\filldraw[yellow] (1.4,2) circle (.05cm);
\filldraw[yellow] (4.6,2) circle (.05cm);
\filldraw[green] (1.8,2) circle (.05cm);
\filldraw[green] (4.2,2) circle (.05cm);
\filldraw[blue] (1.4,3) circle (.05cm);
\filldraw[blue] (4.6,3) circle (.05cm);
\filldraw[violet] (1.8,3) circle (.05cm);
\filldraw[violet] (4.2,3) circle (.05cm);
\draw[thick, cyan, snake, knot] (3.5,.5) -- (3.5,3.5);
}
\]
It suffices to evaluate the above diagram in the annular skein module \cite[\S2.4]{2305.14068}, viewing the interval $I$ as a puncture.
The calculation is now similar to \cite[p27-9]{MR4945955}.
We contract along the $r_1$ strand using the fusion relation \eqref{eq:Fusion} to obtain
$$
\left(
\frac{d_{b_1}d_{b_2}d_{b_3}}{d_{a_1}d_{a_2}d_{a_3}}
\right)^{1/4}
d_{a_1}
\frac{\sqrt{d_{a_2}d_{a_3}}}{D_\cX^{4}\sqrt{d_{x_1}d_{x_2}d_{x_3}}}
\sum_{\substack{
r_2,r_3,r_4,
\\
y_1,y_2,y_3 
\\
\in \Irr(\cX)
}}
\sqrt{d_{y_1}d_{r_4}}
\tikzmath{
\draw (.8,3) --node[above,yshift=-.1cm]{$\scriptstyle x_1$} (1.4,3) to[out=0,in=180] (2.1,3.8) --node[above,yshift=-.1cm]{$\scriptstyle y_1$} (3.9,3.8) to[out=0,in=180] (4.6,3) --node[above,yshift=-.1cm]{$\scriptstyle x_1$} (5.2,3);
\draw (3.8,3) --node[above, yshift=-.1cm]{$\scriptstyle b_1$} (2.2,3) node[left,yshift=-.1cm]{$\scriptstyle a_1$} arc (270:90:.3cm) -- (3.8,3.6) arc (90:-90:.3cm) node[right,yshift=-.1cm]{$\scriptstyle a_1$};
\foreach \y in {1,2}{
\draw (.8,\y) -- (5.2,\y);
}
\foreach \y in {2,3}{
\node at ($ (1.2,4-\y) + (0,.1) $) {$\scriptstyle x_{\y}$};
\node at ($ (1.6,4-\y) + (0,.1) $) {$\scriptstyle y_{\y}$};
\node at ($ (2,4-\y) + (0,.1) $) {$\scriptstyle a_{\y}$};
\node at ($ (3,4-\y) + (0,.15) $) {$\scriptstyle b_{\y}$};
\node at ($ (4,4-\y) + (0,.1) $) {$\scriptstyle a_{\y}$};
\node at ($ (4.4,4-\y) + (0,.1) $) {$\scriptstyle y_{\y}$};
\node at ($ (4.8,4-\y) + (0,.1) $) {$\scriptstyle x_{\y}$};
}
\draw[red, thick] (1.4,1) arc(-180:0:1.6 and .8) (4.6,1);
\draw[thick, red] (1.4,3) to[out=-90,in=90] (1.8,2);
\draw[thick, red] (1.4,2) to[out=-90,in=90] (1.8,1);
\draw[thick, red] (4.6,2) to[out=-90,in=90] (4.2,1);
\draw[thick, red] (4.6,3) to[out=-90,in=90] (4.2,2);
\filldraw[fill=white, rounded corners=5pt, very thick] (2.2,.7) rectangle (2.8,3.3);
\node at (2.5,2) {$\varphi$};
\node[red] at (1.4,2.5) {$\scriptstyle r_2$};
\node[red] at (1.4,1.5) {$\scriptstyle r_3$};
\node[red] at (1.3,.7) {$\scriptstyle r_4$};
\filldraw[purple] (1.4,1) circle (.05cm);
\filldraw[purple] (4.6,1) circle (.05cm);
\filldraw[orange] (1.8,1) circle (.05cm);
\filldraw[orange] (4.2,1) circle (.05cm);
\filldraw[yellow] (1.4,2) circle (.05cm);
\filldraw[yellow] (4.6,2) circle (.05cm);
\filldraw[green] (1.8,2) circle (.05cm);
\filldraw[green] (4.2,2) circle (.05cm);
\filldraw[blue] (1.4,3) circle (.05cm);
\filldraw[blue] (4.6,3) circle (.05cm);
\draw[thick, cyan, snake, knot] (3.5,.5) -- (3.5,3.5);
}\,.
$$
Next, we use the fusion relation \eqref{eq:Fusion} to contract along the $y_1$ string to obtain
$$
\left(
\frac{d_{b_1}d_{b_2}d_{b_3}}{d_{a_1}d_{a_2}d_{a_3}}
\right)^{1/4}
d_{a_1}
\frac{\sqrt{d_{a_2}d_{a_3}}}{D_\cX^{4}\sqrt{d_{x_2}d_{x_3}}}
\sum_{\substack{
r_2,r_3,r_4,
\\
y_2,y_3 
\\
\in \Irr(\cX)
}}
\sqrt{d_{r_2}d_{r_4}}
\tikzmath{
\draw (.8,3) -- (1,3) to[out=0,in=180] (1.7,4) --node[above,yshift=-.1cm]{$\scriptstyle x_1$} (4.3,4) to[out=0,in=180] (5,3) -- (5.2,3);
\draw[thick, red] (1.8,2) to[out=90,in=-90] (1.6,3.3) arc (180:90:.5cm) -- (3.9,3.8) arc (90:0:.5cm) to[out=-90,in=90] (4.2,2);
\node[red] at (1.5,2.5) {$\scriptstyle r_2$};
\draw (3.8,3) --node[above, yshift=-.1cm]{$\scriptstyle b_1$} (2.2,3) node[left,yshift=-.1cm]{$\scriptstyle a_1$} arc (270:90:.3cm) -- (3.8,3.6) arc (90:-90:.3cm) node[right,yshift=-.1cm]{$\scriptstyle a_1$};
\foreach \y in {1,2}{
\draw (.8,\y) -- (5.2,\y);
}
\foreach \y in {2,3}{
\node at ($ (1.2,4-\y) + (0,.1) $) {$\scriptstyle x_{\y}$};
\node at ($ (1.6,4-\y) + (0,.1) $) {$\scriptstyle y_{\y}$};
\node at ($ (2,4-\y) + (0,.1) $) {$\scriptstyle a_{\y}$};
\node at ($ (3,4-\y) + (0,.15) $) {$\scriptstyle b_{\y}$};
\node at ($ (4,4-\y) + (0,.1) $) {$\scriptstyle a_{\y}$};
\node at ($ (4.4,4-\y) + (0,.1) $) {$\scriptstyle y_{\y}$};
\node at ($ (4.8,4-\y) + (0,.1) $) {$\scriptstyle x_{\y}$};
}
\draw[red, thick] (1.4,1) arc(-180:0:1.6 and .8) (4.6,1);
\draw[thick, red] (1.4,2) to[out=-90,in=90] (1.8,1);
\draw[thick, red] (4.6,2) to[out=-90,in=90] (4.2,1);
\filldraw[fill=white, rounded corners=5pt, very thick] (2.2,.7) rectangle (2.8,3.3);
\node at (2.5,2) {$\varphi$};
\node[red] at (1.4,1.5) {$\scriptstyle r_3$};
\node[red] at (1.3,.7) {$\scriptstyle r_4$};
\filldraw[purple] (1.4,1) circle (.05cm);
\filldraw[purple] (4.6,1) circle (.05cm);
\filldraw[orange] (1.8,1) circle (.05cm);
\filldraw[orange] (4.2,1) circle (.05cm);
\filldraw[yellow] (1.4,2) circle (.05cm);
\filldraw[yellow] (4.6,2) circle (.05cm);
\filldraw[green] (1.8,2) circle (.05cm);
\filldraw[green] (4.2,2) circle (.05cm);
\draw[thick, cyan, snake, knot] (3.5,.5) -- (3.5,3.5);
}\,.
$$
We then use the fusion relation \eqref{eq:Fusion} to contract along the $r_2$ string to obtain
$$
\left(
\frac{d_{b_1}d_{b_2}d_{b_3}}{d_{a_1}d_{a_2}d_{a_3}}
\right)^{1/4}
d_{a_1}d_{a_2}
\frac{\sqrt{d_{a_3}}}{D_\cX^{4}\sqrt{d_{x_2}d_{x_3}}}
\sum_{\substack{
r_3,r_4,
\\
y_2,y_3 
\\
\in \Irr(\cX)
}}
\sqrt{d_{y_2}d_{r_4}}
\tikzmath{
\draw (.8,3) to[out=0,in=180] (1.7,4.2) --node[above,yshift=-.1cm]{$\scriptstyle x_1$} (4.3,4.2) to[out=0,in=180] (5.2,3);
\draw (3.8,3) --node[above, yshift=-.1cm]{$\scriptstyle b_1$} (2.2,3) node[left,yshift=-.1cm]{$\scriptstyle a_1$} arc (270:90:.3cm) -- (3.8,3.6) arc (90:-90:.3cm) node[right,yshift=-.1cm]{$\scriptstyle a_1$};
\draw (.8,2) --node[above,yshift=-.1cm]{$\scriptstyle x_2$} (1.2,2) to[out=0,in=-90] node[left]{$\scriptstyle y_2$} (1.3,3) to[out=90,in=180] (1.9,4) -- (4.1,4) to[out=0,in=90] (4.7,3) to[out=-90,in=180] (4.8,2) --node[above,yshift=-.1cm]{$\scriptstyle x_2$} (5.2,2);
\draw (3.8,2) --node[above, yshift=-.1cm]{$\scriptstyle b_2$} (2.2,2) node[left,yshift=-.1cm]{$\scriptstyle a_2$} to[out=180,in=-90] (1.6,3.3) arc (180:90:.5cm) -- (3.9,3.8) arc (90:0:.5cm) to[out=-90,in=0] (3.8,2) node[right,yshift=-.1cm]{$\scriptstyle a_2$};
\draw (.8,1) -- (5.2,1);
\node at (1.2,1.1) {$\scriptstyle x_3$};
\node at (1.6,1.1) {$\scriptstyle y_3$};
\node at (2,1.1) {$\scriptstyle a_3$};
\node at (3,1.15) {$\scriptstyle b_3$};
\node at (4,1.1) {$\scriptstyle a_3$};
\node at (4.4,1.1) {$\scriptstyle y_3$};
\node at (4.8,1.1) {$\scriptstyle x_3$};
\draw[red, thick] (1.4,1) arc(-180:0:1.6 and .8) (4.6,1);
\draw[thick, red] (1.2,2) to[out=-90,in=90] (1.8,1);
\draw[thick, red] (4.8,2) to[out=-90,in=90] (4.2,1);
\filldraw[fill=white, rounded corners=5pt, very thick] (2.2,.7) rectangle (2.8,3.3);
\node at (2.5,2) {$\varphi$};
\node[red] at (1.3,1.5) {$\scriptstyle r_3$};
\node[red] at (1.2,.7) {$\scriptstyle r_4$};
\filldraw[purple] (1.4,1) circle (.05cm);
\filldraw[purple] (4.6,1) circle (.05cm);
\filldraw[orange] (1.8,1) circle (.05cm);
\filldraw[orange] (4.2,1) circle (.05cm);
\filldraw[yellow] (1.2,2) circle (.05cm);
\filldraw[yellow] (4.8,2) circle (.05cm);
\draw[thick, cyan, snake, knot] (3.5,.5) -- (3.5,3.5);
}\,.
$$
At this point, it is clear that we can then contract along the $y_2$ string using \eqref{eq:Fusion}, followed by the $r_3$ string afterward to obtain 
\[
\left(
\frac{d_{b_1}d_{b_2}d_{b_3}}{d_{a_1}d_{a_2}d_{a_3}}
\right)^{1/4}
\frac{d_{a_1}d_{a_2}d_{a_3}}{D_\cX^{4}\sqrt{d_{x_3}}}
\sum_{\substack{r_4,y_3\\\in \Irr(\cX)}}
\sqrt{d_{y_3}d_{r_4}}
\tikzmath{
\draw (1.3,.3) --node[above, yshift=-.1cm]{$\scriptstyle b_1$} (-.3,.3) arc(270:90:.3cm) -- (1.3,.9) arc(90:-90:.3cm) node[above, yshift=-.1cm]{$\scriptstyle a_1$};
\draw (1.3,0) --node[above, yshift=-.1cm]{$\scriptstyle b_2$} (-.3,0) arc(270:90:.6cm) -- (1.3,1.2) arc(90:-90:.6cm) node[above, yshift=-.1cm]{$\scriptstyle a_2$};
\draw (1.3,-.3) --node[above, yshift=-.1cm]{$\scriptstyle b_3$} (-.3,-.3) arc(270:90:.9cm) -- (1.3,1.5) arc(90:-90:.9cm) node[above, yshift=-.1cm]{$\scriptstyle a_3$};
\draw (-3,-.3) --node[above, yshift=-.1cm]{$\scriptstyle x_3$} (-2.6,-.3) node[above, xshift=.2cm, yshift=-.1cm]{$\scriptstyle y_3$} to[out=0,in=180] (-.3,1.8) -- (1.3,1.8) to[out=0,in=180] (3.6,-.3) node[above, xshift=-.2cm, yshift=-.1cm]{$\scriptstyle y_3$} --node[above, yshift=-.1cm]{$\scriptstyle x_3$} (4,-.3);
\draw (-3,0) --node[above, yshift=-.1cm]{$\scriptstyle x_2$} (-2.8,0) to[out=0,in=180] (-.3,2.1) -- (1.3,2.1) to[out=0,in=180] (3.8,0) --node[above, yshift=-.1cm]{$\scriptstyle x_2$} (4,0);
\draw (-3,.3) node[above, xshift=.1cm, yshift=-.1cm]{$\scriptstyle x_1$} to[out=0,in=180] (-.3,2.4) -- (1.3,2.4) to[out=0,in=180] (4,.3) node[above, xshift=-.1cm, yshift=-.1cm]{$\scriptstyle x_1$};
\draw[thick, red] (-2.6,-.3) arc(-180:0:3.1cm and .8cm);
\node[red] at (-2.2,-.5) {$\scriptstyle r_4$};
\filldraw[fill=white, rounded corners=5pt, very thick] (-.3,-.6) rectangle (.3,.6);
\node at (0,0) {$\varphi$};
\filldraw[purple] (-2.6,-.3) circle (.05cm);
\filldraw[purple] (3.6,-.3) circle (.05cm);
\draw[thick, cyan, snake, knot] (1,-.6) -- (1,.6);
}
\]
and finally, contracting along the $y_3$ strand yields\footnote{Note that the excitation in \eqref{eq:GammaPhiAction} is manifestly topological;
it moves at no energy cost to other locations, as each localization is individually isomorphic to the skein module of the annulus.
Of course, this excitation is really a copy of the trivial tube algebra representation, as it was created locally by a boundary algebra operator.
Indeed, the interior of the $r_4$ strand that contains $\varphi$ should be viewed as a sphere, as the screening operator works from both sides. 
Thus the tube algebra representation is the cyclic module generated by $\varphi$, which is contained in the trivial representation.
}
\begin{equation}
\label{eq:GammaPhiAction}
\frac{(d_{a_1}d_{a_2}d_{a_3})^{3/4}(d_{b_1}d_{b_2}d_{b_3})^{1/4}}{D_\cX^{4}}
\sum_{r_4\in \Irr(\cX)}
d_{r_4}
\tikzmath{
\draw (1.3,.3) --node[above, yshift=-.1cm]{$\scriptstyle b_1$} (-.3,.3) arc(270:90:.3cm) -- (1.3,.9) arc(90:-90:.3cm) node[above, yshift=-.1cm]{$\scriptstyle a_1$};
\draw (1.3,0) --node[above, yshift=-.1cm]{$\scriptstyle b_2$} (-.3,0) arc(270:90:.6cm) -- (1.3,1.2) arc(90:-90:.6cm) node[above, yshift=-.1cm]{$\scriptstyle a_2$};
\draw (1.3,-.3) --node[above, yshift=-.1cm]{$\scriptstyle b_3$} (-.3,-.3) arc(270:90:.9cm) -- (1.3,1.5) arc(90:-90:.9cm) node[above, yshift=-.1cm]{$\scriptstyle a_3$};
\foreach \y in{1,2,3}{
\draw ($ (-2,3) - .3*(0,\y) $) --node[above, yshift=-.1cm]{$\scriptstyle x_\y$} ($ (3,3) - .3*(0,\y) $);
}
\node[red] at (-.6,-.6) {$\scriptstyle r_4$};
\draw[thick, red] (-.3,-.8) arc(270:90:1.3cm)  -- (1.3,1.8)  arc(90:-90:1.3cm) -- (-.3,-.8);
\filldraw[fill=white, rounded corners=5pt, very thick] (-.3,-.6) rectangle (.3,.6);
\node at (0,0) {$\varphi$};
\draw[thick, cyan, snake, knot] (1,-.6) -- (1,.6);
}
\end{equation}

Now we can also apply $\widetilde{\Gamma}_\varphi$ at $I$; ignoring regions outside the plaquettes containing $I$, we obtain
\[
\left(
\frac{d_{a_1}d_{a_2}d_{a_3}}{d_{b_1}d_{b_2}d_{b_3}}
\right)^{1/4}
\frac{\sqrt{d_{b_1}d_{b_2}d_{b_3}}}{D_\cX^{4}\sqrt{d_{x_1}d_{x_2}d_{x_3}}}
\sum_{\substack{
r_1,r_2,r_3,r_4,
\\
y_1,y_2,y_3 
\\
\in \Irr(\cX)
}}
\sqrt{d_{r_1}d_{r_4}}
\tikzmath{
\draw (.8,4) -- (5.2,4);
\foreach \x in {1,5}{
\draw (\x,-.2) -- (\x,4.2);
}
\foreach \y in {0,1,2,3}{
\draw (.8,\y) -- (5.2,\y);
}
\foreach \y in {1,2,3}{
\node at ($ (1.2,4-\y) + (0,.1) $) {$\scriptstyle x_{\y}$};
\node at ($ (1.6,4-\y) + (0,.1) $) {$\scriptstyle y_{\y}$};
\node at ($ (2,4-\y) + (0,.1) $) {$\scriptstyle b_{\y}$};
\node at ($ (3,4-\y) + (0,.15) $) {$\scriptstyle a_{\y}$};
\node at ($ (4,4-\y) + (0,.1) $) {$\scriptstyle b_{\y}$};
\node at ($ (4.4,4-\y) + (0,.1) $) {$\scriptstyle y_{\y}$};
\node at ($ (4.8,4-\y) + (0,.1) $) {$\scriptstyle x_{\y}$};
}
\draw[red, thick] (1.4,1) arc(-180:0:1.6 and .8) (4.6,1);
\draw[red, thick] (1.8,3) to[out=90,in=90] (4.2,3);
\draw[thick, red] (1.4,3) to[out=-90,in=90] (1.8,2);
\draw[thick, red] (1.4,2) to[out=-90,in=90] (1.8,1);
\draw[thick, red] (4.6,2) to[out=-90,in=90] (4.2,1);
\draw[thick, red] (4.6,3) to[out=-90,in=90] (4.2,2);
\filldraw[fill=white, rounded corners=5pt, very thick] (3.2,.7) rectangle (3.8,3.3);
\node at (3.5,2) {$\varphi$};
\node[red] at (1.8,3.5) {$\scriptstyle r_1$};
\node[red] at (1.4,2.5) {$\scriptstyle r_2$};
\node[red] at (1.4,1.5) {$\scriptstyle r_3$};
\node[red] at (1.3,.7) {$\scriptstyle r_4$};
\filldraw[purple] (1.4,1) circle (.05cm);
\filldraw[purple] (4.6,1) circle (.05cm);
\filldraw[orange] (1.8,1) circle (.05cm);
\filldraw[orange] (4.2,1) circle (.05cm);
\filldraw[yellow] (1.4,2) circle (.05cm);
\filldraw[yellow] (4.6,2) circle (.05cm);
\filldraw[green] (1.8,2) circle (.05cm);
\filldraw[green] (4.2,2) circle (.05cm);
\filldraw[blue] (1.4,3) circle (.05cm);
\filldraw[blue] (4.6,3) circle (.05cm);
\filldraw[violet] (1.8,3) circle (.05cm);
\filldraw[violet] (4.2,3) circle (.05cm);
\draw[thick, cyan, snake, knot] (2.5,.5) -- (2.5,3.5);
}
\]
By mimicking the above argument for $\Gamma_\varphi$, we see that this diagram simplifies to
\begin{equation}
\label{eq:GammaTildePhiAction}
\frac{(d_{a_1}d_{a_2}d_{a_3})^{1/4}(d_{b_1}d_{b_2}d_{b_3})^{3/4}}{D_\cX^{4}}
\sum_{\substack{r_4\\\in \Irr(\cX)}}
d_{r_4}
\tikzmath{
\draw (1.3,.3) --node[above, yshift=-.1cm]{$\scriptstyle a_1$}  (-.3,.3) node[above, yshift=-.1cm]{$\scriptstyle b_1$} arc(270:90:.3cm) -- (1.3,.9) arc(90:-90:.3cm);
\draw (1.3,0) --node[above, yshift=-.1cm]{$\scriptstyle a_2$}  (-.3,0) node[above, yshift=-.1cm]{$\scriptstyle b_2$} arc(270:90:.6cm) -- (1.3,1.2) arc(90:-90:.6cm);
\draw (1.3,-.3) --node[above, yshift=-.1cm]{$\scriptstyle a_3$}  (-.3,-.3) node[above, yshift=-.1cm]{$\scriptstyle b_3$} arc(270:90:.9cm) -- (1.3,1.5) arc(90:-90:.9cm);
\foreach \y in{1,2,3}{
\draw ($ (-2,3) - .3*(0,\y) $) --node[above, yshift=-.1cm]{$\scriptstyle x_\y$} ($ (3,3) - .3*(0,\y) $);
}
\node[red] at (-.6,-.6) {$\scriptstyle r_4$};
\draw[thick, red] (-.3,-.8) arc(270:90:1.3cm)  -- (1.3,1.8)  arc(90:-90:1.3cm) -- (-.3,-.8);
\filldraw[fill=white, rounded corners=5pt, very thick] (.7,-.6) rectangle (1.3,.6);
\node at (1,0) {$\varphi$};
\draw[thick, cyan, snake, knot] (0,-.6) -- (0,.6);
}
\end{equation}
Since
\[
\eqref{eq:GammaPhiAction}
=
\left(
\frac{d_{a_1}d_{a_2}d_{a_3}}{d_{b_1}d_{b_2}d_{b_3}}
\right)^{1/2}
\cdot
\eqref{eq:GammaTildePhiAction},
\]
we see that
\begin{align*}
(\Gamma_\varphi \otimes 1)p_\Delta
&=
\left(
\frac{d_{a_1}d_{a_2}d_{a_3}}{d_{b_1}d_{b_2}d_{b_3}}
\right)^{1/2}
(1\otimes \widetilde{\Gamma}_\varphi )p_\Delta
\\&
\underset{\eqref{eq:SigmaPsi}}{=}
(1\otimes\widetilde{\Gamma}_{\sigma^\psi_{i/2}(\varphi)})p_\Delta
=
(1\otimes\Theta(\Gamma_{\sigma^\psi_{i/2}(\varphi)}^\dag))p_\Delta
=
(1\otimes\Theta(\Gamma_{\sigma^\psi_{-i/2}(\varphi^\dag)}))p_\Delta.
\end{align*}
This concludes the proof of \ref{LTO:RP} for the Levin-Wen model.
\hfill\qed

\begin{rem}
\label{rem:thincone}
A similar proof works for arbitrary sufficiently large cone-like regions as in Construction \ref{construction:ConeLikeRegions} such that $\partial[\Lambda]$ is homeomorphic to $\bbR$.
\end{rem}

\begin{rem}
A similar proof as above shows that \ref{LTO:RP} also holds for the Walker-Wang model.
Indeed, one uses the description of the boundary algebra as the braided categorical net associated to a unitary braided fusion category from \cite[Thm.~5.6]{2506.19969}. 
\end{rem}

\subsection{Relating \ref{LTO:RP} to \texorpdfstring{\cite{2510.20662}}{[LZ25]}}
\label{sec:LTOvsRP}

In this section, we discuss how \ref{LTO:RP} compares with the reflection positivity setup of \cite{2510.20662}. 
We assume we are working with a quantum spin system satisfying the following assumptions, which are mostly identical to those used in \cite{2510.20662}.
First, we consider a family of interactions $\Phi(X) \in \fA(X)_+$ for each finite subset $X$.
For a finite subset $R$, we define the local Hamiltonian $H_R \coloneqq \sum_{X \subseteq R} \Phi(X)$. 
Following \cite{2510.20662}, we assume that the interactions satisfy the following conditions, which ensures that the local Hamiltonians are reflection positive:
\begin{itemize}
\item 
for every finite $X \subseteq \bbH_+$, $\Theta(\Phi(X)) = \Phi(\theta(X))$, and
\item 
for every finite $X$ such that $X_+ \coloneqq X \cap \bbH_+ \neq \emptyset$ and $X_- \coloneqq X \cap \bbH_- \neq \emptyset$, there exist linearly independent $x_j \in \fA(X_+)$ such that $\Phi(X) = \sum \Theta(x_j) \otimes x_j$.
\end{itemize}
We assume that the Hamiltonian is \emph{frustration free}, meaning that for every finite region $R$, $\ker H_R = \bigcap_{X \subseteq R} \ker \Phi(X)$ \cite{MR1158756, MR2742836, MR4700364}.
Note that in this case, the projections $p_R$ onto $\ker H_R$ form a net of projections.
Additionally, we assume that the interactions have \emph{finite range}, meaning that there exists $r > 0$ such that for every $X$ with radius at least $r$, $\Phi(X) = 0$. 
Finally, we assume that the net of projections $(p_\Lambda)$ is \emph{extendable}, meaning that for every $R \subset S$, $p_R$ is the range projection of $\Tr_{\cH_{S \setminus R}}(p_S)$. 
Note that this definition of extendability is slightly stronger than the one in \cite[Def.~5.18]{2510.20662}, but it is helpful for relating the approach in \cite{2510.20662} to our LTO approach.\footnote{Specifically, we are using the stronger condition to ensure that for a $\theta$-symmetric region $R$, the range projection of $\Tr_{\cH_{R_-}}(p_R)$ is $p_{R_+}$. 
In \cite{2510.20662}, the range projection of $\Tr_{\cH_{R_-}}(p_R)$ plays the role that $p_{R_+}$ does in the LTO approach.}

Let $R$ be a $\theta$-symmetric region and $R_+ \coloneqq R \cup \bbH_+$. 
Following \cite{MR1816609, 2510.20662}, we define the \emph{interaction algebra} $\cI(R) \subseteq \fA(R_+)$ as follows. 
We write $p_R = \sum x_j^- \otimes x_j^+$, where $x_j^{\pm} \in \fA(\Lambda_{\pm})$ and the $x_j^-$ are all linearly independent. 
Then $\cI(R) \coloneqq \rmC^*\{x_j^+\}$, which is independent of the choice of $x_j^{\pm}$ by \cite[Lem.~7]{MR2890305}. 
The net of boundary algebras computed in \cite{2510.20662} is isomorphic to $\cI(R) p_{R_+}$. 
In particular, if $R \subset S$, then $x \mapsto xp_{S_+}$ defines an injection $\cI(R) p_{R_+} \to \cI(S) p_{S_+}$ by \cite[Prop.~5.15]{2510.20662}, and the operators in this net are localized near the boundary since we are assuming the Hamiltonian has finite-range interactions \cite[Thm.~5.25]{2510.20662}. 
Furthermore, \cite{2510.20662} define a completely positive map $\scrE_R \colon \fA(R_+) \to \cI(R) p_{R_+}$ by 
\[
\scrE_R(x)
\coloneqq
\frac{1}{\Tr(p_R)}
\Tr_{\cH_-} (p_R(\Theta(x^\dag) \otimes 1)).
\]
Now, in the case that $R_+ \Subset S_+$, it is always true that $\cI(R) p_{S_+} \subseteq \fB(R_+ \Subset S_+)$.
However, for the LTO examples computed in \cite{MR4945955}, this inclusion is in fact an equality. 

So let us now assume that our model is also an LTO such that $\cI(R) p_{S_+} = \fB(R_+ \Subset S_+)$.
Recall that for $R_+ \Subset S_+$, we have a completely positive map $\bbE_{S_+} \colon \fA(R_+) \to \fB(R_+ \Subset S_+)$ given by $x \mapsto p_{S_+} x p_{S_+}$. 
However, $\bbE_{S_+}$ differs from $\scrE_S|_{\fA(R_+)}$ even when $\cI(R) p_{S_+} = \fB(R_+ \Subset S_+)$. 
Indeed, $\bbE_{S_+}(1) = p_{S_+}$, which is the unit of $\fB(R_+ \Subset S_+)$. 
On the other hand, \cite{2510.20662} showed that $\scrE_S(1) = \frac{1}{\Tr(p_S)} \Tr_{\cH_{S_-}}(p_S) = \Delta_\psi F$, where $F \in \cZ(\cI(S) p_{S_+})_+$.\footnote{Here, as we indentify $\Delta_\psi$ with the modular operator for the restriction of the state to finite regions, we are further assuming \ref{LTO:FaithfulOnBoundary} as in Corollary \ref{cor:ModularGroupPreservesSubalgebras}.} 
Assuming \ref{LTO:RP}, then we can relate $\scrE_S$ and $\bbE_{S_+}$ explicitly. 
Indeed, we have that 
\begin{align*}
\scrE_\Delta(x)
&=
\frac{1}{\Tr(p_S)} \Tr_{\cH_{S_-}} (p_S (\Theta(x^\dag) \otimes 1))
=
\frac{1}{\Tr(p_S)} \Tr_{\cH_{S_-}} (p_S (\Theta(p_{S_+}x^\dag p_{S_+}) \otimes 1))
\\&\underset{\text{\ref{LTO:RP}}}{=}
\frac{1}{\Tr(p_S)} \Tr_{\cH_{S_-}} (p_S (1 \otimes \sigma^\psi_{i/2}(p_{S_+}x p_{S_+})))
=
\frac{1}{\Tr(p_S)} \Tr_{\cH_{S_-}} (p_S)
\cdot
\sigma^\psi_{i/2}(p_{S_+}x p_{S_+})
\\&=
\Delta_\psi F
\cdot
\sigma^\psi_{i/2}(\bbE_{S_+}(x))
=
\sigma^\psi_{-i/2}(\bbE_{S_+}(x))
\cdot
\Delta_\psi F,
\end{align*}
where the last equality follows because $\sigma_t^\psi(x) = \Delta_\psi^{it} x \Delta_\psi^{-it}$ for $x \in \fB(R_+ \Subset S_+)$. 

Furthermore, \cite{2510.20662} compute an algebra $\cM_\Lambda$ using Osterwalder--Schrader (OS) reconstruction and show that $\cM_R \cong \cI(R) p_{R_+}$, although this isomorphism requires twisting by the modular operator. 
To compute $\cM_R$, \cite{2510.20662} first defines the sesquilinear form
\[
\langle x, y \rangle_R
\coloneqq
\frac{1}{\Tr(p_R)} \Tr(p_R (\Theta(y) \otimes x))
=
\Tr(\scrE_R(y^\dag)x).
\]
on $\fA(R_+)$, and $\scrH_R \coloneqq \fA(R_+)/\ker \langle - , - \rangle_R$.
For $x\in \fA(R_+)$, write $\hat x \in \scrH_R$ for its image under the quotient map.
When $x\in \fA(R_+)$ satisfies $x \ker \langle - , - \rangle_R \subseteq \ker \langle - , - \rangle_R$,
we get an operator $\Phi_R(x) \in B(\scrH_{R})$ given by $\Phi_R(x)\hat y = \widehat{xy}$.
The algebra given by OS reconstruction \cite[Def.~5.7]{2510.20662} is $\cM_R \coloneqq \im \Phi_R$.
For every $x \in \cM_R$, there is a unique $y \in \cI(R)p_{R_+}$ such that $x = \Phi_R(y)$ \cite[Prop.~5.10]{2510.20662}. 
Specifically, for $x = \Phi_R(w)$, we have that $y = p_{R_+} w p_{R_+}$.
However, $x^\dag = \Phi_R(\sigma^\psi_{i/2}(y^\dag))$.
Therefore, there is a $*$-isomorphism $\rho_R \colon \cI(R)p_{R_+} \to \cM_R$ given by $\rho(x) \coloneqq \Phi( \sigma_\psi^{i/4}(x))$ \cite[Thm.~5.11]{2510.20662}. 
In the case of the Levin--Wen model, $\cM_R = \End_\cC(X^n)$, where $n$ is the number of sites in $\partial R_+ \cap \partial \bbH_+$, and for $\varphi \in \End_\cC(X^n)$, $\Gamma_\varphi = \rho_R^{-1}(\varphi)$.

\bibliographystyle{alpha}
\bibliography{bibliography}

\end{document}